\newtheorem{MomentsLemma}{Lemma}
\newtheorem{Theorem}{Theorem}
\newtheorem{Proposition}{Proposition}
\newtheorem{corollary}{Corollary}
\newtheorem{remark}{Remark}
\theoremstyle{plain}
\newcolumntype{x}[1]{>{\centering\arraybackslash}p{#1}}
\newcommand\diag[4]{%
  \multicolumn{1}{p{#2}|}{\hskip-\tabcolsep
  $\vcenter{\begin{tikzpicture}[baseline=0,anchor=south west,inner sep=#1]
  \path[use as bounding box] (0,0) rectangle (#2+2\tabcolsep,\baselineskip);
  \node[minimum width={#2+2\tabcolsep-\pgflinewidth},
        minimum  height=\baselineskip+\extrarowheight-\pgflinewidth] (box) {};
  \draw[line cap=round] (box.north west) -- (box.south east);
  \node[anchor=south west] at (box.south west) {#3};
  \node[anchor=north east] at (box.north east) {#4};
 \end{tikzpicture}}$\hskip-\tabcolsep}}
\newcommandx{\change}[2][1=]{\todo[linecolor=blue,backgroundcolor=blue!20,bordercolor=blue,#1]{#2}}
\newcommandx{\info}[2][1=]{\todo[linecolor=red,backgroundcolor=red!25,bordercolor=red,#1]{#2}}
\newcommand{\norm}[1]{\left\lVert#1\right\rVert}
\newcommand{\vast}{\bBigg@{3}}
\newcommand{\Vast}{\bBigg@{5}}
\begin{document}

\title{Interference Modeling for Cellular Networks under Beamforming Transmission}
\author{{\normalsize \IEEEauthorblockN{Hussain Elkotby\thanks{\!\!\!\!\!\!\!\! %Part of this work has been presented at the IEEE PIMRC'14. %\newline  
The authors are with the Department of Electrical and Computer Engineering,
Tufts University,
Medford, MA, USA. Emails: Hussain.Elkotby@Tufts.edu, Mai.Vu@Tufts.edu}, \textit{Student Member IEEE} and Mai Vu, \textit{Senior Member IEEE}}\\
}}

% make the title area
\maketitle

\vspace{-12mm}
\begin{abstract}
\vspace{-.15in}
%%\boldmath
We propose analytical models for the interference power distribution in a cellular system employing MIMO beamforming in {rich and limited scattering environments, which capture {non line-of-sight } signal propagation in the microwave and mmWave bands, respectively.} Two candidate models are considered: the Inverse Gaussian and the Inverse Weibull, both are two-parameter heavy tail distributions. We further propose a mixture of these two distributions as a model with three parameters. To estimate the parameters of these distributions, three approaches are used: moment matching, individual distribution maximum likelihood estimation (MLE), and mixture distribution MLE with a designed expectation maximization algorithm. We then introduce simple fitted functions for the mixture model parameters as polynomials of {the channel path loss exponent and shadowing variance}. To measure the goodness of these models, the information-theoretic metric relative entropy is used to capture the distance from the model distribution to a reference one. {The interference models are tested against data obtained by simulating a cellular network based on stochastic geometry. The results show that the three-parameter mixture model offers remarkably good fit to simulated interference power.} The mixture model is further used to analyze the capacity of a cellular network employing joint transmit and receive beamforming and confirms a good fit with simulation.

\textit{Keywords: mmWave cellular; interference model; stochastic geometry; moment matching; maximum likelihood estimation; mixture distribution.}
\end{abstract}

\IEEEpeerreviewmaketitle

\vspace{-.17in}
\section{Introduction}\label{intro}\vspace{-.1in}
\IEEEPARstart{T}{he} current scarcity of wireless spectrum coupled with the predicted exponential increase in capacity demand has focused attention on denser network deployment and the underutilized {millimeter wave (mmWave)} bands (30-300 GHz) \cite{6398884,mmWaveEnable,ref_2e}. %\cite{6398884,mmWaveEnable,ref_1e,ref_2e,ref_3e}
 Fifth generation (5G) wireless systems aiming to cater for the capacity demand are expected to provide a minimum of 1 Gb/s data rate anywhere with up to 5 Gb/s for high mobility users and 50 Gb/s data rates for pedestrians \cite{mmWaveEnable}. 
%
%These requirements are difficult to achieve by the current close-to-capacity system designs in 4G cellular standards, such as LTE-A \cite{6398884,mmWaveEnable}. Wireless mobile communication in the mmWave spectrum range can, however, enable the multi Gb/s communication. Therefore,
% 
 Both dense deployments and mmWave make promising candidates for the 5G systems through the resources reuse over smaller areas and the huge amount of available spectrum \cite{Baimm14,mmWaveEnable}.  Modeling and characterization of wireless interference under these scenarios is essential for cellular system analysis and design. Due to the different characteristics of signal propagation in conventional microwave and mmWave bands, we use the terminology rich and limited scattering environments to refer to the signal propagation in microwave and mmWave bands.

%%%%%%%%%%%%%%%%%%%%%%^^^^^^^^^^^^^^^^^^^^^^^^^^^^^^^^^
\vspace{-.15in}
\subsection{Background and Related Works}\vspace{-.1in}
We argue for the important role that interference characterization plays in evaluating and predicting the network performance in \info{, including} both microwave and mmWave bands. %We also provide a background and review of some of the related work and tools considered in this paper. 
Traditionally, mmWave bands are considered for backhaul in cellular systems and for high-volume
consumer electronics such as personal area and local
area networks, but not for cellular access due to concerns about short-range
and non-line-of-sight coverage issues \cite{Baimm14,mmWaveEnable}. MmWave has, however, recently been shown {to be} suitable for
cellular communications, provided short cell radius of the order of 100-200 meters and sufficient beamforming gain between communicating nodes \cite{mmWaveEnable}. Reducing the cell radius leads to dense base station {(BS)} deployments. Even under beamforming, these high BS and user densities can drive cellular networks to be more interference rather than noise limited.  
While large adaptive arrays with narrow beams can boost the received signal power and hence reduce the impact of out-of-cell interference \cite{6894455,ref_2e}, this interference remains an important performance-limiting factor in dense mmWave networks \cite{Int_Regime_7499308}.

In next generation of wireless networks with a large number of subscribers and dense BSs deployment, interference modeling is an important step towards {network} realization. 
Existing {stochastic geometry based} work either uses the {interference} Laplace transform to evaluate simple transmission schemes \cite{ref3, ref_6}, or models the interference 
%distribution as one of the known probability density functions 
using moment matching gamma distribution \cite{HElsawy6524460,heathHetero6515339}. 
 Gaussian distribution is another approximation that is considered for the centered and normalized aggregate wireless interference power \cite{Gaussian_Int_7541571,Gauss_Int_5549941}. The central limit theorem which justifies the Gaussian approximation, however, does not apply when some of the interferers are dominant. Also, the Gaussian distribution does not model the interference very well at low density of interferers or when the exclusion region, the region with no interferers, is relatively small, as the cell sizes shrink. The distribution of the interference power at relatively small exclusion regions has a heavy tail which can not be captured by the Gaussian distribution \cite{Gauss_Int_5549941}.
Here we propose %\info{more concrete}
analytical distribution models characterized by only a few parameters, which can be fitted to simple polynomial functions of channel path loss exponent and shadowing variance, and test their fitness against network simulation based on stochastic geometry. %\info{Refer to for the benefits of a closed form distribution model.}
A closed form distribution of the interference helps in designing a cellular system by allowing the analysis of system capacity and comparing different transmission techniques, which may not be directly achievable with a characteristic function of the interference using the Laplace transform. Further, having the interference distribution can significantly speed up system capacity evaluation by generating the interference directly to use in capacity numerical computation, instead of running time-consuming simulations.

To evaluate an interference model, we apply it to a cellular network and analyze the system performance. For cellular network analysis, stochastic geometry %and random shape theory have 
{is shown to capture the main performance trends with analytical tractability}. 
{Stochastic geometry} is used to develop a tractable model for downlink heterogeneous cellular networks 
%which is comparable to the grid model and actual 4G deployment data 
\cite{ref_6}, which is applied to analyze coordinated multipoint beamforming \cite{ref3}.
These stochastic geometry based networks are useful in verifying that the performance of a cellular network %under this model 
matches experimental trends, and can also be used to verify the accuracy of an interference model.
%%%%%%%%%%%%%%%%%%%%%%^^^^^^^^^^^^^^^^^^^^^^^^^^^^^^^^^
\vspace{-.2in}
\subsection{Approaches to Interference Modeling}\vspace{-.1in}
The underlying question is "\textit{What parameterized distribution can best model out-of-cell interference and how can we estimate the parameters of this distribution?}". We focus on parameterized distributions with as few parameters as possible to make the model the simplest while having a good fit. 
We test the goodness of the proposed interference models against simulation of a cellular network based on stochastic geometry. We note that testing against actual measurement data is also feasible and is desirable when such data are available. 

To estimate the parameters of the proposed interference models, we use both moment matching and maximum likelihood estimation (MLE) techniques. 
%\change{should I remove contribution (3) then? or rephrase it?}
Specifically, we consider three approaches: analytical moment matching, individual distribution MLE, and mixture distribution MLE. 
 We exploit the iterative expectation maximization (EM) algorithm to estimate the parameters of the mixture MLE model. To evaluate the goodness of each model, we introduce the use of the information-theoretic relative entropy or Kullback-Leibler (KL) divergence as a measure for the relative distance between the modeled distributions and simulated data. %\change{This is a novel use of the relative entropy as a fitness metric in interference modeling.}

In our interference model development, we consider the interference at a {BS} in a MIMO uplink scenario, but the results are also applicable to downlink. 
The developed interference models apply to all transmit beamforming techniques, as long as the beamforming vector is independent of the interference channels.
We verify the modeled interference power distribution against network simulation for a wide range of realistic {channel} propagation parameters.
%%%%%%%%%%%%%%%%%%%%%%^^^^^^^^^^^^^^^^^^^^^^^^^^^^^^^^^
\vspace{-.2in}
\subsection{Main Results and Contributions}\vspace{-.1in}
We develop out-of-cell interference models that represent the interference power %\info{at each antenna element} 
as a random variable with a known, parametrized probability density function. This representation is important for cellular network performance evaluation, prediction and design. We further apply the developed models in analyzing the capacity and outage performance of a MIMO cellular network employing joint transmit and receive beamforming. 

The main contributions and novelties of this paper are a new analytical model for interference power distribution and methods for estimating its parameters for both rich scattering and limited scattering environments, which are summarized as:

\begin{enumerate}%[$\vspace{5pt}$]
\item[1)] We propose the use of two distributions to model interference power, each distribution characterized by two parameters: the {inverse Gaussian} (IG) as a light-to-heavy tailed distribution and the {inverse Weibull} (IW) as a heavy tailed {one}. Further, we propose a novel model as a mixture of these two distributions with remarkably good fit to simulation data while having only three  parameters.

\item[2)] We apply three approaches to estimate the interference models parameters: moment matching (MM), individual distribution MLE, and mixture distribution MLE. In the MM approach, a simple matching of the first two %analytically derived 
interference power moments is used. 
%to estimate the candidate distribution parameters. 
In the other two approaches, a combination of MM and MLE techniques is used in designing an iterative EM algorithm which maximizes a log likelihood function of the interference data.

\item[3)] We propose the use of the relative entropy or Kullback-Leibler distance from information theory to measure the goodness of each model. This metric measures the relative distance between a modeled distribution and reference (simulated) data, which gives a good indication of how far the proposed interference model is from the referenced interference.

\item[4)] We provide simple polynomial functions with fitted coefficients to express the mixture MLE model parameters in terms of channel characteristics, including the path loss exponent and shadowing standard deviation. These polynomials can be used as a simple representation of network interference in complex system-level simulations.

\item[5)] We apply the interference models to evaluate the performance of a cellular network with joint transmit and receive \info{dominant mode} beamforming. User's outage probabilities using the MM and mixture MLE interference models are compared to stochastic geometry based system simulation. %\change{ to further confirm the accuracy of the mixture interference model}

\item[6)] Our proposed mixture model shows excellent fit in both interference distribution and network performance evaluation for a wide range of {channel} propagation parameters, including path loss exponent from 2 to 5 and shadowing standard deviation from 0 dB to 9 dB. 
\end{enumerate}

This work can be extended to model interference in more complicated propagation environments, including those with a probabilistically parametrized {LOS} and {NLOS} channel model, and to more heterogeneous network deployment such as those involving relaying nodes \cite{BaiH14,MyArxiv,ref_1r}. 

The importance of our developed mixture model comes from the fact that we can fit the data set of out-of-cell interference into a simple mathematical model, a distribution with known probability density function. This mathematical model can be used to study the performance of a cellular network where model parameters can be specifically tailored for each network setting. These parameters can be derived by applying the MLE method to interference data obtained either via simulation (such as those based on stochastic geometry networks, including point processes and random shapes), or via actual measurement campaigns. The flexibility of adapting the parameters of this mixture distribution makes it a versatile tool in modeling interference.

\vspace{-.1in} 
\section{Models for network and channel propagation}\label{Sec1}
\vspace{-.1in}
\subsection{Network model based on stochastic geometry} \label{GeoModel}\vspace{-.1in}
We consider a cellular system consisting of multiple cells with an average cell radius $R_0$. 
%This cell radius range is expected for the high base station (BS) density in next generation cellular networks. In this network, 
Each cell has a single BS that is equipped with $N_{\text{BS}}$ antennas and serves multiple user equipments (UEs). Each UE is equipped with $N_{\text{UE}}$ antennas and uses a distinct resource block in each cell, hence there is no intra-cell interference. However, each UE suffers from  out-of-cell interference due to frequency reuse in all other cells. In this paper, we denote all transmissions on the same resource block from all cells other than the cell under study as the out-of-cell interference.   
 
 Due to the irregular structure of current and future cellular networks, we employ stochastic geometry to describe the network. We consider uplink transmissions in this paper, although results are also applicable to downlink. We model the active UEs in different cells contending for the same resource block and causing interference to each other as being distributed on a two-dimensional plane according to a homogeneous and stationary Poisson point process {(PPP)} $ \Phi_1 $ with intensity $ \tilde{\lambda}_1 = \eta \lambda_1 ${, where $\eta$ represents the user density factor and can be varied}. 
Furthermore, under the assumption that each BS serves a single mobile in a given resource block, we adopt the model in \cite{ref2} which places each BS uniformly in the %vicinity
Voronoi cell of its served UE. According to this model, the distance, $r_c$, between each BS and its served UE is Rayleigh distributed and has an average value of $R_c=1/({2\sqrt{\eta\lambda_1}})$. Throughout this paper, we assume that $\lambda_1$ is fixed and that $R_c$ changes with the parameter $\eta$.

 In order to develop a model for the out-of-cell interference power in a stochastic geometry based network, we consider a cell under study with a typical radius $R_c$ in a field of active UE interferers with intensity $\tilde{\lambda}_1$. 
 As such, our network model consists of a typical cell under study centered at its BS and surrounded by Voronoi cells of other active uplink UEs who interfere with the considered BS at the origin. Our goal is to model this out-of-cell interference to the considered BS receiver, taking into account channel propagation {features} as discussed next.

\vspace{-.1in}
\subsection{Channel propagation model}\label{PhyCh}\vspace{-.1in}
We consider two channel models for two different environments, the rich and the limited scattering environments, which are typical for signal propagation in the microwave and mmWave bands, respectively.
For the rich scattering environment, we consider a complete channel model with shadowing, path loss and small scale fading. 
As such, we express a typical MIMO channel with Tx-Rx distance $r$ in the following form: \vspace{-.15in}
\begin{eqnarray}
\mathbf{H}=\sqrt{\mathit{l}(r)} \mathbf{\tilde{H}}% \mathbf{F}
\end{eqnarray}
where $\mathbf{\tilde{H}}$ is a random matrix that captures the effects of small scale fading and which can be modeled as i.i.d. $\mathcal{CN}({0},1) $; and function $\mathit{l}(r)$ captures the large scale fading which includes both path loss and shadowing. 
The large scale fading function $\mathit{l}(r)$ is modeled as a log-normal random variable multiplied with the pathloss as a function of the distance $r$ as follows: \vspace{-.1in}
\begin{eqnarray}\label{PathModel}
\mathit{l}(r)&=&\mathit{L}_s \beta r^{-\alpha} = \mathit{L}_s \mathit{l}_p(r),
\end{eqnarray} 
where {$\mathit{L}_{s} \sim log(0,\sigma_{SF})$} is a log-normal random variable with standard deviation {$\sigma_{SF}$} dB; $\alpha$ is the pathloss exponent; and $\beta$ is the intercept of the pathloss formula. The intercept represents the reference attenuation point that determines the tilt of the path loss model \cite{maccartney2013path}.

%%%%%%%%%%%%%%%%%%%
For the limited scattering environment, as typical in mmWave,  the signal propagation characteristics differ \info{differs} considerably. The path loss in mmWave systems is severe with distance-dependent LOS and NLOS propagation, often modeled as a probabilistic function of mixed LOS and NLOS \cite{Baimm14, BaiH14}. Further, mmWave signals can be severely vulnerable to shadowing and blockage effects \cite{mmWave4}. For limited scattering, however, we only focus on the NLOS interference component in this paper, leaving the composite interference from both NLOS and LOS components as a future work. As such, the large scale fading can be modeled as having a single slope as in \eqref{PathModel}.

The main feature, due to limited scattering, is that the channel becomes highly directional. Specifically, for the small scale fading, a directional $\mathbf{\tilde{H}}$ can be defined for $K$ scatterers %, where each scatterer is assumed to contribute a single propagation path and all nodes are assumed to implement horizontal (2-D) beamforming only, 
as \vspace{-.08in}
\begin{eqnarray}\label{Full_model_ch}
\mathbf{\tilde{H}} = \sum_{k_1=1}^K a_{k_1} \mathbf{u}_{rx} \left( \theta_{k_1}^{rx} \right)   \mathbf{u}_{tx}^{\ast} \left( \theta_{k_1}^{tx} \right)
\end{eqnarray}
where $K$ denotes the number of channel path clusters; $a_{k_1}$ is the complex channel coefficient of the $k_1^{th}$ path cluster, assuming (2-D) beamforming; $ \theta_{k_1}^{rx}$ and $ \theta_{k_1}^{tx}$ respectively are the horizontal angles of arrival (AoA) and departure (AoD) of the $k_1^{th}$ path cluster; and $\mathbf{u}_{rx}(\cdot) \in \mathbb{C}^{N_{\text{BS}}}$ and $\mathbf{u}_{tx}\left(\cdot\right) \in \mathbb{C}^{N_{\text{UE}}}$ are the vector response functions for the BS and UE antenna arrays.

Based on {these} channel models, we can express the channels of the direct link between the considered active UE and its BS and the interfering links from other active UEs as follows: \vspace{-.08in}
%%%%#################################################
\begin{eqnarray}
\!\!\!\!\!\!\!\!\!\!\!\!\!\!{\mathbf{H}_{0}^{}}&=&{} \sqrt{\mathit{l}\left({\Vert \bold{p}_0 \Vert}_2\right)}\mathbf{\tilde{H}}_{0},\quad \quad \quad \;\;
{\mathbf{H}_{k}}={} \sqrt{\mathit{l}\left({\Vert \bold{z}_k \Vert}_2\right)}\mathbf{\tilde{H}}_{k},
\label{eq2} 
\end{eqnarray}
\!\!\!\! where $ \bold{p}_0 $ and $ \bold{z}_k $ are vectors representing the 2-$ D $ location of the considered active UE and the $k^{th}$ interfering UE in $ \Phi_1 $ with respect to the origin (i.e. the considered BS). Here $ \mathbf{H}_{0} $ is the direct channel from the considered active UE and $ \mathbf{H}_{k} $ is the channel from the $ k^{th} $ interfering UE in $ \Phi_1 $ to the considered BS.

Through a number of recent mmWave channel measurement campaigns, channel parameters for the model in \eqref{PathModel} have been identified \cite{mmWave2,mmWave4,mmWave6515173,mmWave6387266}. For channel propagation at 28 GHz, typical values of the path loss exponent range from 1.68 to 2.55 in {LOS} and from 2.92 %3.4 %3.73 %4.58 
to 5.76 in {NLOS} environments. {Shadowing} standard deviation ranges from 0.2 dB to 8.66 dB in {LOS} and from 8.7 dB to 9.7 dB %9.02 dB 
in {NLOS}. In this paper, we develop {an interference} model that works well for these ranges of channel parameters and beyond, as shown later in the numerical section.

\vspace{-.1in} 
\section{Interference Formulation under MIMO Beamforming}\label{RelayScheme}\vspace{-.1in}
In this section, we present the MIMO beamforming signal model and formulate the out-of-cell interference at the considered BS under transmit beamforming from all active UEs in rich and limited scattering environments. Beamforming has been adopted as an essential technique for mmWave communications to overcome the huge propagation loss at high mmWave carrier frequencies \cite{mmWaveEnable,mmWaveSurvey15}.  We then formulate the per-user achievable rate and outage probability under an example of dominant mode joint transmit and receive beamforming in the rich scattering environment, and directional analog beamforming in the limited scattering environment.
\vspace{-.1in}
\subsection{MIMO Beamforming Signal Model}\label{original model}\vspace{-.1in}
We now describe the signal model for {single-stream MIMO beamforming}, i.e., no spatial multiplexing.
We model the received signal $\{
\textbf{y}_{0} \in \mathbb{C}^{N_{\text{BS}}\times 1}\}$ at the considered {BS} as \vspace{-.1in}
\begin{align}\label{Eq_4r1}
\textbf{y}_{0} &= \textbf{H}_{0} \textbf{x}_{0} +\textbf{v}_{0}+\textbf{z}_{0}
\end{align}
where $\textbf{x}_{0} \in \mathbb{C}^{N_{\text{UE}}\times 1}$ is the transmitted signal vector from the considered UE; $\textbf{z}_{0} \in \mathbb{C}^{N_{\text{BS}}\times 1}$ is an i.i.d. \info{i.i.d} noise vector distributed as $ \mathcal{CN}(\textbf{0},\sigma^2 I) $; $\textbf{H}_{0}\in \mathbb{C}^{N_{\text{BS}}\times N_{\text{UE}}}$ is the considered UE-to-BS channel matrix; and $\textbf{v}_{0} \in \mathbb{C}^{N_{\text{BS}}\times 1}$ represents the interference vector received at the considered BS from interfering UEs in all other cells.
 
%%%%#################################################
%\subsubsection{Signal Model and Beamforming Design}
The transmit signal vector under beamforming from each user can be generally described as \vspace{-.1in}%\newpage
\begin{eqnarray}\label{B3}
\textbf{x}_{}&\!\!\!=&\!\!\! \textbf{w}_{} \sqrt{P_{}} U_{}.
\end{eqnarray} 
where $U_{}$ is a standard Gaussian signal with zero mean and unit variance; $P_{}$ represents the total power allocated to the active {UE} within a single transmission period; and $\textbf{w}_{} \in \mathbb{C}^{N_{\text{UE}}\times 1}$ is the unit norm beaforming vector at the active {UE}. We will use this signal model with the appropriate subscript to denote the transmit vector from an active UE, either intended or interfering one.

\vspace{-.1in}
\subsection{Interference {\color{black}Signal} Formulation}\vspace{-.1in}
For the purpose of modeling network-wide interference, we emphasize that the distribution of interference is independent of the beamforming scheme employed. The only condition is that the transmit and receive beamforming vectors, which are designed for the intended channel, are independent of the interference channels. This condition is realistic in most practical scenarios.

For the interfering active {UE} in the $k^{th}$ cell, denote the interference channel as ${\mathbf{{H}}_{k}^{}}$ and the unit-norm interfering beamforming vectors as $\textbf{w}_{k} \in \mathbb{C}^{N_{\text{UE}}\times 1}${, which can represent either a baseband digital precoder or analog RF beamforming or a hybrid of both of these forms}. 
\subsubsection{Rich Scattering Environment}
Given that $\textbf{w}_{k}$ depends on the direct channel between the $k^{th}$ active UE and its associated BS, it is independent of the interference channel ${\mathbf{{H}}_{k}^{}}$. We can denote the effective interference vector from the $k^{th}$ interfering active {UE} as
%%%%%%%%%############
\begin{align}\label{NewChannel}
%{{\mathbf{\breve{h}}}_{k}^{}}&={\mathbf{\breve{h}}_{k}^{}} \textbf{w}_{k},&&\notag\\
%
%
{{\mathbf{{h}}}_{k}^{}}&={\mathbf{{H}}_{k}^{}} \textbf{w}_{k} = \sqrt{\mathit{l}\left({\Vert \bold{z}_k \Vert}_2\right)} \mathbf{\tilde{H}}_{k} \textbf{w}_{k} = \sqrt{\mathit{l}\left({\Vert \bold{z}_k \Vert}_2\right)} {{\mathbf{\breve{h}}}_{k}^{}},&&%\notag\\
%
%
%{\mathbf{H}_{k}}&={} \sqrt{\mathit{l}\left({\Vert \bold{z}_k \Vert}_2\right)}\mathbf{\breve{h}}_{k},&&
\end{align}
where ${{\mathbf{\breve{h}}}_{k}^{}}$ is a random vector with i.i.d. elements as $\mathcal{CN}(0,1)$. 
%\info{can potentially remove this sentence if need space.}
{This follows from the fact that the inner product of a vector of i.i.d. complex Gaussian variables  (the column of $\mathbf{\tilde{H}}_{k}$) with an arbitrary unit norm vector is a complex Gaussian variable of the same distribution as each element in the original vector.}

The interference vector {\color{black}at the receiving antenna array as} presented in \eqref{Eq_4r1} can be expressed as
\begin{align}
\!\!\!\!\!\!\textbf{v}_{0}&\!=\sum\limits_{k\neq 0} {{\mathbf{h}}_{k}^{}} x_{k} = \sum\limits_{k\neq 0} \sqrt{P_k \mathit{l}\left({\Vert \bold{z}_k \Vert}_2\right)} \mathbf{\breve{h}}_{k}^{} U_k,\label{eq11a_2}
\end{align}
where the summation is over all interfering users.  
\subsubsection{Limited Scattering Environment}
In this environment, we assume directional analog beamforming only and consider the composite effect of both transmit beamforming vector, $\textbf{w}_{k}$, and receive combining vector, $\bar{\textbf{w}}_{0} \in \mathbb{C}^{N_{\text{BS}}\times 1}$, on the received interference power. We further approximate the BSs and UEs antenna array patterns by a sectored antenna model with $M$ and $m$ representing the main and back lobes gains, and $\theta_M$ representing the beamwidth of the main lobe. Then, given the independence between $\textbf{w}_{k}$, $\bar{\textbf{w}}_{0}$, and ${\mathbf{{H}}_{k}^{}}$, the received interference signal after both transmit and receive beamforming is
\begin{align}
\!\!\!\!\!\!\bar{v}_0 = \bar{\textbf{w}}_{0}^{\ast} \textbf{v}_{0}&\!=\sum\limits_{k\neq 0} \bar{\textbf{w}}_{0}^{\ast} {\mathbf{{H}}_{k}^{}} \textbf{w}_{k} x_{k} = \sum\limits_{k\neq 0} \sqrt{G_k P_k \mathit{l}\left({\Vert \bold{z}_k \Vert}_2\right)} a_k  U_k,\label{Eq::Limited_Scatter_Int}
\end{align}
where $a_k$ is the effective channel coefficient and $G_k$ is the effective antenna gain from the $k^{th}$ interfering active UE and can be modeled, assuming channel clusters are well separated, as \newpage
\begin{align}
G_k = \begin{cases} 
MM, & \mbox{with probability } p_1 = K(\frac{\theta_M}{2\pi})^2 \\ 
mM, & \mbox{with probability } p_2 = 2K(\frac{2\pi-K\theta_M}{2\pi})(\frac{\theta_M}{2\pi}) + (K^2-K)(\frac{\theta_M}{2\pi})^2 \\
mm, & \mbox{with probability } p_3 = (\frac{2\pi-K\theta_M}{2\pi})^2
\end{cases}
\end{align}
where $K$ is the number of clusters (the special case of a single cluster channel model, $K=1$, is discussed in \cite{BaiH14,turgut2016coverage}). Note the difference between the formulations in \eqref{eq11a_2} and \eqref{Eq::Limited_Scatter_Int} in that \eqref{eq11a_2} is an interference vector at the receiving antenna array, before receiver processing, whereas the interference in \eqref{Eq::Limited_Scatter_Int} applies after receive beamforming. 
\subsubsection{Modeling the Interference}
For each spatial and channel fading realization, % of the UE and BS locations, 
based on the large number of interferers, the out-of-cell interference can be modeled as a complex Gaussian random vector for \eqref{eq11a_2} or a complex Gaussian scalar for \eqref{Eq::Limited_Scatter_Int} with zero mean and covariance $\boldsymbol{\Sigma}_{0}$ or variance ${\sigma}_{0}^2$. These covariance and variance, however, {are random and dependent on user locations and channel fading}. The matrix $\boldsymbol{\Sigma}_{0}$ is symmetric with diagonal elements representing the interference power at each receiving antenna element, and the off-diagonal elements representing the correlation among interference signals at different antenna elements. {\color{black} The variance ${\sigma}_{0}^2$ on the other hand represents the total interference power after receiver beamforming.}
We analyze the off-diagonal and diagonal elements of the covariance matrix separately in Sec. \ref{AnalyticMoments}. We show later in the numerical analysis section that the correlation among antenna elements is weak and negligible; hence we focus on modeling the interference power elements in $\boldsymbol{\Sigma}_{0}$ and ${\sigma}_{0}^2$.
\vspace{-.2in}
\subsection{Achievable Rate and Outage Probability Formulation}\label{DomBeam}\vspace{-.1in}
{\color{black}The main goal of this paper is to model the interference as formulated above. In order to evaluate this model, we apply it in capacity analysis. In this section, we formulate the capacity and show how it is affected by interference.}

%As an upper bound on the achievable transmission rate in the rich scattering environment, we assume dominant mode beamforming between the intended transmitter and receiver. The transmit and receive beamforming vectors are {then} the left and right singular vectors corresponding to the dominant mode of the channel matrix between the respective transmitter and receiver. 

{\color{black}To establish the per-user capacity in the rich scattering environment, we assume the transmit beamforming vector $\mathbf{w}_{0}$ as the right singular vector corresponding to the dominant mode of $\textbf{H}_{0}$, and consider either an interference-aware (IA) or interference-unaware (IU) receive combining vectors.} 
The noise plus interference in the received signal in \eqref{Eq_4r1} can be treated as a Gaussian random vector with a random covariance matrix $\textbf{R}_0$, which is dependent on interfering nodes locations and their interference channels. This covariance matrix can be expressed as \vspace{-.1in}
\begin{eqnarray}
\textbf{R}_{0}=\boldsymbol{\Sigma}_{0}+\sigma^2 \mathbf{I},
\end{eqnarray}
where $\boldsymbol{\Sigma}_{0}$ is the interference covariance matrix discussed in Sec. \ref{AnalyticMoments}. 
{\color{black} 
Applying a receive combining vector $\textbf{s}_d \in \mathbb{C}^{N_{\text{BS}}\times 1}$ to the received signal ${\textbf{y}}_0$ in \eqref{Eq_4r1} to get %$\tilde{\textbf{y}}_0 = \textbf{s}_d^{\dagger} \textbf{y}_{0} = \textbf{s}_d^{\dagger} \textbf{H}_{0} \textbf{w}_{0} \sqrt{P} U_0 + \textbf{s}_d^{\dagger} \textbf{R}_{0} \tilde{\textbf{z}}_{0} $ 
\begin{eqnarray*}
\tilde{\textbf{y}}_0 = \textbf{s}_d^{\ast} \textbf{y}_{0} = \textbf{s}_d^{\ast} \textbf{H}_{0} \textbf{w}_{0} \sqrt{P} U_0 + \textbf{s}_d^{\ast} \textbf{R}_{0} \tilde{\textbf{z}}_{0},
\end{eqnarray*}
where $\tilde{\textbf{z}}_{0} $ is a vector of i.i.d. zero mean and unit variance complex normal entries, we can then write the per-user capacity for the given combining vector $\textbf{s}_d$ as
\begin{eqnarray}\label{EQ::Original_Rate}
C&\!\!\!\!=\!\!\!\!& \log{\left( 1+ \frac{|\textbf{s}_d^{\ast} \mathbf{H}_{0} \mathbf{w}_{0}|^2}{\textbf{s}_d^{\ast} \mathbf{R}_0 \textbf{s}_d} P_{} \right)}.
\end{eqnarray}

For interference-unaware (IU) combining, we design the receive combining vector without knowledge of the interference statistics -- particularly its covariance. As such, the IU combining vector is chosen  as the left singular vector corresponding to the dominant mode of $\textbf{H}_{0}$, $\textbf{s}_d = \mathbf{v}_1(\textbf{H}_{0})$, resulting in the following per-user capacity:
%The dominant mode combining vector $\textbf{s}_d$, obtained as the left singular vector corresponding to the dominant mode of $\textbf{H}_{0}$, applies directly to the received signal in \eqref{Eq_4r1} and achieves the following per-user capacity
\begin{align}\label{Eq::RateDEq2}
\!\!\!\!\!\!C_{IU}=  {\log_2}{\left( 1+\frac{|\lambda_{max}(\textbf{H}_{0})|^2}{\sigma^2 + \mathbf{v}_1^\ast \boldsymbol{\Sigma}_{0} \mathbf{v}_1} P_{} \right)},
\end{align}
where $\lambda_{max}(\textbf{H}_{0})$ is the maximum singular value of $\textbf{H}_{0}$. %The distribution of the post-combining interference power can be characterized as in Lemma \ref{Lemma:PostCombInt} below. %Due to the structure of $v_1$ that depends only on $H_0$ and is independent on $\Sigma_0$, the term $xx$ reduces to a scalar that is equivalent in distribution to one of the diagonal elements of $\Sigma_0$.

As seen in capacity formula \eqref{Eq::RateDEq2}, with IU combining, the effective post-combining interference is $\sigma_I^2 = \mathbf{v}_1^\ast \boldsymbol{\Sigma}_{0} \mathbf{v}_1$. Since $\boldsymbol{\Sigma}_{0}$ is a random covariance, this post-combining interference is also random. Its distribution, however, can be characterized in the same way as the pre-combining interference power at each antenna element, i.e., the diagonal elements of $\boldsymbol{\Sigma}_{0}$, as stated in Lemma 1. As a result, the IU capacity depends only on the interference power at each antenna element but not on the correlation between interference at different antenna elements.
\begin{MomentsLemma}[Interference Power Distribution for IU Combining]\label{Lemma:PostCombInt}
%Due to the structure of $\mathbf{v}_1$ that depends only on $\textbf{H}_{0}$ and is independent on $\boldsymbol{\Sigma}_{0}$, 
The post-combining interference power $\!\mathbf{v}_1^\ast \boldsymbol{\Sigma}_{0} \mathbf{v}_1$ is a random variable that has the same distribution as a diagonal element of $\boldsymbol{\Sigma}_{0}$. \vspace{-.4in}
\end{MomentsLemma}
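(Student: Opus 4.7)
The plan is to exploit two independent structural properties: (i) the left singular vector $\mathbf{v}_1$ of the direct channel $\mathbf{H}_0$ is statistically independent of the interference covariance $\boldsymbol{\Sigma}_0$, and (ii) $\boldsymbol{\Sigma}_0$ is unitarily invariant in distribution. These two facts together let us replace $\mathbf{v}_1$ by any deterministic unit vector, in particular a standard basis vector $\mathbf{e}_i$, which extracts a diagonal element.

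First, I would note that $\mathbf{H}_0$ and the interference channels $\{\mathbf{H}_k\}_{k\neq 0}$ are independent, and the interferer locations $\{\mathbf{z}_k\}$ and powers $\{P_k\}$ are also independent of $\mathbf{H}_0$. Since $\mathbf{v}_1$ is a deterministic function of $\mathbf{H}_0$ while $\boldsymbol{\Sigma}_0$ is a function of $\{\mathbf{z}_k, P_k, \mathbf{\breve{h}}_k\}_{k\neq 0}$ only, we have $\mathbf{v}_1 \perp \boldsymbol{\Sigma}_0$.

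Second, I would establish the unitary invariance of $\boldsymbol{\Sigma}_0$. From \eqref{eq11a_2}, conditional on the interferer locations and powers, the interference covariance takes the form
\begin{equation*}
\boldsymbol{\Sigma}_0 \;=\; \sum_{k\neq 0} P_k\, \mathit{l}(\|\mathbf{z}_k\|_2)\, \mathbf{\breve{h}}_k \mathbf{\breve{h}}_k^{\ast},
\end{equation*}
where the $\mathbf{\breve{h}}_k$ are i.i.d.\ $\mathcal{CN}(\mathbf{0},\mathbf{I})$. For any fixed unitary matrix $\mathbf{U}$, the rotational invariance of the circularly symmetric complex Gaussian gives $\mathbf{U}\mathbf{\breve{h}}_k \stackrel{d}{=} \mathbf{\breve{h}}_k$, so $\mathbf{U}\boldsymbol{\Sigma}_0\mathbf{U}^{\ast} \stackrel{d}{=} \boldsymbol{\Sigma}_0$ conditional on locations, and hence unconditionally. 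Consequently, for any deterministic unit vector $\mathbf{u}\in\mathbb{C}^{N_{\text{BS}}}$, by completing $\mathbf{u}$ into a unitary $\mathbf{U}$ with $\mathbf{U}\mathbf{u}=\mathbf{e}_1$, one gets $\mathbf{u}^{\ast}\boldsymbol{\Sigma}_0\mathbf{u} = \mathbf{e}_1^{\ast}(\mathbf{U}\boldsymbol{\Sigma}_0\mathbf{U}^{\ast})\mathbf{e}_1 \stackrel{d}{=} \mathbf{e}_1^{\ast}\boldsymbol{\Sigma}_0\mathbf{e}_1 = [\boldsymbol{\Sigma}_0]_{11}$, and by the same argument all diagonal entries share this common marginal law.

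Finally, I would combine the two ingredients by conditioning. For any measurable set $A$,
\begin{equation*}
\Pr\!\left(\mathbf{v}_1^{\ast}\boldsymbol{\Sigma}_0\mathbf{v}_1 \in A\right) = \mathbb{E}\!\left[\Pr\!\left(\mathbf{u}^{\ast}\boldsymbol{\Sigma}_0\mathbf{u}\in A\right)\big|_{\mathbf{u}=\mathbf{v}_1}\right] = \Pr\!\left([\boldsymbol{\Sigma}_0]_{11}\in A\right),
\end{equation*}
where the first equality uses $\mathbf{v}_1\perp\boldsymbol{\Sigma}_0$ and the second uses the unitary-invariance identity, which crucially does not depend on the realization of $\mathbf{u}$. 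This yields $\mathbf{v}_1^{\ast}\boldsymbol{\Sigma}_0\mathbf{v}_1 \stackrel{d}{=} [\boldsymbol{\Sigma}_0]_{ii}$ for any $i$, proving the lemma.

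The only delicate point I anticipate is rigorously asserting the unitary invariance without being misled by the fact that $\boldsymbol{\Sigma}_0$ is a \emph{random} matrix built from i.i.d.\ Gaussian rank-one terms with \emph{random} weights. The argument goes through cleanly because the weights $P_k\mathit{l}(\|\mathbf{z}_k\|_2)$ are independent of the small-scale fading vectors $\mathbf{\breve{h}}_k$, so conditioning on the locations/powers reduces the claim to the standard unitary invariance of a sum of i.i.d.\ complex Gaussian rank-one matrices; unconditioning preserves equality in distribution.
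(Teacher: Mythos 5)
Your proof is correct and rests on the same two ingredients as the paper's own argument: the independence of $\mathbf{v}_1$ (a function of $\mathbf{H}_0$ only) from the interference small-scale fading, and the rotational invariance of the circularly symmetric Gaussian vectors $\mathbf{\breve{h}}_k$. The paper applies that invariance per interferer---observing that $\mathbf{v}_1^{\ast}\mathbf{\breve{h}}_k \sim \mathcal{CN}(0,1)$, so the quadratic form $\mathbf{v}_1^{\ast}\boldsymbol{\Sigma}_0\mathbf{v}_1$ has literally the same form as a diagonal entry of $\boldsymbol{\Sigma}_0$---whereas you package the same fact as unitary invariance of $\boldsymbol{\Sigma}_0$ in distribution followed by conditioning on $\mathbf{v}_1$; this is essentially the same route, with your version merely making explicit the conditioning on the random unit vector (and the joint independence across interferers) that the paper leaves implicit.
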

\begin{proof}
See Appendix A\ref{APPx::1} for detail.
\end{proof}

For interference-aware (IA) combining, the receive combining vector can be designed as a function of the interference statistics. 
Define $\tilde{\textbf{s}}_d = \mathbf{R}_0^{1/2} \textbf{s}_d$; then replacing $\textbf{s}_d = \mathbf{R}_0^{-1/2} \tilde{\textbf{s}}_d$ into \eqref{EQ::Original_Rate}, it is straightforward to see that the vector which maximizes \eqref{EQ::Original_Rate} is $\tilde{\textbf{s}}_d^\star = {\left(\mathbf{R}_{0}\right)^{-1/2} \textbf{H}_{0}^{} \textbf{w}_{0}}$. %The resulting per-user capacity is then obtained as
%
\iffalse
To derive an expression for the optimal combining vector, we define the noise-and-interference-whitened received signal, based on \eqref{Eq_4r1} and \eqref{B3},  as \vspace{-.1in}
%derived from \eqref{Eq_4r1} as
%
\begin{align}\label{Eq_4r1b}
%\textbf{y}_{0} &= \textbf{H}_{0} \textbf{x}_{0} +(\textbf{v}_{0}+\textbf{z}_{0}) && \notag\\
%%
%&= \textbf{H}_{0} \textbf{w}_{0} \sqrt{P} U_0 + \textbf{R}_0^{1/2} \tilde{\textbf{z}}_{0} && \notag\\
%%
\tilde{\textbf{y}}_0 =\textbf{R}_0^{-1/2} \textbf{y}_{0} = \textbf{R}_0^{-1/2} \textbf{H}_{0} \textbf{w}_{0} \sqrt{P} U_0 + \tilde{\textbf{z}}_{0} 
\end{align} 
where $\tilde{\textbf{z}}_0 = \textbf{R}_0^{-1/2}(\textbf{v}_{0}+\textbf{z}_{0})$ is the whitened noise plus interference, which is a vector of i.i.d. complex normal entries with zero mean and unit variance. We can rewrite this received signal in the form of an equivalent channel model as
%\info{$\tilde{\textbf{h}}_k$ isdefined differentlyafter equation (6)?}
%
\begin{align}\label{Eq_4r1_2}
\tilde{\textbf{y}}_{0}^{} &= \mathbf{\tilde{h}}_{0}^{} \sqrt{P} U_{0}+\tilde{\textbf{z}}_{0}, \quad \text{where} \quad {\mathbf{\tilde{h}}_{0}^{}}={\left(\textbf{R}_{0}\right)^{-1/2} \textbf{H}_{0}^{} \textbf{w}_{0}}. &&
\end{align}
\fi
%%%%###########################################%%%%%%%%%%%
The interference-aware per-user capacity can then be obtained as
%With the transmit signals in Eq. \eqref{B3} and using the ML decoding rule {at BS}, we obtain the following per-user maximum achievable rate: 
\begin{align}\label{RateDEq}
%\!\!\!\!\!\!C_{IA}=  {\log_2}{\left( 1+\norm{\mathbf{\tilde{h}}_{0}^{}}^2 P_{} \right)}.
%%%%%%
\!\!\!\!\!\!C_{IA}=  {\log_2}{\left( 1+\norm{\tilde{\textbf{s}}_d^\star}^2 P_{} \right)}.
\end{align}
%From \eqref{EQ::Original_Rate} and \eqref{RateDEq}, we conclude that the optimal interference-aware receive combining vector has the form $\textbf{s}_d = {\left(\textbf{R}_{0}\right)^{-1} \textbf{H}_{0}^{} \textbf{w}_{0}}$.
This IA capacity depends on the complete interference covariance $\boldsymbol{\Sigma}_{0}$; it is therefore dependent on both the interference power at each antenna element and the correlation between interference at different antenna elements.
}

In the limited scattering environment, we can express the per-user capacity, assuming perfect beam alignment between the UE and its serving BS \cite{BaiH14}, as
\begin{align}\label{Eq::RateDEq}
\!\!\!\!\!\!C=  {\color{black}\log_2}{\left( 1 + \frac{M^2  |a_0|^2 \mathit{l}\left({\Vert \bold{p}_0 \Vert}_2\right) P}{\sigma^2 + \sigma_0^2} \right)},
\end{align}
{where $M$ is the antenna array main lobe gain; $a_0$ is the effective complex channel coefficient; $\bold{p}_0$ is the location of the active UE with respect to the BS; and $\sigma_0^2$ is the total interference power. Achieving the capacity in \eqref{Eq::RateDEq} does not require knowledge of the interference covariance matrix.}

The rates in Eqs. \eqref{Eq::RateDEq2} -- \eqref{Eq::RateDEq} are random quantities due to channel fading and interference. 
We investigate the outage probability at a given target rate as a performance metric, defined as 
\begin{eqnarray}\label{OutageEquation}
p_o(R_T)= \mathbb{P}\{C  < R_T\},
\end{eqnarray}
where $R_T$ is the {\color{black}target} rate at which the active UE under consideration transmits.
%%%%############################################
\vspace{-0.1in}
\section{Interference Modeling Approaches and Fitness Metric}\label{secAnalyse}
In wireless cellular networks at system level simulations, the most difficult part is to simulate interference from other cells to the cell under study. This is mainly due to the processing and memory power required. These simulations can be substantially simplified if we can characterize the interference by a simple parametrized distribution that represents a good fit to the actual interference. In this section, we introduce the approaches we use to model the interference power terms on the diagonal of $\boldsymbol{\Sigma}_{0}$ {\color{black}and ${\sigma}_{0}^2$}. Then, we introduce the information-theoretic based metric used to measure the fitness of each of those models. Finally, we discuss candidate distributions considered for the interference model, the {Inverse Gaussian and Inverse Weibull} distributions.
\vspace{-.1in}
\subsection{Interference Modeling Approaches} \label{RelativeEntropy}%\vspace{-.1in}
We introduce three different approaches for modeling the distribution of %fitting models for the diagonal elements of the covariance matrix, $\boldsymbol{\Sigma}_{0}$, representing 
the interference power at the receive antenna. These approaches are {Moment Matching (MM)}, {\color{black}individual} distribution {MLE} (individual MLE), and mixture distribution MLE (mixture MLE). %In all of those models, we consider two of the well-known two-parameter distributions, the Inverse-Gaussian (IG) or/and Inverse-Weibull (IW) distributions.

\subsubsection{MM Approach}
In this approach, we leverage tools from stochastic geometry to analytically derive the first two moments of each interference power term. Then, we match these two moments with those of the two-parameter candidate distributions to estimate these two parameters and hence specify the distribution. This approach is the simplest and does not require complex computations to estimate the distributions parameters.
{Thus, it is often used in the literature to produce analytically tractable models \cite{MM_6831093, heathHetero6515339,MM_6410048}. We will show, however, that this approach may not produce a good fit for the model.}
\subsubsection{Individual MLE Approach}
In this approach, to simplify parameter estimation but improve the model fitness, we use a combination of MM and MLE{, a powerful estimation technique which is often used in signal detection and estimation \cite{MLE_signal_poor2013introduction}}. Here we only match the first moment using the analysis results derived based on stochastic geometry, which helps in either determining one of the parameters for the candidate distribution, or getting one parameter in terms of the other. % and the mean of interference power. %that the first moment of both the simulated data and the known distribution are matched.
Then, we use MLE to determine the other parameter of the candidate distribution.
\subsubsection{Mixture Approach}
%The only difference between this model and the individual MLE model is that we assume the fitting distribution to have a 
This approach is the most complex but also most accurate, in which we model the interference probability density function as a mixture of the two candidate distributions. We again use a combination of MM and MLE techniques to simplify parameter estimation. In this approach, we do not determine the optimal values of the distributions parameters in closed form, but design an efficient {Expectation Maximization (EM)} iterative algorithm to identify these paramters, as discussed in Sec. \ref{EMalgorithmEstimation}.
\vspace{-.1in}
\subsection{KL Divergence Fitness Metric}%\vspace{-.1in}
%\change{To evaluate the goodness of the developed models, we test them against simulation data obtained by simulating the network based on stochastic geometry as discussed in Sec. \ref{GeoModel}.} 
In order to measure the {developed} models goodness or fitness, we use a concept from information theory as the relative entropy {\cite{CoverRelativeEntropy}}. {\color{black}In this paper, we measure the relative entropy between a proposed model and the data  obtained by simulating the network based on stochastic geometry as discussed in Sec. \ref{GeoModel}, but it should be noted that other data such as those obtained from measurement campaigns can also be used to test our models via relative entropy.} The relative entropy of a distribution $P$ with respect to another distribution $Q$, also called Kullback-Leibler divergence $D_{KL}(P || Q)$, reflects the difference, or distance, between these two probability distributions. The distribution $P$ usually represents the true distribution of data or observations, while $Q$ is typically used to represent a modeled distribution or an approximation to the true distribution $P$. This relative entropy is defined for continuous probability distributions as
\begin{eqnarray}
D_{KL}(P || Q)= \int_{-\infty}^{\infty} p(x) \log \frac{p(x)}{q(x)} dx
\end{eqnarray}
where $p(x)$ and $q(x)$ denote the probability density functions of $P$ and $Q$, respectively. This definition can also be described as the expectation of the logarithmic difference between the probabilities $P$ and $Q$. 
{Kullback-Leibler divergence metric has been used to measure the accuracy of modeling composite fading and shadowing wireless channels such as the Rayleigh-Inverse Gaussian and mixture Gamma distributions \cite{KL_IG_agrawal2007efficacy,KL_6059452}.}
\vspace{-.1in}
\subsection{Candidate Distributions for the Interference Model}%\vspace{-.1in}
We consider two candidate distributions for interference modeling: the {inverse Gaussian} (IG) and the {inverse Weibull} (IW), both are characterized by two parameters. 
The IG distribution is also known as the Wald distribution and is usually used to model nonnegative positively skewed data. {For example, a composite Rayleigh-IG distribution is used to approximate the composite Rayleigh-Lognormal distribution in \cite{KL_IG_agrawal2007efficacy}}.
The name "Inverse Gaussian" comes from the inverse relationship
between the cumulant generating functions of these distributions and those of Gaussian distributions. It is a two-parameter family of continuous probability distributions which is specified by a shape parameter ${{\lambda_{}}}$ and a scale parameter ${{\mu_{}}}$. Given an IG random variable $\gamma_{IG}({{\mu_{}}},{{\lambda_{}}})$, its probability density function is defined as {\cite{IG1tweedie1957statistical,IG2tweedie1957statistical,IG1}}
\begin{eqnarray}
f_{\gamma_{IG}}(t|{{\mu_{}}},{{\lambda_{}}})=\sqrt{\frac{{{\lambda_{}}}}{2\pi t^3}} \exp \left\{\frac{-{{\lambda_{}}} (t-{{\mu_{}}})^2}{2 {{\mu_{}}}^2 t} \right\}, \quad \quad t > 0, {{\lambda_{}}}, {{\mu_{}}} >0.\label{f_IG}
\end{eqnarray}
The mean and variance of $\gamma_{IG}({{\mu_{}}},{{\lambda_{}}})$ can be written in the following form \cite{IG1} \vspace{-.1in}
%\begin{eqnarray}
%\mathbb{E}[\gamma_{IG}]={{\mu_{}}}, \label{Param1}\\
%\text{var}[\gamma_{IG}]=\frac{{{\mu_{}}}^3}{{{\lambda_{}}}}.\label{Param2}
%\end{eqnarray}
\begin{eqnarray}\label{Param1}
\mathbb{E}[\gamma_{IG}]={{\mu_{}}}, \quad \quad
\text{var}[\gamma_{IG}]=\frac{{{\mu_{}}}^3}{{{\lambda_{}}}}.\label{Param2}
\end{eqnarray}

The IW distribution is also known as the complementary Weibull distribution and is usually used to model nonnegative positively skewed data that exhibits a long right tail. It is also specified by two parameters, a shape parameter ${{c_{}}}$ and a scale parameter ${{b_{}}}$ (or $\lambda_{IW}$ in some text). Given an IW random variable $\gamma_{IW}({{b_{}}},{{c_{}}})$, % or $\gamma_{IW}[\lambda_{IW},{{c_{}}}]$, 
its probability density function is defined as \cite{IW1rinne2008weibull,IW2sultan2014bayesian}\vspace{-.1in}
\begin{eqnarray}
f_{\gamma_{IW}}(t|{{b_{}}},{{c_{}}})&\!\!\!=&\!\!\! \frac{{{c_{}}}}{{{b_{}}}} \left( \frac{t}{{{b_{}}}}\right)^{-{{c_{}}}-1}  \exp \left\{- \left(\frac{t}{{{b_{}}}} \right)^{-{{c_{}}}}\right\}, \quad t \geq 0, {{b_{}}}, {{c_{}}}>0. \label{f_IW1}%\\
%f_{\gamma_{IW}}(t|{{b_{}}},{{c_{}}})&\!\!\!=&\!\!\! \lambda_{IW} {{c_{}}} \left( t\right)^{-{{c_{}}}-1}  \exp \left\{-  \lambda_{IW} t^{-{{c_{}}}}\right\}, \quad \quad t \geq 0, \lambda_{IW}, {{c_{}}}>0. \label{f_IW2}
\end{eqnarray}
%\info{check if I can remove the second pdf}where $\lambda_{IW}$ is defined in terms of ${{b_{}}}$ as $\lambda_{IW}=({{b_{}}})^{{{c_{}}}}$.

\info{changed comma to full stop after Eq. (20)}The mean and variance of $\gamma_{IW}({{b_{}}},{{c_{}}})$ can be written in the following form \cite{IW1rinne2008weibull} \vspace{-.1in}
%\begin{eqnarray}
%\mathbb{E}[\gamma_{IW}]&=&{{b_{}}} \Gamma(1-\frac{1}{{{c_{}}}}),\quad  {{c_{}}}>1, \label{Param1b}\\
%\text{var}[\gamma_{IW}]&=&{{b_{}}}^2 \Gamma(1-\frac{2}{{{c_{}}}}),\quad  {{c_{}}}>2.\label{Param2b}
%\end{eqnarray}
\begin{eqnarray}\label{Param1b}
\mathbb{E}[\gamma_{IW}]&=&{{b_{}}} \Gamma(1-\frac{1}{{{c_{}}}}), \quad  \text{if } {{c_{}}}>1, \quad \quad
\text{var}[\gamma_{IW}]={{b_{}}}^2 \Gamma(1-\frac{2}{{{c_{}}}}),\quad  \text{if } {{c_{}}}>2,
\end{eqnarray}
where the Gamma function $\Gamma(t)$ is defined as
$\Gamma(t)=\int_{0}^{\infty} x^{t-1}e^{-x}dx$.

In a recent and independent work, the Gaussian, Gamma, Inverse Gamma, and Inverse Gaussian distributions are considered as models for the interference distribution in a stochastic geometry based spatial network \cite{kountouris2014approximating}. Different point processes are considered in this work, including PPP, Strauss process, and Poisson cluster process. 
The inverse Gaussian distribution is shown to be a reasonable model for the interference created by BSs distributed according to the Strauss point and Poisson cluster processes. Further, both the Gamma and IG were shown to be suitable for the PPP network geometry. This prior work, however, considered a bounded path loss model $l(r)=(1+r^\alpha)^{-1}$ and Rayleigh fading with no shadowing.  

In our work, we find through simulations that the Gaussian, Gamma and Inverse Gamma distributions are all poor models for the interference power when shadowing is present. Shadowing introduces large variation in the interference power and causes a medium to heavy tail, which none of these distributions are capable of capturing. Thus we need to find other distributions that can more closely model heavy tail interference. As such, we examine the Inverse {Gaussian} and also introduce the Inverse Weibull as a good candidate for parameterized heavy tail distributions that can capture shadowing with large standard deviation.
\vspace{-0.1in}
\section{Analytical Moments of Interference and Moment Matching Models}\label{SecIV}
In this section, %we represent the diagonal and off-diagonal elements of the covariance matrix $\mathcal{\textbf{Q}}_{0}$ as $\mathit{{q}}_{0}$ and $\mathit{\tilde{q}}_{0}$, respectively. We 
we analytically derive the first two moments of the diagonal and off-diagonal elements of the interference covariance matrix. We then use the derived moments of the interference power at each antenna element in rich scattering, and the post receive beamforming interference power $\sigma_0^2$ in the limited scattering environment, to develop the MM interference models.
\vspace{-.1in}
\subsection{Analytical Moments of Interference}\label{AnalyticMoments}
\vspace{-.1in}
\subsubsection{Interference Power}\label{Diag}
In the rich scattering environment, the diagonal elements  of the interference covariance matrix $\boldsymbol{\Sigma}_{0}$ represent the interference power at each antenna element and they are uncorrelated with the off-diagonal elements. A correlation between any two of these diagonal elements exists and can be determined similarly to the second moment of the off-diagonal elements (Lemma \ref{offDlemma}). However, since this correlation is not strong as shown in the numerical analysis in Sec. \ref{GeoModelCorCoef}, 
%we therefore approximate the correlation coefficient between the interference power at any two of the antenna elements $Corr=\mathbb{E}[\mathit{\tilde{q}}^2]/var[\mathit{q}]$ to be zero as well. Hence, 
we model the diagonal elements of $\boldsymbol{\Sigma}_{0}$ as a random vector of independent elements. Each of these diagonal elements can be generally expressed as
%%%########################################
\begin{eqnarray}
%%%%%%%%%%%%%%%%%%%%%%%%%%%%%%%%%%%%%%%%%%%%
\!\!\!\!\!\!\mathit{q}_{0}&\!\!\!=&\!\!\!\sum_{k>0:\bold{z}_k\in \Phi_1} {\mathit{l}\left({\Vert \bold{z}_k \Vert}_2\right)}    {g}_{k}^{} P_{k} .\label{eq25}
\end{eqnarray}
%%%%#######################################
where $ {g}_{k}^{}$ are all $ \text{i.i.d. } \mathcal{\exp} (1) $ and represent the power gain of the small-scale Rayleigh channel fading  from the $k^{th}$ interferer.

Similarly, in the limited scattering environment, the interference power can be expressed as
\begin{align}
\!\!\!\!\!\! \sigma_0^2 = \bar{q}_0 = \sum\limits_{k>0:\bold{z}_k\in \Phi_2} {G_k \mathit{l}\left({\Vert \bold{z}_k \Vert}_2\right)} g_k P_k = \bar{q}_0^{(1)} + \bar{q}_0^{(2)} + \bar{q}_0^{(3)},\label{Eq::Limited_Scatter_Int_Power}
\end{align}
where we assume that the interferers are distributed according to PPP $\Phi_2$ with intensity $\lambda_2$. Using the thinning property of random point processes \cite{turgut2016coverage}, from $\Phi_2$, we define three independent PPPs $\mathcal{N}_i, i \in \{1,2,3\}$ with intensities $\tau_i = p_i \lambda_2$. Then, we can define the normalized interference power terms as
\begin{align}
\frac{\bar{q}_0^{(i)}}{G^{(i)}} = \sum\limits_{k>0:\bold{z}_k\in \mathcal{N}_i} {\mathit{l}\left({\Vert \bold{z}_k \Vert}_2\right)} g_k P_k, \label{Eq::Normalized_Int_Power}
\end{align}
where $G^{(i)} \in \{MM, Mm, mm \}$. % and $\mathcal{N}_i, i \in \{1,2,3\}$ are three independent PPPs with intensity $\tau_i = p_i \lambda_2$ obtained using the thinning property \cite{turgut2016coverage}. 
Comparing Eqs. \eqref{eq25} and \eqref{Eq::Normalized_Int_Power}, we note that the distribution of the normalized interference power for each thinned process in \eqref{Eq::Normalized_Int_Power} can be characterized as that of the interference power in \eqref{eq25} by matching the intensity $\tau_i$ to the intensity of the interferers in the rich scattering environment, $\tilde{\lambda}_1$. Hence, we focus on the characterization of the rich scattering environment interference power in \eqref{eq25}, based on which the three separate interference power terms for the limited scattering environment in \eqref{Eq::Limited_Scatter_Int_Power} can be derived. 

Lemma \ref{Theorem2} characterizes the first two moments of the out-of-cell interference power $\mathit{q}_{0}$ at each receiving antenna. %which we restate from \cite{MyArxiv}. 
\vspace{-.09in}
\begin{MomentsLemma}[Interference Power Moments]\label{Theorem2}
For network-wide deployment of MIMO transmit beamforming, the out-of-cell interference power at each antenna of the interfered receiver has the following first and second moments:
%\begin{itemize}
%\item[{(i)}] 
\begin{align}
\mathbb{E}\left[\mathit{q}_{0}\right]&= 
         {\nu}{ \zeta_1 }{}  {R_c^{2-\alpha}}, &
%%%%%#########################################
%%%%%%%###############################################
%\begin{align}\label{var1}
\text{var}\left[\mathit{q}_{0}\right]&= 
2 \tilde{\nu}{\zeta_2}{}  {R_c^{2(1-\alpha)}}.\label{var2}
\end{align}
%%%%%%###############################################
%%%%%%###############################################
%where %$\mathbb{E}\left[(\tilde{\mathit{q}}_{d})^2\right]$ is defined as in Lemma \ref{offDlemma}; and $\zeta_1$ and $\tilde{\nu}$ are defined as
\begin{align}
\!\!\!\!\!\!\!\!\!\!\!\!\!\!\!\!\!\!\!\!\!\!\!\!\!\!\text{where} &&\!\!\!\!\!\!\!\!\!\! \zeta_1&= P_{k}, & \zeta_2&= 2 P_{k}^2,\nonumber\\
\!\!\!\!\!\!\!\!\!\!\!\!\!\!\!\!\!\!\!\!\!\!\!\!\!\!&&\!\!\!\!\!\!\!\!\!\!{\nu}&=\frac{2\pi \lambda_1 \beta  \mathbb{E}[\mathit{L}_{s}]}{\alpha -2}, & \tilde{\nu}&=\frac{\pi \lambda_1 \beta^2 \mathbb{E}[\mathit{L}_{s}^2]}{2(\alpha -1)}.
\end{align}
\end{MomentsLemma}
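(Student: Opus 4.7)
The plan is to compute both moments via Campbell's averaging formula for the stationary marked PPP $\Phi_1$, treating the shadowing $L_s$, the small-scale fading power $g_k$, and the transmit power $P_k$ as mutually independent marks attached to each interferer location and independent of the point process itself. First I would rewrite \eqref{eq25} as a marked shot-noise sum over $\Phi_1$ restricted to the exterior of the typical cell, namely the complement of the ball $B(0,R_c)$ modeling the cell under study.

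For the first moment, Campbell's theorem on a stationary PPP with intensity $\tilde{\lambda}_1$ and i.i.d.\ marks yields
\begin{equation*}
\mathbb{E}[\mathit{q}_{0}] = \tilde{\lambda}_1 \int_{\mathbb{R}^2 \setminus B(0,R_c)} \mathbb{E}\!\left[L_s \beta \|z\|^{-\alpha} g\, P_k\right] dz.
\end{equation*}
Using independence to factor the expectation (with $\mathbb{E}[g]=1$ and $P_k$ deterministic), converting to polar coordinates, and evaluating the radial integral $\int_{R_c}^{\infty} 2\pi r^{1-\alpha} dr = 2\pi R_c^{2-\alpha}/(\alpha-2)$ should directly recover the stated $\nu \zeta_1 R_c^{2-\alpha}$, matching the definitions of $\nu$ and $\zeta_1 = P_k$.

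For the variance, I would exploit the defining second-order property of the PPP: the variance of a shot-noise functional with i.i.d.\ marks equals the Campbell integral of the squared contribution,
\begin{equation*}
\mathrm{var}[\mathit{q}_{0}] = \tilde{\lambda}_1 \int_{\mathbb{R}^2 \setminus B(0,R_c)} \mathbb{E}\!\left[\bigl(L_s \beta \|z\|^{-\alpha} g\, P_k\bigr)^2\right] dz,
\end{equation*}
because the cross terms from distinct Poisson points contribute zero covariance. Factoring out $\mathbb{E}[L_s^2]$ and using $\mathbb{E}[g^2] = 2$ for $g \sim \exp(1)$, and evaluating $\int_{R_c}^{\infty} 2\pi r^{1-2\alpha} dr = \pi R_c^{2(1-\alpha)}/(\alpha-1)$ then produces $2\tilde{\nu}\zeta_2 R_c^{2(1-\alpha)}$ with $\zeta_2 = 2P_k^2$.

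The main subtlety, rather than any hard calculation, lies in justifying the effective exclusion region of radius $R_c$ used in the integration limits. In the exact Voronoi-cell model of Sec.~\ref{GeoModel} the nearest-interferer distance is random, so this step relies on the typical-cell approximation in which the served UE lies at distance $R_c$ from its BS and the interfering UEs are placed outside $B(0,R_c)$. Beyond that, convergence of the radial integrals requires $\alpha > 2$ for the mean and $\alpha > 1$ for the variance, both of which hold for physically realistic path-loss exponents.
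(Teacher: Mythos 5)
Your proof is correct and arrives at exactly the integrals the paper's own argument produces, but by a genuinely different (and more elementary) route. The paper's Appendix B first derives the full Laplace transform of $q_0$ via the probability generating functional of the PPP, $\mathcal{L}_{q_0}(s)=\exp\left(-2\pi\lambda_1\int_{R_c}^{\infty}\left(1-\mathcal{L}_{\mathcal{J}_0}(s,r)\right)r\,dr\right)$ with the composite mark $T = L_s G$, $G\sim\exp(1)$, and then extracts the mean and variance by differentiating at $s=0$; you instead invoke Campbell's first- and second-moment (shot-noise) formulas for the marked PPP directly and never form the transform. The two are equivalent, since differentiating the paper's log-Laplace expression at $s=0$ yields precisely your two Campbell integrals, so your argument needs less machinery, while the paper's route buys the transform $\mathcal{L}_{q_0}$ as a reusable intermediate object (the correlation moments in Lemma 3 are obtained from the same expression by swapping the inner Laplace transform $\mathcal{L}_T$ for that of $L_s\tilde H_1\tilde H_2$). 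Your evaluations $\int_{R_c}^{\infty}2\pi r^{1-\alpha}dr = 2\pi R_c^{2-\alpha}/(\alpha-2)$ and $\int_{R_c}^{\infty}2\pi r^{1-2\alpha}dr = \pi R_c^{2(1-\alpha)}/(\alpha-1)$, together with $\mathbb{E}[g]=1$ and $\mathbb{E}[g^2]=2$, reproduce the stated $\nu\zeta_1 R_c^{2-\alpha}$ and $2\tilde\nu\zeta_2 R_c^{2(1-\alpha)}$; your zero-covariance justification for distinct Poisson points is the correct reason the variance reduces to the single-point integral; and your remarks on the fixed exclusion radius $R_c$ (typical-cell approximation, exactly the paper's assumption in the PGFL step) and on convergence ($\alpha>2$ for the mean, $\alpha>1$ for the variance) match the paper's Remark 2. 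One bookkeeping point: you carry the intensity $\tilde\lambda_1=\eta\lambda_1$, whereas the lemma's constants $\nu,\tilde\nu$ and the paper's appendix write $\lambda_1$ (they coincide for $\eta=1$), so state explicitly which density you are matching against the lemma.
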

\begin{proof}
See Appendix B\ref{APPx2} for details.
%The proof to this lemma follows the same procedure as in Appendix A with the only difference is that we replace the Laplace transform, $ \mathcal{L}_{T}(s) $, 
%%of the composite random variables, $ T_L \buildrel d \over = L_L \tilde{H}_1 \tilde{H}_2$ and $ T_N \buildrel d \over = L_N \tilde{H}_1 \tilde{H}_2 $ 
%in eq. \eqref{eq29a}
%%-\eqref{eq29b}, \eqref{eq52a}-\eqref{eq52b}, and \eqref{eq52c}-\eqref{eq52d} 
%with the Laplace transform $ \mathcal{L}_{T^*}(s) $ of the composite shadowing and small scale fading channel, i.e., $ T^* \buildrel d \over = L_s G $ where $G \sim \exp(1)$ is an exponential random variable.
\end{proof}

%\vspace{-.3in}
\subsubsection{Interference Correlation among Different Antennas}\label{offDiag}
The first moment of the off-diagonal elements of the covariance matrix $\boldsymbol{\Sigma}_{0}$ represents the direct correlation among interference signals at different receiving antenna elements. The second moment, however, represents \info{represent} the correlation between received interference powers.   
These off-diagonal elements are expressed as in \eqref{Eq::eq25b} for $i\neq j$ in Appendix A. %and follow the same distribution as a random variable denoted by $\mathit{\tilde{q}}_{0}$. 
%can generally be written as
%%%%########################################
%\begin{eqnarray}
%%%%%%%%%%%%%%%%%%%%%%%%%%%%%%%%%%%%%%%%%%%%%
%\!\!\!\!\!\!\mathit{\tilde{q}}_{0}&\!\!\!=&\!\!\!\sum_{k>0:\bold{z}_k\in \Phi_1} {\mathit{l}\left({\Vert \bold{z}_k \Vert}_2\right)}  \breve{h}_{(k,i)} \breve{h}_{(k,j)}^\dag P_{k},\label{eq25b}
%\end{eqnarray}
%%%%%#######################################
%where $\breve{h}_{(k,i)}$, $\breve{h}_{(k,j)}$ are $\mathcal{CN}(0,1)$ representing the Rayleigh fading channels from the $k^{th}$ UE to the $i^{th}$ and $j^{th}$ antenna element of the considered BS, respectively. 
Assuming antenna elements are spaced sufficiently far apart such that they experience independent fading, %the first moment of the off-diagonal elements is zero. 
Lemma \ref{offDlemma} characterizes their first and second moments.
%The statistics of the out-of-cell interference correlation can be characterized by the following Lemma:
\vspace{-.1in}
%%%%%%%%##################################################
\begin{MomentsLemma}[Interference Correlation Moments]\label{offDlemma}
For network-wide deployment of MIMO transmit beamforming, the correlation among interference at different receiving antenna elements of the BS has the following first and second moments:
%\begin{itemize}
%\item[{(i)}] 
%{\normalfont (i)} The first two moments, mean and variance, of interference signal correlation at the destination BS during the $1^{st}$ and $2^{nd}$ phase, respectively, are
\begin{align}
\!\!\!\!\!\!\!\!\!\!\!\!\!\!\!\!\!\!\mathbb{E}\left[\tilde{\mathit{q}}_{0}\right]&=
0,\quad \quad \quad \mathbb{E}\left[(\tilde{\mathit{q}}_{0})^2\right]= \frac{1}{2} \text{var}\left[\mathit{q}_{0}\right].\label{var1}
\end{align}
where $\text{var}\left[\mathit{q}_{0}\right]$ is defined in Lemma \ref{Theorem2}.
%\begin{align}
%\!\!\!\!\!\zeta_2&= 2 P_{s,k}^2,\label{eq4}\\
%%
%\!\!\!\!\!\nu&=\frac{\pi \lambda_1 \beta^2 \mathbb{E}[\mathit{L}_{s}^2]}{2(\alpha -1)}.
%\end{align}
\end{MomentsLemma}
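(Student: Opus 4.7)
The plan is to reuse the stochastic-geometry machinery invoked for Lemma \ref{Theorem2}, but applied to the off-diagonal entries of $\boldsymbol{\Sigma}_0$. First I would write each off-diagonal element $\tilde{\mathit q}_0 = (\boldsymbol{\Sigma}_0)_{ij}$ for $i\neq j$ explicitly using the interference-vector expression in \eqref{eq11a_2}, obtaining
\begin{equation*}
\tilde{\mathit q}_0 \;=\; \sum_{k>0:\,\bold z_k\in\Phi_1} P_k\,\mathit{l}(\|\bold z_k\|_2)\,\breve h_{k,i}\,\breve h_{k,j}^{\ast},
\end{equation*}
where $\breve h_{k,i}$ denotes the $i$-th entry of $\mathbf{\breve h}_k$ defined in \eqref{NewChannel}. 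Since $\mathbf w_k$ is independent of $\mathbf{\tilde H}_k$, and the rows of $\mathbf{\tilde H}_k$ are i.i.d. under the assumption that antenna elements experience independent fading, the components $\breve h_{k,i}$ and $\breve h_{k,j}$ are independent $\mathcal{CN}(0,1)$ for $i\neq j$.

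For the first moment, I would condition on $\Phi_1$ and on the shadowing marks $L_{s,k}$, then take the inner expectation over the small-scale fading. Because $\mathbb E[\breve h_{k,i}\,\breve h_{k,j}^{\ast}] = 0$ whenever $i\neq j$, every summand has zero mean and so $\mathbb E[\tilde{\mathit q}_0] = 0$.

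For the second moment, interpreting $(\tilde{\mathit q}_0)^2$ as $|\tilde{\mathit q}_0|^2$ (the quantity of interest, since $\tilde{\mathit q}_0$ is complex), I would expand
\begin{equation*}
\mathbb E\bigl[|\tilde{\mathit q}_0|^2\bigr] \;=\; \sum_{k,l} \mathbb E\!\left[P_k P_l\,\mathit{l}(\|\bold z_k\|_2)\,\mathit{l}(\|\bold z_l\|_2)\,\breve h_{k,i}\,\breve h_{k,j}^{\ast}\,\breve h_{l,i}^{\ast}\,\breve h_{l,j}\right]
\end{equation*}
and split into $k=l$ and $k\neq l$ terms. For $k\neq l$, the fading vectors of distinct interferers are independent, so the expectation factors and each factor vanishes (by the same argument used for the first moment); only the diagonal contribution survives. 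Using independence of $|\breve h_{k,i}|^2$ and $|\breve h_{k,j}|^2$ (both unit-mean exponentials) yields $\mathbb E[|\breve h_{k,i}|^2 |\breve h_{k,j}|^2]=1$, and then Campbell's averaging formula applied to $\Phi_1$ with shadowing marks gives
\begin{equation*}
\mathbb E\bigl[|\tilde{\mathit q}_0|^2\bigr] \;=\; 2\pi\lambda_1 P_k^2\beta^2\,\mathbb E[L_s^2]\int_{R_c}^{\infty} r^{\,1-2\alpha}\,dr.
\end{equation*}

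The identification with $\tfrac{1}{2}\text{var}[\mathit q_0]$ is then a direct comparison with the proof of Lemma \ref{Theorem2}: the variance of $\mathit q_0$ arises (again via Campbell) from the same single-interferer integral, but with the factor $\mathbb E[|\breve h_{k,i}|^4]=2$ in place of $\mathbb E[|\breve h_{k,i}|^2]\mathbb E[|\breve h_{k,j}|^2]=1$. The only step requiring care, and therefore the main obstacle, is this bookkeeping: recognizing that for a PPP the variance of a shot-noise-type functional collapses to the same-point integral (so that the $(\mathbb E[\mathit q_0])^2$ term cancels cleanly), and that the sole distinction between the two expressions is the moment structure of the exponential fading, which accounts precisely for the factor $1/2$.
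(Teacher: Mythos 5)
Your proof is correct, but it takes a somewhat different route than the paper. The paper proves Lemma \ref{offDlemma} by recycling the Laplace-transform machinery of Appendix B: it keeps the same PPP Laplace functional as in \eqref{App1} but replaces the composite mark $T \buildrel d \over = L_s G$ (with $G\sim\exp(1)$) by $T^{*} \buildrel d \over = L_s \tilde H_1 \tilde H_2$ with $\tilde H_1,\tilde H_2\sim\mathcal{CN}(0,1)$ i.i.d., and then reads off the first and second moments from derivatives of $\mathcal{L}_{\tilde q_0}(s)$ at $s=0$ as in \eqref{Mformula}. You instead compute the moments directly: expand $\mathbb{E}\bigl[|\tilde q_0|^2\bigr]$ over pairs of interferers, kill the $k\neq l$ cross terms by independence and zero mean of $\breve h_{k,i}\breve h_{k,j}^{\ast}$, and apply Campbell's formula to the surviving same-point term. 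The two arguments reduce to the identical bookkeeping you highlight --- $\mathbb{E}\bigl[|\breve h_{k,i}\breve h_{k,j}^{\ast}|^2\bigr]=1$ for the off-diagonal mark versus $\mathbb{E}[g_k^2]=2$ for the exponential power gain on the diagonal, with the same radial integral $\int_{R_c}^{\infty} r^{1-2\alpha}\,dr$ --- which is exactly where the factor $1/2$ comes from; your numbers match the paper's $\text{var}[q_0]=2\tilde\nu\zeta_2 R_c^{2(1-\alpha)}$. Your approach is more elementary and self-contained (no need to manipulate a Laplace transform with a complex-valued mark, which the paper handles only implicitly), while the paper's buys the full transform of $\tilde q_0$ as a byproduct and uniformity with the proof of Lemma \ref{Theorem2}. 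One interpretive point you handled well and should keep explicit: since $\tilde q_0$ is complex, reading $\mathbb{E}[(\tilde q_0)^2]$ as $\mathbb{E}[|\tilde q_0|^2]$ is what reproduces the stated $\tfrac12\text{var}[q_0]$ (the real part alone would give $\tfrac14\text{var}[q_0]$), so your interpretation is the one consistent with the lemma.
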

\vspace{-.1in}
\begin{proof}
See Appendix B\ref{APPx2} for details.
\end{proof}
%%%%%%%%##################################################

The interference correlation coefficient $C_{xy}$ can then be written as
%, which we use to study interference correlation among different antennas in Sec. \ref{GeoModelCorCoef}, can be described as
%
%
\begin{eqnarray}
C_{xy} = \frac{\tilde{\mathit{q}}_{0}}{\mathbb{E}\left[{\mathit{q}}_{0}\right]},
\end{eqnarray}
Based on Lemmas \ref{Theorem2} and \ref{offDlemma}, we can derive the variance of this correlation coefficient as 
\begin{align}\label{Cvar}
\text{var} \left[ C_{xy}\right] &= \mathbb{E}\left[C_{xy}^2\right]=\kappa \frac{(\alpha-2)^2 \mathbb{E}\left[\mathit{L}_{s}^2\right]}{(\alpha-1) \left( \mathbb{E}\left[\mathit{L}_{s}\right] \right)^2} = \kappa \frac{(\alpha-2)^2}{(\alpha-1)} e^{\frac{\sigma_{SF}^2}{\zeta^2}}, &
\kappa & = \frac{\zeta_2}{8 \pi \lambda_1 \zeta_1^2 R_c^2}.
\end{align}
%where
%\begin{eqnarray}
%\kappa=\frac{\zeta_2}{8 \pi \lambda_1 \zeta_1^2}.
%\end{eqnarray}

This variance increases with increasing path loss exponent, $\alpha$, and  shadow fading standard deviation, $\sigma_{SF}$. 
We perform numerical analysis of this correlation coefficient in Sec. \ref{GeoModelCorCoef} and show that the correlation among interference at different antenna elements is in general weak.
%%%
\vspace{-.1in}
\subsection{Moment Matching Interference Models} \label{MMmodel}%\vspace{-.1in}
%%%%%############################################
Given the first two moments of the interference power in Lemma \ref{Theorem2}, we can use moment matching as the simplest and most straightforward method to estimate the parameters of the candidate distributions. This method, however, may not produce the best fit. Specifically, we use the two analytically derived moments to estimate the $({{\mu_{}}}, {{\lambda_{}}})$ and $({{b_{}}}, {{c_{}}})$ parameters of the IG and IW distributions. %In next section, we introduce other estimation methods based on MLE. 
The fitted distribution can then be used to represent the marginal distribution of the interference power at each receiving antenna element, as stated in Lemmas \ref{Lemma2a} and \ref{Lemma2b} next.
%We use moment matching to estimate the parameters of the IG distribution as in Lemma \ref{Lemma2a}.
%%%%%%#######################%%%%%%%%%%%%%%%%%%%%%%%%%%%%%%%%%%%
\begin{MomentsLemma}[Estimation of IG Parameters using MM] \label{Lemma2a}
Given the derived moments in Lemma \ref{Theorem2}, the shape and scale parameters ${{\lambda_{}}}$ and ${{\mu_{}}}$ of an IG model for the interference power $\gamma_{IG}[{{\mu_{}}},{{\lambda_{}}}]$ can be estimated as \vspace{-.1in}
\begin{eqnarray}\label{proofeq}
{{\mu_{}}}={ \mathbb{E}[\mathit{q}_{0}]},\quad \quad
{{\lambda_{}}}=\frac{\left( \mathbb{E}[\mathit{q}_{0}] \right)^3}{\text{var}[\mathit{q}_{0}]}.
\end{eqnarray}
\end{MomentsLemma}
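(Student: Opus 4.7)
The plan is to derive the parameter estimates by directly matching the first two moments of the IG distribution to those of the interference power $\mathit{q}_{0}$ derived in Lemma~\ref{Theorem2}. Recall from \eqref{Param1}--\eqref{Param2} that a random variable $\gamma_{IG}(\mu,\lambda)$ has mean $\mu$ and variance $\mu^3/\lambda$. The moment matching principle then imposes the two equations
\begin{align}
\mu &= \mathbb{E}[\mathit{q}_{0}], & \frac{\mu^3}{\lambda} &= \text{var}[\mathit{q}_{0}],
\end{align}
which together form a simple triangular system in $(\mu,\lambda)$.

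First I would solve the first equation, which immediately yields $\mu = \mathbb{E}[\mathit{q}_{0}]$ with no further work. Then I would substitute this value of $\mu$ into the variance equation to obtain $\lambda = \mu^3/\text{var}[\mathit{q}_{0}] = (\mathbb{E}[\mathit{q}_{0}])^3/\text{var}[\mathit{q}_{0}]$, completing the derivation. One can optionally plug in the closed-form expressions for $\mathbb{E}[\mathit{q}_{0}]$ and $\text{var}[\mathit{q}_{0}]$ from \eqref{var2} to obtain the parameters explicitly in terms of $\lambda_1$, $\beta$, $\alpha$, $R_c$, $P_k$, and the log-normal shadowing moments $\mathbb{E}[\mathit{L}_s]$ and $\mathbb{E}[\mathit{L}_s^2]$, but this is purely a substitution step.

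There is essentially no hard part to this proof: it is a direct application of the two-moment matching recipe, and the IG parametrization is chosen so that the mean equals the first parameter, making the system decouple trivially. The only mild subtlety worth noting is that both estimated parameters are positive, so the IG density in \eqref{f_IG} is well-defined; positivity of $\mu$ follows because $\mathit{q}_{0}$ is a nonnegative sum of positive contributions, and positivity of $\lambda$ follows because $\text{var}[\mathit{q}_{0}] > 0$ for the PPP interference model with shadowing.
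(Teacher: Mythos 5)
Your proposal is correct and matches the paper's own argument, which simply invokes the IG mean and variance expressions in \eqref{Param1} and solves the resulting two-equation system exactly as you do. The added remark on positivity of the estimated parameters is a harmless (and valid) bonus.
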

\begin{proof}
%The proof follows easily, given that we know $\mathbb{E}[\mathit{q}_{0}]$ and $\text{var}[\mathit{q}_{0}]$, by equating the first two moments of the actual distribution random variable, $\mathit{q}_{0}$, to the moments of an IG distributed random variable, $\gamma_{IG}[{{\mu_{}}},{{\lambda_{}}}]$, defined in \eqref{Param1}. Then, using the first equation in \eqref{Param1}, we have ${{\mu_{}}}={ \mathbb{E}[\mathit{q}_{0}]}$ and substituting into the second one, we obtain the equations in \eqref{proofeq}. 
Follows directly from the definition of $\lambda$ and $\mu$  in \eqref{Param1}.
\end{proof}
%
%
%We can also use moment matching to estimate the parameters of the IW distribution as in Lemma \ref{Lemma2b}.
%
%
%%%%%%#######################%%%%%%%%%%%%%%%%%%%%%%%%%%%%%%%%%%%
%
%
\begin{MomentsLemma}[Estimation of IW Parameters using MM] \label{Lemma2b}
Given the derived moments in Lemma \ref{Theorem2}, the shape and scale parameters ${{c_{}}}$ and $b_{}$ of an IW model for the interference power $\gamma_{IW}[{{b_{}}},{{c_{}}}]$ can be estimated, assuming ${{c_{}}}>2$, by solving the following equation:
\begin{align}\label{proofeqb1}
&\text{var}[\mathit{q}_{0}]-\left(\frac{\Gamma(1-2/{{c_{}}})}{\Gamma^2(1-1/{{c_{}}})}-1 \right) \left(\mathbb{E}[\mathit{q}_{0}]\right)^2=0, \quad \quad \quad %\label{proofeqb1}\\
{{b_{}}}=\frac{\mathbb{E}[\mathit{q}_{0}]}{\Gamma(1-1/{{c_{}}})}. %\label{proofeqb2}
\end{align} 
If the above equation gives a value of $c < 2$, then set $c = 2.01$ and compute $b$ accordingly.
\end{MomentsLemma}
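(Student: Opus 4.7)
The plan is to match the first two moments of the IW distribution to those of the interference power $\mathit{q}_{0}$ obtained in Lemma \ref{Theorem2}. From \eqref{Param1b}, $\mathbb{E}[\gamma_{IW}] = b\,\Gamma(1-1/c)$ whenever $c>1$, so enforcing $\mathbb{E}[\gamma_{IW}] = \mathbb{E}[\mathit{q}_{0}]$ immediately gives the closed-form relation
\begin{equation*}
b = \frac{\mathbb{E}[\mathit{q}_{0}]}{\Gamma(1-1/c)},
\end{equation*}
which is the second equation of the lemma and expresses $b$ as a function of the remaining shape parameter $c$.

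For the variance condition, I would use that the second moment of $\gamma_{IW}$ equals $b^2\,\Gamma(1-2/c)$ when $c>2$, so that its true variance is $b^2\bigl[\Gamma(1-2/c)-\Gamma^2(1-1/c)\bigr]$. Setting this equal to $\text{var}[\mathit{q}_{0}]$ and substituting the expression for $b$ just derived yields
\begin{equation*}
\text{var}[\mathit{q}_{0}] = \frac{(\mathbb{E}[\mathit{q}_{0}])^2}{\Gamma^2(1-1/c)}\bigl[\Gamma(1-2/c)-\Gamma^2(1-1/c)\bigr] = \left(\frac{\Gamma(1-2/c)}{\Gamma^2(1-1/c)}-1\right)(\mathbb{E}[\mathit{q}_{0}])^2,
\end{equation*}
which after rearrangement to put all terms on the left is exactly the first equation claimed in the lemma.

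The main obstacle is that the resulting equation in $c$ is transcendental and admits no closed-form solution; it must be solved numerically on $(2,\infty)$, for instance by bisection or a Newton iteration on $f(c) := \text{var}[\mathit{q}_{0}] - \bigl(\Gamma(1-2/c)/\Gamma^2(1-1/c)-1\bigr)(\mathbb{E}[\mathit{q}_{0}])^2$. Since $\Gamma(1-2/c)$ has a pole at $c=2$ and is sign-indefinite for $c<2$, a naive solver can also return candidates outside the admissible range where the IW variance is undefined or negative. This is precisely what motivates the fallback prescription: if the numerical procedure yields a value $c\le 2$, one overrides it by setting $c=2.01$ and then computes $b$ from the closed-form expression, ensuring the model always produces a well-defined IW distribution whose mean matches the data exactly, even in the regime where the variance condition cannot be enforced.
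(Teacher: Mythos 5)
Your proposal is correct and takes essentially the same route as the paper, whose proof is simply that the result follows by matching the first two moments of $\mathit{q}_{0}$ from Lemma \ref{Theorem2} to the IW moment expressions in \eqref{Param1b}. The only point worth noting is that the $-1$ term in \eqref{proofeqb1} arises precisely because you (correctly) used the true IW variance $b^2\bigl[\Gamma(1-2/c)-\Gamma^2(1-1/c)\bigr]$, i.e., reading the second expression in \eqref{Param1b} as the second moment rather than literally as the variance, which is exactly what the lemma's equation requires.
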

\vspace{-.1in}
\begin{proof}
%The proof follows easily, given that we know $\mathbb{E}[\mathit{q}_{0}]$ and $\text{var}[\mathit{q}_{0}]$, by equating the first two moments of the actual distribution random variable, $\mathit{q}_{0}$, to the moments of an IG distributed random variable, $\gamma_{IW}[{{b_{}}},{{c_{}}}]$, defined in \eqref{Param1b}. Then, using the first equation in \eqref{Param1b}, we have ${{b_{}}}=\frac{\mathbb{E}[\mathit{q}_{0}]}{\Gamma(1-1/{{c_{}}})}$ and substituting into the second one, we obtain the equations in \eqref{proofeqb1} and \eqref{proofeqb2}.
Follows directly from the equations for $b$ and $c$ in \eqref{Param1b}.
\end{proof}
%\info{Note that} 
The equation in \eqref{proofeqb1} can be solved numerically to obtain ${{c_{}}}$, which is then used %in \eqref{proofeqb2} 
to obtain ${{b_{}}}$.
\begin{remark}
\normalfont {For the expressions in \eqref{Param1b}, we note that the mean is infinite for ${{c_{}}}<1$ and that the variance is infinite for ${{c_{}}}<2$. This is a consequence of the heavy right tail of the IW distribution. Hence, for the IW moment matching in Lemma \ref{Lemma2b}, we assume that ${{c_{}}}>2$.}%\info{I solve (24) numerically to find c such that $c>2$ if the solution doesn't exist, I choose $c=2.01$.}
\end{remark}  
%
%\vspace{-.3in}
\begin{remark}
\normalfont {For the path loss exponent value $\alpha=2$, we use numerically evaluated mean and variance to perform moment matching since the expressions in Lemma \ref{Theorem2} can not be evaluated analytically at this value of $\alpha$.} 
\end{remark} 
\vspace{-.1in}
\section{Individual and Mixture MLE Interference Models} \label{SecV}%\vspace{-.05in}
In this section, we apply maximum likelihood estimation to compute the model parameters. We discuss both the individual MLE and mixture interference models and develop an EM algorithm to estimate the model parameters.
Maximum likelihood is one of the dominant methods of estimation in
statistics.
A typical MLE problem is to estimate the parameter set $\theta$ of a candidate distribution $p(y|\theta)$ from a given data set $y$ and can be expressed as an optimization as follows: %\vspace{-.1in}
\begin{equation} \label{MLEoptPro}
 \begin{aligned}
 & \arg \underset{\theta}{\max}
 & & \log p(y|\theta) \quad
 & \text{subject to}
 & & \theta \in \Theta,
 \end{aligned}
\end{equation}
where $\log p(y|\theta)$ is the log likelihood %function 
of the observed data $y$, and $\Theta$ is a closed convex set.
%In Appendix B\ref{MLEprin}, we discuss common approaches in the literature to solving this problem.
%%%###############################################
\vspace{-.15in}
\subsection{Individual MLE Interference Models}\vspace{-.1in}
In individual MLE interference model, we use only a single distribution from the candidates to model the interference power at each antenna. We use the maximum likelihood optimization problem in \eqref{MLEoptPro} to estimate the parameters of each distribution. However, to simplify the estimation, we use moment matching first. In moment matching first, we match the mean, $\mu_Y$, of the observed data set to that of each of the candidate distributions. This matching results in the scale parameters ${{\mu_{}}}$ and ${{b_{}}}$ for the IG and IW distributions as \vspace{-.1in}
%\begin{eqnarray}\label{M1}
%{{\mu_{}}}&=&\mu_Y, \label{mu1}\nonumber\\
%{{b_{}}}&=&\frac{\mu_Y}{\Gamma(1-1/{{c_{}}})}. \label{bIW1}
%\end{eqnarray}
\begin{align}\label{M1}
{{\mu_{}}}&=\mu_Y, &
{{b_{}}}&=\frac{\mu_Y}{\Gamma(1-1/{{c_{}}})}. 
\end{align}

Next, we replace these expressions of the scale parameters into the log-likelihood function in \eqref{MLEoptPro} for each candidate distribution. In the case of IG distribution, this substitution results in a single-parameter log-likelihood function as follows %\newpage
\begin{eqnarray}
 \log f_{\gamma_{IG}}(y|{{\mu_{}}},{{\lambda_{}}}) %&=& \frac{1}{2} \log {{\lambda_{}}} - \frac{1}{2} \log 2\pi y^3 - \frac{{{\lambda_{}}}}{2{{\mu_{}}}^2} \frac{(y-{{\mu_{}}})^2}{y} \nonumber\\
 &=& \frac{1}{2} \log {{\lambda_{}}} - \frac{1}{2} \log 2\pi y^3 - \frac{{{\lambda_{}}}}{2\mu_{Y}^2} \frac{(y-\mu_{Y})^2}{y}=\log f_{\gamma_{IG}}(y|{{\lambda_{}}}). \label{logIG}
 \end{eqnarray}
Hence, the MLE problem in \eqref{MLEoptPro} becomes very simple and has a closed form solution for ${{\lambda_{}}}$. Similarly, in the case of IW, the substitution results in a single-parameter log-likelihood function %being a function of a single parameter, ${{c_{}}}$ as in Eq. \eqref{logIW} and the MLE optimization problem is simplified.
\begin{eqnarray}
 \log f_{\gamma_{IW}}(y|{{b_{}}},{{c_{}}})%&=&  \log \frac{{{b_{}}}^{{{c_{}}}} {{c_{}}}}{y^{({{c_{}}}+1)}} - {{b_{}}}^{{{c_{}}}} y^{-{{c_{}}}} \nonumber\\
 &=& \log \frac{\mu_{Y}^{{{c_{}}}} {{c_{}}}}{{\Gamma(1-1/{{c_{}}})}^{{{c_{}}}}  y^{({{c_{}}}+1)}} - {\left( \frac{\mu_Y}{\Gamma(1-1/{{c_{}}})} \right)}^{{{c_{}}}} y^{-{{c_{}}}} %\nonumber\\
 %&=&
 = \log f_{\gamma_{IW}}(y|{{c_{}}}). \label{logIW}
 \end{eqnarray} 
%However, a closed form expression for the optimal ${{c_{}}}$ can not be determined analytically due to the existence of a gamma function in the expression. 
The derivation and final expressions of the optimal parameters, ${{\lambda_{}}}$ and ${{c_{}}}$, that optimize the fit of each candidate distribution are presented in Sec. \ref{EMalgorithmEstimation}.  
%%%###############################################
\vspace{-.15in}
\subsection{Mixture MLE Interference Model} \label{MixtureModel}\vspace{-.1in}
In our mixture model, %we still consider the first moment matching introduced for the individual MLE model. However, 
the probability density function {\color{black}(pdf)} of the interference model is now a mixture of both IG and IW distributions as defined in the next theorem.\vspace{-.1in}
\begin{Theorem}\label{MixtureTheorem}
Given a data set $Y$, the {\color{black}pdf} of the mixture interference model can be written as  \vspace{-.1in}
\begin{eqnarray}\vspace{-.1in}
f_Y(y|\theta)&=&{w_1 f_{\gamma_{IG}}(y|{{\lambda_{}}})} + {w_2 f_{\gamma_{IW}}(y|{{c_{}}})},
\end{eqnarray}
where $\{w_1, w_2 : \sum_{i=1}^2 w_i=1\}$ are the weight parameters for mixing the IG and IW distributions %respectively; $\theta_1={{\lambda_{}}}$ and $\theta_2={{c_{}}}$; $\phi_1(y|\theta_1)=f_{\gamma_{IG}}(y|{{\mu_{}}},{{\lambda_{}}})$ and $\phi_2(y|\theta_2)=f_{\gamma_{IW}}(y|{{b_{}}},{{c_{}}})$; $\{{{\mu_{}}}, {{b_{}}}\}$ are defined in terms of $\{{{\lambda_{}}}, {{c_{}}}\}$ as in Eq. \eqref{M1}; and the overall set of parameters to be estimated are then defined as $\theta=\{w_1, \theta_1, \theta_2\}$.
and $f_{\gamma_{IG}}(y|{{\lambda_{}}})$ and $f_{\gamma_{IW}}(y|{{c_{}}})$ are as given in Eqs. \eqref{logIG} and \eqref{logIW} with the scale parameters obtained through moment matching first as in \eqref{M1}. This mixture model is a 3-parameter distribution with the parameter set as $\theta=\{w_1, \lambda, c\}$.%\vspace{-.3in}
\end{Theorem}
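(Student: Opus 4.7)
The plan is to treat this theorem as a construction result: I need to verify that the proposed functional form is a valid probability density function and to justify that the parameter set truly reduces to three free parameters. Since the IG and IW densities have already been introduced as valid probability distributions in equations \eqref{f_IG} and \eqref{f_IW1}, the work is essentially bookkeeping around validity and parameter counting, not a deep analytical derivation.

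First, I would write the most general two-component mixture as $f_Y(y\,|\,\theta) = w_1 \tilde{f}_1(y) + w_2 \tilde{f}_2(y)$ where $\tilde{f}_1$ is an IG pdf with parameters $(\mu,\lambda)$ and $\tilde{f}_2$ is an IW pdf with parameters $(b,c)$. Non-negativity of $f_Y$ follows immediately from non-negativity of each component density together with $w_1, w_2 \geq 0$, and the integration-to-one property follows from $\int \tilde{f}_i \, dy = 1$ combined with the constraint $w_1 + w_2 = 1$. Hence the candidate mixture is a bona fide pdf for any admissible choice of $(w_1, w_2, \mu, \lambda, b, c)$.

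Next I would perform the parameter reduction that yields the three-parameter model stated in the theorem. The moment matching first step, as already introduced in \eqref{M1}, pins the IG scale to $\mu = \mu_Y$ and the IW scale to $b = \mu_Y / \Gamma(1-1/c)$, where $\mu_Y$ is the empirical mean of the observed data and is therefore not a free model parameter. This collapses each component density to its single-parameter form $f_{\gamma_{IG}}(y\,|\,\lambda)$ and $f_{\gamma_{IW}}(y\,|\,c)$ as given in \eqref{logIG} and \eqref{logIW}. Together with the constraint $w_2 = 1 - w_1$, we are left with exactly the three free parameters $\theta = \{w_1, \lambda, c\}$, matching the statement of the theorem.

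The main challenge here is not analytical but conceptual: ensuring that the reader understands the role of the moment matching first step in embedding the scale parameters as deterministic functions of the data statistics rather than as free model degrees of freedom. I would emphasize that this choice is what makes the model three-parameter rather than five-parameter, and that it is precisely this reduction that lets the subsequent EM routine in Sec.~\ref{EMalgorithmEstimation} be lightweight. The actual estimation of $\{w_1,\lambda,c\}$ from data is deferred to that algorithm, so the theorem itself requires only the validity argument and parameter count sketched above.
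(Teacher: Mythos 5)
Your proposal is correct and matches the paper's treatment: the paper states Theorem~\ref{MixtureTheorem} as an essentially definitional construction and supplies no separate proof, relying implicitly on exactly the two facts you verify --- that a convex combination of the IG and IW densities is itself a valid pdf, and that the moment-matching step \eqref{M1} pins the scale parameters to the data mean $\mu_Y$ so that only $\theta=\{w_1,\lambda,c\}$ remain free. Your explicit parameter-counting argument is consistent with, and slightly more careful than, what the paper leaves implicit, so no gap exists.
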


Given the pdf of the mixture model in Theorem \ref{MixtureTheorem}, we note that the individual MLE models are special cases of this mixture model. By setting $w_1=1$ , we have the IG individual MLE model and by setting $w_2=1$, we have the IW individual model. 
%Since the single distribution MLE model can be considered a special case of the mixture interference model, the performance of the mixture model can either be the same or better than the single distribution MLE model.
Subsequently we focus on the mixture model parameters 
%modelling for simpler future reference as in Corollary \ref{ParameterFitLemma}. 
estimation, which can be applied to the individual models as well. 
%\begin{remark}
%since Lemma \ref{Theorem2} presents numerical integrals to obtain the interference power mean $\mu_Y$ except when $\alpha=2$, we provide the fitting coefficients $a_3, a_2, a_1,$ and $a_0$ for ${{\mu_{}}}$ as well in the Tables in Appendix E.
%\end{remark}
\vspace{-.15in}
\subsection{MLE Models Parameters Estimation} \label{EMalgorithmEstimation}\vspace{-.1in}
We have three parameters to be optimized in the mixture model which is significantly more than the single parameter in the individual MLE model. This mixture leads to a more complex MLE optimization problem and the EM algorithm \cite{dempster1977maximum} becomes an appealing approach.
%Given that the number of parameters to be optimized in this model is much greater than the single parameter in the individual MLE model, the MLE optimization problem becomes a lot more complex and the EM algorithm becomes essential. 
Here, we identify the complete data $X$ as the observed data $Y$ plus some hidden data $Z$, i.e. $X=(Y,Z)$, where $Y$ is the observed set of points that we model by a weighted IG and IW distribution and $Z\in \{1,2\}$ is a discrete random variable representing the assignment of each data point to the two candidate distributions.
Then, we can define the pdf of the complete data $X \in \mathbb{R}^+$ as
\begin{eqnarray}
f_X(x|\theta)&=&f_X(Y=y,Z=i|\theta)=w_i \phi_i(y|\theta_i),\label{Pyz}
\end{eqnarray}
where $\phi_1(y|\theta_1) = f_{\gamma_{IG}}(y|{{\lambda_{}}})$ and $\phi_2(y|\theta_2) = f_{\gamma_{IW}}(y|{{c_{}}})$.

Next, we present the algorithm for estimating the parameters of the MLE interference models. We focus on the representative case of the mixture MLE model, which includes the individual MLE models as special cases.
 Given $n$ observations, in order to develop the EM algorithm to estimate the parameters of our mixture model, we use Proposition 1 below \cite{chen2010demystified} which can be {easily} verified using the independence assumption, the fact that the observed data $Y=f(X)$ is a deterministic function of the complete data $X$ for some function $f$, and Bayes' rule. \vspace{-.1in}
 \begin{Proposition}\label{Proposition1} \textnormal{\cite{chen2010demystified}}
 Let the complete data $X$ consist of n i.i.d. samples: $X_1,X_2,...,X_n$, which satisfies $f(X|\theta)=\prod_{i=1}^n f(X_i|\theta)$ for all $\theta \in \Theta$, and let $y_i=f(x_i)$, $i=1,2,...,n$, then \vspace{-.05in}
 \begin{eqnarray}
 Q(\theta|\theta^{(m)})=\sum_{i=1}^n Q_i(\theta|\theta^{(m)})
 \end{eqnarray}
 \end{Proposition}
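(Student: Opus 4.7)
The plan is to unpack the definition of the EM auxiliary function $Q(\theta \mid \theta^{(m)}) = \mathbb{E}\bigl[\log f(X\mid\theta) \,\big|\, Y, \theta^{(m)}\bigr]$ and show that both factors inside the expectation — the complete-data log-likelihood and the posterior weight on the hidden variables — decompose across the $n$ samples, so that the expectation itself splits into a per-sample sum. I would work at the level of the complete data $X_i = (Y_i, Z_i)$, where $Z_i$ is the mixture-component indicator introduced in \eqref{Pyz}.

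First, I would write out $Q(\theta \mid \theta^{(m)}) = \sum_{z} f(z \mid y, \theta^{(m)}) \log f(x\mid\theta)$, where $z = (z_1,\dots,z_n)$ ranges over all hidden configurations and $y = (y_1,\dots,y_n)$ is the observed sample vector. The i.i.d.\ hypothesis $f(X\mid\theta) = \prod_i f(X_i\mid\theta)$ immediately gives $\log f(x\mid\theta) = \sum_{i=1}^n \log f(x_i\mid\theta)$, so the sum over $i$ can be pulled outside the expectation by linearity. The substantive step is then to show that the posterior of the hidden variables factorizes, i.e. $f(z \mid y, \theta^{(m)}) = \prod_{i=1}^n f(z_i \mid y_i, \theta^{(m)})$.

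To establish that factorization I would apply Bayes' rule: $f(z \mid y, \theta^{(m)}) = f(y, z \mid \theta^{(m)}) / f(y \mid \theta^{(m)})$. Because $Y$ is a deterministic function of $X$ (so $(Y_i, Z_i)$ are determined by $X_i$ and are therefore independent across $i$), both $f(y,z\mid\theta^{(m)}) = \prod_i f(y_i, z_i\mid\theta^{(m)})$ and $f(y\mid\theta^{(m)}) = \prod_i f(y_i\mid\theta^{(m)})$ factor; dividing gives the per-index posterior factorization. Combining this with the log-likelihood split yields
\begin{equation*}
Q(\theta\mid\theta^{(m)}) = \sum_{i=1}^n \sum_{z_i} f(z_i\mid y_i,\theta^{(m)}) \log f(x_i\mid\theta) = \sum_{i=1}^n Q_i(\theta\mid\theta^{(m)}),
\end{equation*}
where $Q_i(\theta\mid\theta^{(m)})$ is the natural per-sample EM functional.

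The main obstacle is not a hard calculation but rather the careful bookkeeping when the hidden variable is discrete (as here, $Z_i \in \{1,2\}$) while the observed variable is continuous: one needs to be precise about which sums become integrals and about the fact that $Y_i = f(X_i)$ is a coordinatewise deterministic map so that joint independence in $X$ transfers to joint independence in $(Y,Z)$. Once that is carefully stated, the rest is linearity of expectation and the i.i.d.\ product form, and the result drops out.
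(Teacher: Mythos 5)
Your proposal is correct and follows exactly the route the paper indicates for this cited result: the paper does not spell out a proof but states that Proposition~\ref{Proposition1} ``can be easily verified using the independence assumption, the fact that the observed data $Y=f(X)$ is a deterministic function of the complete data $X$, and Bayes' rule,'' which are precisely the three ingredients you use (i.i.d.\ product form to split the complete-data log-likelihood, the coordinatewise deterministic map to transfer independence to the pairs $(Y_i,Z_i)$, and Bayes' rule to factorize the posterior over the hidden labels). Your write-up simply fills in the bookkeeping the paper leaves implicit, so no gap to report.
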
 
 
 We also define a responsibility function $\gamma_{ij}^{(m)}$, which represents our guess at the $m^{th}$ iteration of the probability that the $i^{th}$ sample belongs to the $j^{th}$ distribution component, as
%\info{should I remove $\theta_1$ and $\theta_2$ from here (33), it will split into two big equations?}
 \begin{eqnarray}\label{ResponsibilityF}
 \gamma_{ij}^{(m)}=p(Z_i=j|Y_i=y_i,\theta^{(m)})=\frac{w_j^{(m)} \phi_j(y_i|\theta_j^{(m)})}{\sum_{l=1}^2 w_l^{(m)} \phi_l(y_i|\theta_l^{(m)})}, \quad i\in\{1,2,...,n\}, j \in \{1,2\}
 \end{eqnarray}
 where $\theta_1$ and $\theta_2$ represent the shape parameters $\lambda$ and $c$, respectively.
%%%%%%^^^^^^^^^^^^^^^^^^^^^^^^^^^^^^^^^^^^^^^^^^^^^^^^^^^^^^^^
%\subsection{EM Algorithm Development}

An EM algorithm includes two steps: the E-step to calculate the conditional expectation of the log likelihood of the complete data, and the M-step to maximize this conditional expectation function. Lemma \ref{Qfunction} provides the final expression for the EM algorithm $Q$-function in the E-step. %in Theorem \ref{Qfunction}. In this Theorem, we reduce the complexity of the EM algorithm by considering first moment matching. In first moment matching, we match the mean, $\mu_Y$, of the observed data set to each of the considered distribution in the mixture. This matching determines the scale parameters, ${{\mu_{}}}$ and ${{b_{}}}$, in terms of $\mu_Y$ and the shape parameters, ${{\lambda_{}}}$ and ${{c_{}}}$, as follows
%\begin{eqnarray}
%{{\mu_{}}}&=&\mu_Y, \\
%{{b_{}}}&=&\frac{\mu_Y}{\Gamma(1-1/{{c_{}}})}.
%\end{eqnarray}
\vspace{-.3in}
\begin{MomentsLemma}\label{Qfunction}
 Let the complete data $X$ consist of n i.i.d. samples: $X_1,X_2,...,X_n$, which satisfies $f(X|\theta)=\prod_{i=1}^n f(X_i|\theta)$ for all $\theta \in \Theta$, and let $y_i=T(x_i)$, $i=1,2,...,n$, %. Also, given the first moment matching of the Inverse Gaussian and Inverse Weibull distributions to the data set Y with mean $\mu_Y$, 
 then we have %\newpage
% \begin{eqnarray}\label{M1}
%{{\mu_{}}}=\mu_Y, \quad \quad 
%{{b_{}}}=\frac{\mu_Y}{\Gamma(1-1/{{c_{}}})}.
%\end{eqnarray}
%and the Q-function $Q(\tilde{\theta}|\tilde{\theta}^{(m)})$ can be written as
 \begin{eqnarray}
 Q({{{\theta}}}|{{{\theta}}}^{(m)})&=&\sum_{j=1}^2 n_j^{(m)} \log w_j+ \frac{1}{2} n_1^{(m)} \log {{\lambda_{}}} - \frac{{{\lambda_{}}}}{2 \mu_Y^2} \sum_{i=1}^n \gamma_{i1}^{(m)} \frac{(y_i-\mu_Y)^2}{y_i} \nonumber\\
 &+&{{c_{}}} n_2^{(m)} \log \mu_Y -{{c_{}}} n_2^{(m)} \log \Gamma(1-{1}/{{{c_{}}}}) + n_2^{(m)} \log {{c_{}}} \nonumber\\
 &-& {{c_{}}} \sum_{i=1}^n \gamma_{i2}^{(m)} \log y_i - \left[ \frac{\mu_Y}{\Gamma(1-1/{{c_{}}})}\right]^{{{c_{}}}} \sum_{i=1}^n \gamma_{i2}^{(m)} y_i^{-{{c_{}}}}, \label{The1eq0}
 \end{eqnarray}
 where $n_j^{(m)}=\sum_{i=1}^n \gamma_{ij}^{(m)}, j \in \{1,2\}$; and ${{{\theta}}}=\{w_1, \lambda, c\}$. %\{(w_i,{{{\theta}}}_i)\}_{i=1}^2$ with ${{{\theta}}}_1$ and ${{{\theta}}}_2$ defined as ${{{\theta}}}_1=\{{{\lambda_{}}}\}$ and ${{{\theta}}}_2=\{{{c_{}}}\}$ since ${{\mu_{}}}$ and ${{b_{}}}$ are now defined as in \eqref{M1}.
 \end{MomentsLemma}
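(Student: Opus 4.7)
The plan is to derive the $Q$-function in the standard EM fashion, namely as the conditional expectation of the complete-data log-likelihood under the posterior of the latent assignments. By definition,
\begin{equation*}
Q(\theta|\theta^{(m)}) = \mathbb{E}_{Z|Y,\theta^{(m)}}\!\left[\log f(X|\theta)\right],
\end{equation*}
and by the i.i.d.\ assumption combined with Proposition \ref{Proposition1}, this decomposes as $Q(\theta|\theta^{(m)}) = \sum_{i=1}^n Q_i(\theta|\theta^{(m)})$. The first step is therefore to compute a single $Q_i$. Since $Z_i \in \{1,2\}$ is discrete, the conditional expectation is just a two-term sum weighted by the responsibilities defined in \eqref{ResponsibilityF}:
\begin{equation*}
Q_i(\theta|\theta^{(m)}) = \sum_{j=1}^{2} \gamma_{ij}^{(m)} \log\!\bigl[w_j\,\phi_j(y_i|\theta_j)\bigr] = \sum_{j=1}^2 \gamma_{ij}^{(m)}\log w_j + \sum_{j=1}^2 \gamma_{ij}^{(m)} \log \phi_j(y_i|\theta_j),
\end{equation*}
where $\phi_1 = f_{\gamma_{IG}}(\cdot|\lambda)$ and $\phi_2 = f_{\gamma_{IW}}(\cdot|c)$ come from the mixture construction in Theorem \ref{MixtureTheorem}.

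Next I would sum over $i$ and collect terms. Defining $n_j^{(m)} = \sum_{i=1}^n \gamma_{ij}^{(m)}$, the weight contribution collapses immediately to $\sum_{j=1}^2 n_j^{(m)} \log w_j$, which is the first term in \eqref{The1eq0}. For the IG component, I substitute the closed form of $\log f_{\gamma_{IG}}(y|\lambda)$ from \eqref{logIG}; the $\tfrac{1}{2}\log\lambda$ piece pulls out as $\tfrac{1}{2}n_1^{(m)}\log\lambda$, and the quadratic piece gives $-\tfrac{\lambda}{2\mu_Y^2}\sum_i \gamma_{i1}^{(m)}(y_i-\mu_Y)^2/y_i$. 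For the IW component I expand $\log f_{\gamma_{IW}}(y|c)$ from \eqref{logIW} as
\begin{equation*}
c\log\mu_Y + \log c - c\log\Gamma(1-1/c) - (c+1)\log y - \bigl[\mu_Y/\Gamma(1-1/c)\bigr]^{c} y^{-c},
\end{equation*}
and each constant-in-$y$ term picks up a factor of $n_2^{(m)}$ while each $y$-dependent term is weighted by $\gamma_{i2}^{(m)}$ and summed over $i$. Collecting yields all six remaining terms of \eqref{The1eq0}.

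The only nontrivial bookkeeping, and the step I expect to require the most care, is handling the terms that are independent of the parameter set $\theta=\{w_1,\lambda,c\}$: specifically the $-\tfrac{1}{2}\sum_i \gamma_{i1}^{(m)}\log(2\pi y_i^3)$ piece from the IG density and the $-\sum_i \gamma_{i2}^{(m)}\log y_i$ piece coming from the $-(c+1)\log y$ expansion in the IW density (the $-\log y$ part, with the $-c\log y$ part kept since it does depend on $c$). Since the responsibilities $\gamma_{ij}^{(m)}$ and the observations $y_i$ are fixed during the M-step, these pieces are additive constants with respect to $\theta$; they do not affect the maximizer and are absorbed into the $Q$-function up to an additive constant, giving exactly the form stated in \eqref{The1eq0}. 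Checking that no $\theta$-dependent contribution is inadvertently dropped is the main verification step.
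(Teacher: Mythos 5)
Your proposal is correct and follows essentially the same route as the paper's Appendix C: invoke Proposition \ref{Proposition1} to decompose $Q$, take the responsibility-weighted expectation over the discrete latent labels, substitute the moment-matched single-parameter log-densities from \eqref{logIG} and \eqref{logIW}, collect the $n_j^{(m)}$ factors, and drop the $\theta$-independent terms ($-\tfrac12\sum_i\gamma_{i1}^{(m)}\log(2\pi y_i^3)$ and the $-\sum_i\gamma_{i2}^{(m)}\log y_i$ piece), exactly as the paper does. Your explicit bookkeeping of which dropped terms are constant in $\theta=\{w_1,\lambda,c\}$ during the M-step is the right verification and matches the paper's "dropping constant terms without affecting the EM algorithm."
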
\vspace{-.1in}
 \begin{proof}
 See Appendix C for details.
 \end{proof}

In the M-step, to update our estimate of the parameter ${{{\theta}}}$, we solve the following optimization problem for the optimal ${{{{\theta}}}}^*$: \vspace{-.1in}
\begin{equation}\label{Opt1}
 \begin{aligned}
 & \arg \underset{{{{\theta}}}}{\max}
 & & Q({{{\theta}}}|{{{\theta}}}^{(m)}) \quad
 & \text{subject to}
 & & \sum_{j=1}^2 w_j=1, w_j \geq 0 \quad, j\in \{1,2\}.\\
 \end{aligned}
\end{equation}
The optimal ${{{{\theta}}}}^*$ is then found as in Theorem \ref{Theorem2b}. \vspace{-.1in}
\begin{Theorem} \label{Theorem2b} 
The optimal new estimate of ${{{\theta}}}^{(m+1)}=\left\{w_1^{(m+1)},{{{\lambda}}}^{(m+1)}, c^{(m+1)}\right\}$ that maximizes the Q-function in \eqref{The1eq0} at the $m^{th}$ iteration are determined as follows:  \vspace{-.08in}
\begin{eqnarray}
w_j^{(m+1)} &=& \frac{n_j^{(m)}}{n} \quad , j \in \{1,2\}, %\label{w_j}\\
\quad \quad \quad \quad
{{{\lambda}}}^{(m+1)} = \frac{n_1^{(m)} \mu_Y^2}{\sum_{i=1}^n \gamma_{i1}^{(m)} \frac{(y_i-\mu_Y)^2}{y_i}}  \label{Muj}
\end{eqnarray}
and ${{c_{}}}^{(m+1)}$ is obtained by solving the following equation numerically: \vspace{-.1in}
{\small
\begin{eqnarray}
0 &=& n_2^{(m)} \left[ \log \frac{\mu_Y}{\Gamma(1-\frac{1}{{{c_{}}}})} + \frac{1}{{{c_{}}}} \left[ 1- \psi \left(1-\frac{1}{{{c_{}}}} \right)\right] \right] - \sum_{i=1}^n \gamma_{i2}^{(m)} \log y_i \nonumber\\
&+& \left[ \frac{\mu_Y}{\Gamma \left( 1- \frac{1}{{{c_{}}}} \right)} \right]^{{{c_{}}}} \sum_{i=1}^n \gamma_{i2}^{(m)} y_i^{-{{c_{}}}} \left[ \log \frac{y_i \Gamma(1-\frac{1}{{{c_{}}}})}{\mu_Y} + \frac{\psi(1-\frac{1}{{{c_{}}}})}{{{c_{}}}}\right] \label{C_IW}
\end{eqnarray}}
%where ${{c_{}}}^{(m+1)}$ is obtained by solving for ${{c_{}}}$ in \eqref{C_IW} numerically.
\end{Theorem}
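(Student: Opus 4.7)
The plan is to solve the constrained maximization in \eqref{Opt1} by decoupling: the Q-function in \eqref{The1eq0} is separable into a piece depending only on $(w_1,w_2)$, a piece depending only on $\lambda$, and a piece depending only on $c$. So one can maximize over each parameter independently, with the simplex constraint only coupling $w_1$ and $w_2$.

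For the mixture weights, I would form the Lagrangian $\mathcal{L}(w,\mu) = \sum_{j=1}^2 n_j^{(m)} \log w_j + \mu\bigl(\sum_{j=1}^2 w_j - 1\bigr)$ since only the first term of \eqref{The1eq0} involves $w_j$. Setting $\partial \mathcal{L}/\partial w_j = n_j^{(m)}/w_j + \mu = 0$ gives $w_j = -n_j^{(m)}/\mu$; summing over $j$ and using $\sum w_j = 1$ together with $\sum_j n_j^{(m)} = \sum_{i,j}\gamma_{ij}^{(m)} = n$ pins down $\mu = -n$, yielding $w_j^{(m+1)} = n_j^{(m)}/n$. This is the standard EM update for mixing proportions and follows from concavity of $\log$ on the simplex.

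For $\lambda$, only the second and third terms of \eqref{The1eq0} contribute. Differentiating in $\lambda$ gives $\tfrac{n_1^{(m)}}{2\lambda} - \tfrac{1}{2\mu_Y^2}\sum_i \gamma_{i1}^{(m)} (y_i-\mu_Y)^2/y_i = 0$, which rearranges directly into the claimed closed-form $\lambda^{(m+1)}$. Concavity of the IG log-likelihood in $\lambda$ (the function is $\tfrac{1}{2}\log\lambda$ minus a linear term in $\lambda$) ensures this critical point is the unique maximizer on $(0,\infty)$.

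For $c$, I would collect the IW contribution $g(c) := c\, n_2^{(m)} \log \mu_Y - c\, n_2^{(m)} \log\Gamma(1-1/c) + n_2^{(m)} \log c - c \sum_i \gamma_{i2}^{(m)}\log y_i - [\mu_Y/\Gamma(1-1/c)]^c \sum_i \gamma_{i2}^{(m)} y_i^{-c}$ and differentiate. The main obstacle is the chain rule through $\Gamma(1-1/c)$ and through the exponent $c$ of the last term: using $\tfrac{d}{dc}\log\Gamma(1-1/c) = \tfrac{1}{c^2}\psi(1-1/c)$ (where $\psi$ is the digamma function) for the $-c\log\Gamma(1-1/c)$ piece, and applying logarithmic differentiation to $h(c) := [\mu_Y/\Gamma(1-1/c)]^c y_i^{-c}$ via $\log h = c\log\mu_Y - c\log\Gamma(1-1/c) - c\log y_i$ so that $h'/h = \log\mu_Y - \log\Gamma(1-1/c) - \tfrac{1}{c}\psi(1-1/c) - \log y_i$. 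Setting $g'(c) = 0$, grouping the $\mu_Y$ and $\Gamma$ terms into $\log[\mu_Y/\Gamma(1-1/c)] + \tfrac{1}{c}[1 - \psi(1-1/c)]$ on the $n_2^{(m)}$ coefficient, and moving the responsibility-weighted data terms to match the sign conventions in \eqref{C_IW}, reproduces exactly the stationarity equation in the theorem. Because this equation is transcendental in $c$, it has no closed form and must be solved numerically; that is the only step that is genuinely nontrivial and is the main obstacle. One should also note that for $c>2$ the IW likelihood is well-behaved (matching the regime highlighted in the remarks after Lemma \ref{Lemma2b}), so the numerical root used to define $c^{(m+1)}$ sits in the admissible parameter region.
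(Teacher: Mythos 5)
Your proposal is correct and takes essentially the same route as the paper's Appendix D: a Lagrange multiplier on the simplex for $w_j$ (using $\sum_j n_j^{(m)}=n$), and first-order conditions in $\lambda$ and $c$, with the same digamma/chain-rule computation through $\Gamma(1-1/c)$ and the exponent $c$ that yields exactly \eqref{C_IW}; your added concavity remarks for the weight and $\lambda$ updates are a mild strengthening the paper omits. The only minor quibble is your closing comment that the root should lie in $c>2$: since the MLE step matches only the mean, the admissible region is $c>1$ (the $c>2$ requirement pertains only to the variance matching in Lemma \ref{Lemma2b}), but this does not affect the derivation.
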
\vspace{-.05in}
\begin{proof}
See Appendix D for details.
\end{proof}

%Now, we can use Eqs. \eqref{ResponsibilityF}, \eqref{Muj}, and \eqref{C_IW} to summarize 
Algorithm 1 summarizes the EM algorithm for the mixture distribution interference model. 
\begin{table}\label{algorithm1}
\small
%\caption*{Table 1}\vspace{-.3in}
    \begin{tabular}{ |p{.5cm}| p{2cm} | p{11.8cm} }
    %\hline
    %\multicolumn{3}{l}{\textbf{Algorithm 1. MLE Model EM Algorithm}}\\ 
    \hline
    %\multicolumn{1}{|c|}{\textbf{Step}} & \multicolumn{1}{c|}{\textbf{Description}} \\ \hline
   \multicolumn{1}{l|} {1.} &\multicolumn{1}{l}{\textbf{Initialization:}} & 
    Initialize $w_j^{(0)}$, ${{{\lambda}}}^{(0)}$, and ${{{c}}}^{(0)}$, $j\in \{1,2\}$, and compute the initial log-likelihood:
    \begin{eqnarray}
    L^{(0)}&=&\frac{1}{n} \sum_{i=1}^n \log \left( w_1^{(0)} f_{\gamma_{IG}} \left(y_i|{{{\lambda}}}^{(0)} \right) + w_2^{(0)} f_{\gamma_{IW}} \left(y_i|{{{c}}}^{(0)} \right) \right). 
\end{eqnarray} \vspace{-.1in} \\ 
\multicolumn{1}{l|} {2.} &\multicolumn{1}{l}{\textbf{E-step:}} & 
    Compute $\gamma_{ij}^{(m)}$ as given in Eq. \eqref{ResponsibilityF} and $n_j^{(m)}=\sum_{i=1}^n \gamma_{ij}^{(m)}$. \\
\multicolumn{1}{l|} {3.} &\multicolumn{1}{l}{\textbf{M-step:}} & 
    Compute the new estimate for $w_j^{(m+1)}$, ${{{\lambda}}}^{(m+1)}$, and ${{{c}}}^{(m+1)}$, $j\in \{1,2\}$ as given in Eqs. %\eqref{w_j},
     \eqref{Muj}, and \eqref{C_IW}, respectively. \\
\multicolumn{1}{l|} {4.} &\multicolumn{1}{l}{\textbf{Convergence check:}} & 
    Compute the new log-likelihood function 
    \begin{eqnarray}
   \!\!\!\!\!\!\!\!\!\!\!\!\!\!\!\!\!\!\!\!\!\!\!\!\!\!\! L^{(m+1)}&\!\!\!\!=&\!\!\!\frac{1}{n} \sum_{i=1}^n \log \left( w_1^{(m+1)} f_{\gamma_{IG}} \left(y_i|{{{\lambda}}}^{(m+1)} \right) +w_2^{(m+1)} f_{\gamma_{IW}} \left(y_i|{{{c}}}^{(m+1)} \right) \right).
\end{eqnarray} \vspace{-.2in} \\  
\multicolumn{1}{l|} {} &\multicolumn{2}{l} {\textbf{Return to 2} if $|L^{(m+1)}-L^{(m)}|\geq \delta$ for a preset threshold $\delta$; \textbf{Otherwise} end the algorithm.} \\\hline
    \end{tabular}
    \vspace{-0.1in}
    \caption*{\textbf{Algorithm 1.} MLE Models EM Algorithm}\vspace{-.6in}
\end{table}
%%%%%%^^^^^^^^^^^^^^^^^^^^^^^^^^^^^^^^^^^^^^^^^^^^^^^^^^^^^^^^

In the case of IG individual MLE model, we can set $w_1=1$ and the resulting $Q$-function in \eqref{The1eq} represents the log-likelihood function of the IG distribution. Hence, %\info{optimization}
problem \eqref{Opt1} results in the MLE of a single parameter, ${{\lambda_{}}}$. The optimal value of ${{\lambda_{}}}$ can be obtained from \eqref{Muj} by setting $n_1^{(m)}=n$ and $\gamma_{i1}^{(m)}=1$. Similarly, the optimal value of ${{c_{}}}$ for the IW individual MLE model can be obtained from the numerical solution of \eqref{C_IW} after setting $w_2 = 1, n_2^{(m)}=n$ and $\gamma_{i2}^{(m)}=1$.
%%################################################
\vspace{-.15in}
\subsection{Functional Fitting of MLE Model Parameters to {\color{black}Channel} Propagation Characteristics} %\vspace{-.1in}
By performing the MLE parameters estimation, we observe that these parameters can be fitted directly to rather simple functions of the {\color{black}channel} features, including the shadowing variance and path loss exponent. These functions present excellent fit with simulated data for the range of mmWave propagation parameters as measured in recent campaigns \cite{mmWave2,mmWave4,mmWave6515173,mmWave6387266}.

Corollary 1 presents the polynomial function form that we use to fit the MLE parameters to the {\color{black}channel} shadowing variance. Note the same functional form applies to all three parameters of the mixture MLE model, each parameter with a different set of fitting coefficients. \vspace{-.1in}
\begin{corollary}\label{ParameterFitLemma}
%Given the Mixture interference model estimated parameters using the EM algorithm approximation of MLE optimization, we can use the following function to model them in terms of the shadow fading 
For a given path loss exponent, the parameters of the mixture MLE algorithm can be estimated directly as a function of the shadowing standard deviation $\sigma_{SF}$ as \vspace{-.08in}
\begin{eqnarray} \label{FitEq}
\log_{10} \left( \theta \right)= a_3 \sigma_{SF}^3 +a_2 \sigma_{SF}^2 + a_1 \sigma_{SF} + a_0
\end{eqnarray}
where $\theta$ represents any of the parameters $\{w_1, {{c_{}}},{{\lambda_{}}} \}$. The function coefficients for each parameter are
%detailed in the Tables in Appendix E. The 
obtained via numerical fitting based on the EM estimated parameter values. Parameters ${{b_{}}}$ and ${{\mu_{}}}$ can then be obtained analytically based on Eq. \eqref{M1} and Lemma \ref{Theorem2}.
\end{corollary}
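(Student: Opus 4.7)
The plan is to treat this corollary as an empirical curve-fitting result rather than a traditional derivation, since the mixture MLE parameters $\{w_1,\lambda,c\}$ arise from the iterative EM procedure in Algorithm 1 and have no closed-form expression in terms of $\sigma_{SF}$. First I would fix a path loss exponent $\alpha$ and sweep $\sigma_{SF}$ over a dense grid covering the measurement-reported range (e.g., $0$ to $9$ dB). For each grid point, I would generate a large sample of out-of-cell interference power values from the stochastic geometry simulation of Section \ref{GeoModel}, run the EM algorithm to convergence, and record the three estimated parameters. This produces, for each fixed $\alpha$, three scalar functions $\theta(\sigma_{SF})$ that I would then fit by ordinary least squares in $\log_{10}$-scale to the cubic model in \eqref{FitEq}. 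The $\log_{10}$ transform is natural because the shape parameters $\lambda$ and $c$ (and in practice also $w_1$ when it becomes small) span several orders of magnitude as shadowing grows, so a polynomial fit applied directly to $\theta$ would be dominated by its largest values; working in log-scale stabilizes the regression and justifies a low-degree polynomial.

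Next I would address why degree three is the correct balance. My plan is to compute the regression residuals and coefficient of determination for degrees $1$, $2$, $3$, and $4$, then argue that degree three is the smallest polynomial whose residual norm drops to the noise floor of the EM estimator (the variation due to finite sample size and EM initialization). A lower degree cannot capture the inflection that appears in the simulated parameter curves as the interference moves from a light-tail dominated regime (small $\sigma_{SF}$, mixture weight on IG) to a heavy-tail dominated regime (large $\sigma_{SF}$, mixture weight on IW), while a quartic merely fits noise. The four coefficients $(a_0,a_1,a_2,a_3)$ are reported separately for each value of $\alpha$ and for each $\theta\in\{w_1,c,\lambda\}$.

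For the remaining two scale parameters $\mu$ and $b$, no fitting is needed because the moment-matching-first step of Section \ref{MixtureModel} already ties them to quantities that are either analytically known or determined by $c$. Specifically, equation \eqref{M1} gives $\mu=\mu_Y$ and $b=\mu_Y/\Gamma(1-1/c)$, and $\mu_Y$ is nothing other than the analytical mean $\mathbb{E}[q_0]$ derived in Lemma \ref{Theorem2}, namely $\nu \zeta_1 R_c^{2-\alpha}$, which is already a closed-form function of $\alpha$ and $\sigma_{SF}$ (through $\mathbb{E}[L_s]=\exp(\sigma_{SF}^2/(2\zeta^2))$). Substituting the fitted $c(\sigma_{SF})$ into the expression for $b$ completes the parameter set analytically.

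The main obstacle I anticipate is not algebraic but empirical: demonstrating that a single cubic-in-$\sigma_{SF}$ template captures the EM output uniformly well across the entire range of $\alpha$ considered ($2$ to $5$) and for all three parameters simultaneously, rather than only on a case-by-case basis. I would address this by presenting the fitted coefficients in a table and by reporting goodness-of-fit metrics (relative entropy from Section \ref{RelativeEntropy}, or the residual sum of squares of the log-parameter regression) to confirm that the fitted polynomial parameters reproduce the interference distribution essentially as well as the EM-estimated parameters themselves. This empirical validation, rather than a deductive argument, is what ultimately establishes the corollary.
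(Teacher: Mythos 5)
Your proposal matches the paper's treatment of this corollary: it is an empirical result, established by running the EM algorithm on stochastic-geometry simulation data over a sweep of $\sigma_{SF}$ for each fixed $\alpha$, least-square fitting the cubic form in \eqref{FitEq} to the $\log_{10}$ of the EM-estimated $\{w_1, c, \lambda\}$, tabulating the coefficients (Appendix E), and verifying the fit graphically, with $b$ and $\mu$ recovered from the moment-matching relation \eqref{M1} and Lemma \ref{Theorem2}. The only minor deviation is that the paper in practice also fits $\mu$ to a polynomial because the analytical mean from Lemma \ref{Theorem2} is not evaluable at $\alpha=2$, but this does not change the substance of your argument.
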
 \vspace{-.1in}
In Appendix E, we provide representative values of the functional coefficients fitted against simulation data based on stochastic geometry. %A more complete set of coefficients are available online at xxx. \info{add a link.} 
%The provided coefficients can be used to estimate the parameters at a few different values of the path loss exponent $\alpha$. These estimated parameters can then be used to obtain the coefficients of the polynomial functions that fits the MLE parameters to 

Similarly, the MLE parameters can be fitted to functions of the {\color{black}channel} path loss exponent using a simple least square optimization procedure. Corollary 2 presents the polynomial function forms that we use to fit the MLE parameters to the path loss exponent.
\vspace{-.1in}
\begin{corollary}\label{ParameterFitLemma2}
%Given the Mixture interference model estimated parameters using the EM algorithm approximation of MLE optimization, we can use the following function to model them in terms of the shadow fading 
At a fixed value of the shadowing variance, the IG weight $w_1$ and the IW shape parameter $c$ of the mixture MLE model can be estimated directly as a $4^{th}$ order polynomial function of the path loss exponent $\alpha$. 
%\info{There is still the scale parameter $\mu$ which can be obtained directly as the mean of interference power. We didn't mention it in cor. 1, but we included a table for it since the first moment can be analytically computed at $\alpha=2$}
The logarithm of the IG shape parameter $\log_{10}(\lambda)$ can be estimated as a polynomial function of $\alpha$ of the $3^{rd}$ degree.
The coefficients for each function can be
%detailed in the Tables in Appendix E. The 
obtained via least square fitting based on the EM estimated parameter values.% from corollary 1 at fixed $\sigma_{SF}$.
\end{corollary}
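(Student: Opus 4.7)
The statement of Corollary 2 is essentially a data-driven fitting claim: for fixed shadowing variance, the EM-estimated parameters $w_1$, $c$, and $\log_{10}(\lambda)$ behave as low-order polynomials in the path loss exponent $\alpha$. Accordingly, the proof will not be a closed-form derivation but rather a constructive/empirical justification in the same spirit as Corollary \ref{ParameterFitLemma}. The plan is to (i) generate a dense grid of EM-estimated parameter values over a representative range of $\alpha$, (ii) fit each parameter to polynomials of increasing degree via least-squares, and (iii) demonstrate that the stated minimal degrees produce an essentially tight fit, so that adding further terms yields negligible improvement.

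First, I would fix $\sigma_{SF}$ at a representative value (or a small set of values spanning the measured mmWave range) and run the stochastic-geometry simulation of Section \ref{GeoModel} for a grid of path loss exponents $\alpha \in \{2, 2.25, 2.5, \dots, 5\}$, producing a sample of the diagonal interference power terms at each $\alpha$. For each sample set, I would apply Algorithm 1 (initialized from the moment-matched values given by Lemmas \ref{Lemma2a} and \ref{Lemma2b}) to obtain the mixture-MLE triples $\bigl(w_1(\alpha),\, c(\alpha),\, \lambda(\alpha)\bigr)$. This produces three discrete data sequences indexed by $\alpha$, which serve as the target points for fitting.

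Next, I would solve a standard least-squares polynomial regression problem for each parameter. Writing the candidate fit as
\begin{equation}
\hat{\theta}(\alpha) = \sum_{k=0}^{d} a_k \alpha^k,
\end{equation}
the coefficients $\{a_k\}$ are obtained by minimizing $\sum_{\alpha} \bigl(\theta(\alpha) - \hat{\theta}(\alpha)\bigr)^2$, which reduces to inverting a Vandermonde-type normal equation. I would then increase $d$ from $1$ upward for each of $w_1$, $c$, and $\log_{10}(\lambda)$, tracking a goodness-of-fit metric (e.g., coefficient of determination $R^2$ or root-mean-square residual). The claim to establish is that for $w_1$ and $c$ the improvement saturates at $d=4$, while for $\log_{10}(\lambda)$ it saturates at $d=3$; beyond these degrees, added coefficients fall within numerical noise and do not materially reduce the residual. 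The fitted coefficients, together with their residual errors, would be tabulated analogously to the representative values listed for Corollary \ref{ParameterFitLemma}.

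The main obstacle is not an algebraic one but a modeling one: justifying why the logarithm of $\lambda$ admits a lower-degree fit than $w_1$ and $c$ themselves. I expect this to trace back to Lemma \ref{Theorem2}, where $\mathrm{var}[q_0]$ depends on $\alpha$ through the factor $1/(\alpha-1)$ and $R_c^{2(1-\alpha)}$; since the moment-matched expression for $\lambda$ in Lemma \ref{Lemma2a} scales like $(\mathbb{E}[q_0])^3/\mathrm{var}[q_0]$, taking $\log_{10}$ linearizes the dominant exponential dependence on $\alpha$ and leaves a smoother residual that is well captured by a cubic. By contrast, $w_1$ and $c$ govern the relative weighting and tail heaviness, which shift more sharply as $\alpha$ transitions from near-free-space to heavily attenuated regimes, necessitating the extra quartic term. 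Making this intuition precise (to the extent of justifying the chosen degrees rather than merely observing the fit) is the principal difficulty, but since the corollary explicitly defers to least-square fitting against EM data, a careful empirical validation over the stated propagation range is sufficient to complete the argument.
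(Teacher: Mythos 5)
Your proposal matches the paper's treatment: Corollary~2 is stated as an empirical fitting claim with no analytical proof, and the paper justifies it exactly as you describe --- by running the EM algorithm on stochastic-geometry simulation data over the range of path loss exponents, performing least-squares polynomial fits of the stated degrees, and verifying the fit against the simulated parameter values (as shown in the numerical section and the tabulated coefficients). Your added heuristic about why $\log_{10}(\lambda)$ admits a lower-degree fit goes slightly beyond what the paper argues, but the core constructive/empirical approach is the same.
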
\vspace{-.1in}

Using these functional fits, we can bypass the MLE iterative algorithm in Algorithm 1 and directly use the table of fitted function coefficients. These functional fits therefore provide a simple and powerful way of modeling the interference and its direct dependence on the {\color{black}channel} propagation features.

%\info{The coefficients are obtained such that the functions fit the MLE parameters values. The MLE parameters are obtained such that the mixture distribution fits the interference distribution obtained using simulation data.}
Note that the coefficients examples in Appendix E are verified against simulated data using stochastic geometry based network settings and need further verification against actual measurement data, for which the coefficient values may change slightly. We expect, however, that the identified function forms in \eqref{FitEq} remain a good fit. To fit the function coefficients, apply the MLE algorithm to the measured data to identify the interference model parameters. Then fit these model parameters to the function forms in \eqref{FitEq} to identify the coefficient values. Once the fit is verified, the function  with the identified coefficients can be used directly to model the interference in a new environment with given shadowing variance and path loss exponent, without the need to measure new data or perform the MLE algorithm again.
\vspace{-0.1in}
\section{Numerical Performance Results}\label{results}
\vspace{-0.05in}
%%%%%%%%%%%%% Two Figures Sided%%%%%%%%%%%%%%%%%%%%%%%%
\iffalse
\begin{figure*}[t!]
    \centering
    %
    %
    %
    \begin{subfigure}[t]{0.45\textwidth}
        \centering
        \includegraphics[width=2.5in]{Cor_S09_A234_P30_real}
        \DeclareGraphicsExtensions{.eps}
        \vspace{-0.15in}
        \caption{{\small PDF $\left( \sigma_{SF}=\{0, 9\}\text{, and }\alpha=\{2, 3, 4\} \right)$.}} 
        \label{corA2}
    \end{subfigure}%
    ~ 
    \begin{subfigure}[t]{0.45\textwidth}
        \centering
        \includegraphics[width=2.5in]{}
        \DeclareGraphicsExtensions{.eps}
        \vspace{-0.15in}
        \caption{{\small Probability of weak correlation.}}
        \label{corA2_b}
    \end{subfigure}\vspace{-0.41in}
    %
    %
    %
    \caption{The real component of the correlation between interference at two different receiving antenna elements, $P_{max}=30 \text{dBm } $.} \label{corA2_super}
    %\vspace{-0.4in}
	\vspace{-0.45in}
	\vspace{-.2in}
\end{figure*}
\fi
%%%%%%%%%%%%%%%%%%%%%%%
In this section, we use simulation data obtained from the stochastic geometry based network setting in Sec. \ref{GeoModel} to numerically verify the validity of the analytical results and proposed interference models. We first verify the validity of our moment matching interference models. We then discuss the fit of the MLE individual and mixture interference models. Last, we apply these interference models to evaluate the outage performance of a cellular system employing MIMO joint transmit-receive dominant mode beamforming. 

{\color{black}Since interference in the limited scattering environment can be derived as the sum of interference components from independent, thinned processes, each of which exhibits similar characteristics as the interference in a rich scattering environment but with a different interferer density as in \cite{turgut2016coverage}, we focus on the rich scattering environment in our simulation.}
For the simulations setting, we consider a typical cell of radius $R_c = 150$ m, 
%which is inversely proportional to the active UEs density $\lambda_1 = 1/(16 \times 75^2)$ as discussed earlier in Section \ref{GeoModel}. We can redefine the relationship between cell radius $R_c$ and 
which is related to the active UEs density $\lambda_1$ as {$\eta \lambda_1 = 0.25 R_c^{-2}$}, where $\eta=1$ in the typical case. Later we vary $\eta$ to examine the effect of user density on performance.
%
%%%%%%Remove Extra%%%%%%%%%%%%%%
% These validations confirm our analytical results and the fit of the Gamma distribution for out-of-cell interference power for the range of practical system parameters.
%
%
%\vspace{-0.17in}
%\vspace{-0.17in}
\vspace{-.15in}
\subsection{Interference Correlation among Different Antenna Elements}\label{GeoModelCorCoef}%\vspace{-.1in}
\vspace{-0.1in}
%
%
%_______________________________________________
%%%%%%%%%%%%%%%%%%%%%%%
%%%%%%%%%%%%%%%%%%%%%%%  
%
\begin{figure}[t]%{0.45\textwidth}
        \centering
        \includegraphics[width=2.5in]{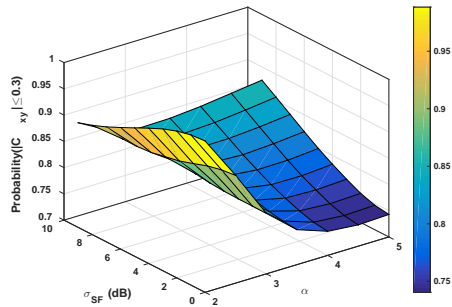}
        \DeclareGraphicsExtensions{.eps}
        \vspace{-0.15in}
        \caption{{\small The probability of weak interference correlation versus shadowing standard deviation and path loss exponent.}}
        \label{corA2_b}
        \vspace{-0.48in}
    \vspace{-.15in}
\end{figure}%\vspace{-0.2in}
%%%%%%%%%%%%%%%%%%%%%%%%%%%%%%%%%%%%%%%%%%%%%%%%%%%%%%%
This part aims to examine the correlation between interference at different antenna elements to see if this correlation is negligible. We use a kernel distribution to fit the actual interference correlation. A kernel distribution is a nonparametric representation of the probability density function of a random variable. It can be used when a parametric distribution cannot properly describe the simulation data, or when assumptions about the distribution of the data are to be avoided, as is the case for testing the fit of our proposed parameterized interference distributions. %Further, in our study, we consider the interference signal correlation coefficient discussed in Sec. \ref{offDiag}.
%we study the distribution of this correlation coefficient for different parameters of the channel. 

We examine the interference correlation via the correlation coefficient defined in Sec. \ref{offDiag}. {\color{black} In Fig. \ref{corA2_b}, we plot the probability that the correlation coefficient is smaller than 0.3, and show that this probability is significant for a wide range of the shadowing standard deviation and the path loss exponent.}
%
%In Fig. \ref{corA2}, we show the real component of the interference correlation coefficient for two different values of the shadowing standard deviation, $\sigma_{SF}=\{0, 9\}$dB, with the following system parameters: transmit power $P_{max}=30$dBm, and path loss exponent $\alpha=\{2, 3, 4\}$. 
%As demonstrated, the variance of the correlation coefficient increases as we increase both the path loss exponent, $\alpha$, and the shadowing standard deviation, $\sigma_{SF}$, which agrees with analytical expression \eqref{Cvar}. % in Sec. \ref{offDiag}. 
%{Fig. \ref{corA2_b} shows a significant probability of the correlation coefficient smaller than 0.3.  These} 
This figure demonstrates that the interference correlation among different antenna elements is in general weak. {\color{black} As shown later in the system performance evaluation (Sec. \ref{perfeval}), the impact of interference correlation on transmission rate differs depending on the receive combining vector, as discussed in Sec. \ref{DomBeam}. For IA combining, interference correlation can have an impact on capacity; whereas for IU combining, the interference correlation has no impact and can be ignored. In this paper we do not model the interference correlation, hence IU combining is more appropriate.}
\iffalse
Hence, we ignore this interference correlation in later system performance evaluation (Sec. \ref{perfeval}), which is once more verified against simulation results and confirms that the interference correlation has negligible effect on capacity. %

{\color{black} Moreover, assuming a receive combining vector as the left singular vector corresponding to the dominant mode of $\mathbf{H}_0$ makes it independent of the interference. Then, the equivalent interference power after combining has the same form as in \eqref{eq25} and the gap between capacity performance curves obtained using the mixture interference power model and simulation becomes insignificant and independent of $N_{UE}$ and $N_{BS}$ as shown in Fig. \ref{Fig::OutageVSelements}.}
\fi
%%
%%
%
%_______________________________________________________
%\vspace{-0.17in}
\vspace{-.15in}
\subsection{Moment Matching Interference Models Evaluation}\vspace{-.1in}
We now use the relative entropy metric described in Sec. \ref{RelativeEntropy} to measure the goodness of the MM interference models in Section \ref{MMmodel}. In Fig. \ref{REAE1a}, we plot the relative entropy versus shadowing standard deviation $\sigma_{SF}$ at a sample path loss exponent value of $\alpha=3.5$. The result shows that IG approximates the simulated interference distribution well at low values of shadowing standard deviation $\sigma_{SF}$ and diverges at higher values. 

%While not shown in this figure, 
The IW distribution, on the other hand, does not show a good fit at the simulated path loss value. Our extensive simulation and testing suggests that IW can be a good approximation of the simulated data distribution only for low values of $\alpha$ (lower than shown in Fig. \ref{REAE1}). 
%This can be explained by the fact that the IW distribution is a heavy tailed distribution which can be used to model cases where high interference power values are possible. However, as we increase the path loss exponent values, the probability of having high interference power values reduces and the IW distribution diverges from the true distribution. 
The moment matching IW distribution diverges significantly as the path loss exponent increases. This divergence is caused by the fact that the optimal shape parameter ${{c_{}}}$, as we will see later in this section, obtained at high values of $\alpha$ is close to or less than $2$. At this range of ${{c_{}}}$ values, the variance of the IW distribution is infinite and the moment matching model is not applicable.

In Fig. \ref{REAE1b}, we show a sample interference power distribution at one antenna element for $\alpha=3.5$ and $\sigma_{SF}=4$ dB to visually confirm the match, or rather the mismatch, of the moment-matched IW interference distribution. The moment-matched IG distribution also starts to diverge from the simulated data at this value of $\sigma_{SF}$, as indicated in Fig. \ref{REAE1a}.
%%%%%%%%%%%%% Two Figures Sided%%%%%%%%%%%%%%%%%%%%%%%%
\begin{figure*}[t!]
    \centering
    \begin{subfigure}[t]{0.45\textwidth}
        \centering
        \includegraphics[width=2.5in]{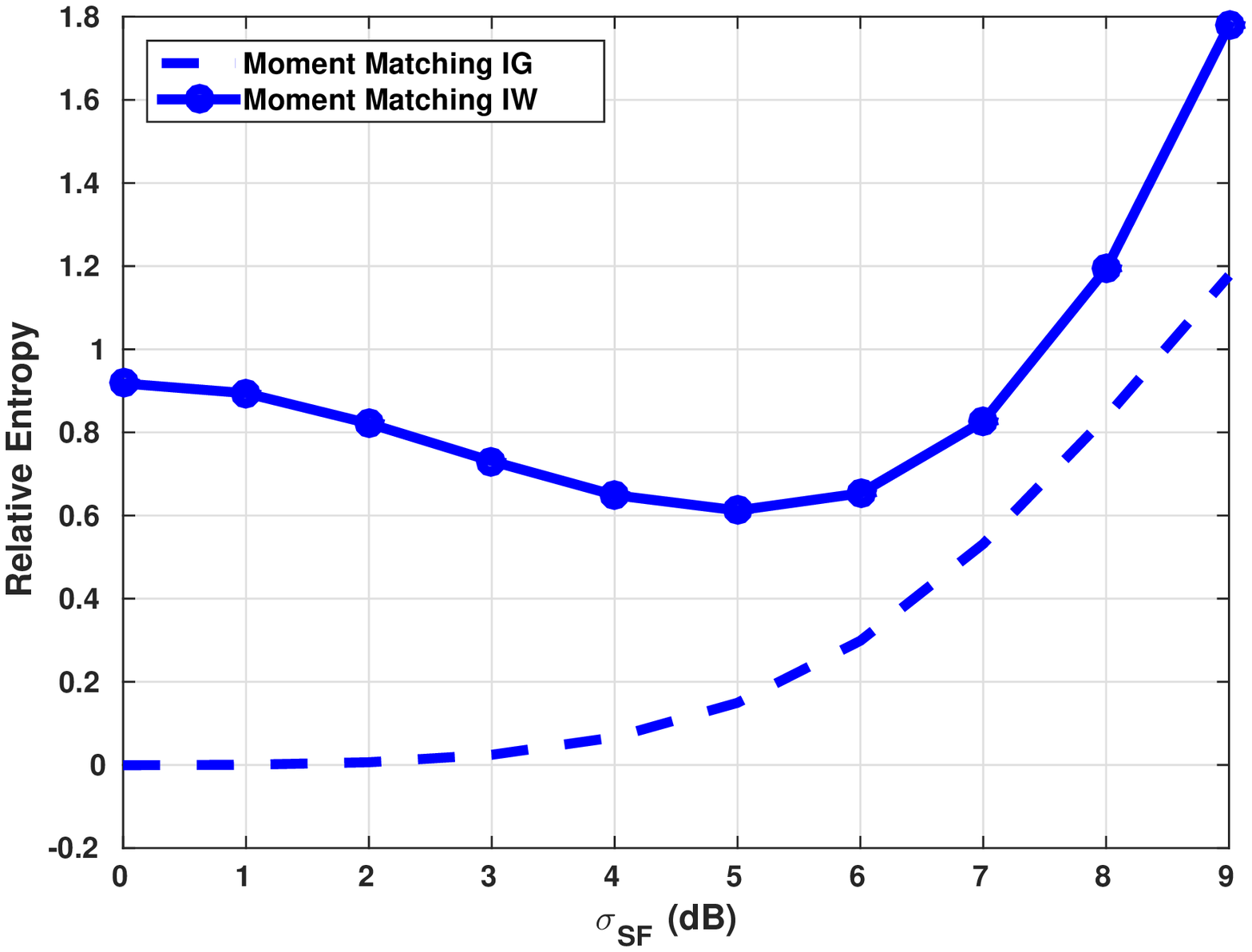}
        \DeclareGraphicsExtensions{.eps}
        \vspace{-0.16in}
        \caption{{\small Relative entropy at $\alpha=3.5$.}}
        \label{REAE1a}
    \end{subfigure}%
    ~ 
    \begin{subfigure}[t]{0.45\textwidth}
        \centering
        \includegraphics[width=2.5in]{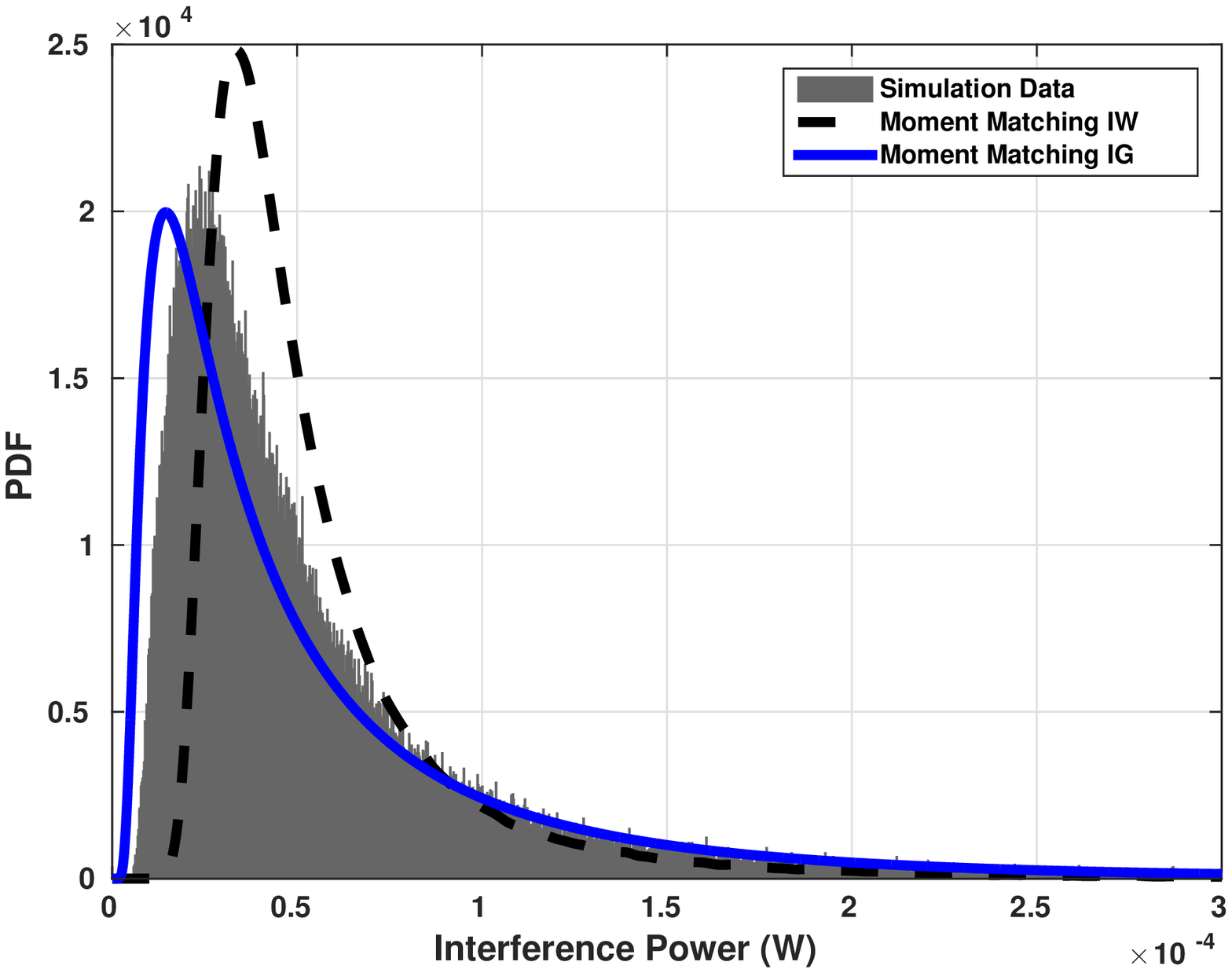}
        \DeclareGraphicsExtensions{.eps}
        \vspace{-0.16in}
        \caption{{\small Sample PDF at $\alpha=3.5$, $\sigma_{SF}=4$ dB.}}
        \label{REAE1b}
    \end{subfigure}\vspace{-0.4in}
    \caption{{\small Relative entropy and sample interference distributions of moment matching interference models. System parameters: $P_{max}=30$ dBm, $\eta=1$, $\delta=10^{-6}$, $\sigma_{SF}=4$ dB, $\alpha=3.5$.}} \label{REAE1}
    %\vspace{-0.4in}
	\vspace{-0.45in}
	\vspace{-.15in}
\end{figure*}
%%%%%%%%%%%%%%%%%%%%%%%
%%%%%%%%%%%%%%%%%%%%%%%  
\vspace{-.15in}
\subsection{Individual and Mixture MLE Interference Models Evaluation}
\vspace{-.1in}
%%%%%%%%%%%%%%%%%%%%%%%  
%% 
%Again, we use the relative entropy to measure the goodness of the proposed mixture distribution interference model and compare it to the single distribution maximum likelihood estimation (MLE) model. 
Next, we evaluate the fit of the individual and mixture MLE interference models. Fig. \ref{REAE3} shows the relative entropy measures for these 3 models against simulated data as well as representative sample distributions.
In Fig. \ref{REAE3a}, we see that our proposed mixture model outperforms both the individual IG and IW models. This result aligns with our expectation since the {\color{black}pdf} of the mixture model is a combination of both the individual IG and IW distributions. Hence, performance of the mixture model should be at least as good as that of the individual models. The fact that the mixture model performs better than either individual one is because the relative entropy measure is a non-linear function of the distribution densities.

Comparing between Figs. \ref{REAE1} and \ref{REAE3} also reveals that all MLE models perform much better than the MM models: at the worst $\sigma_{SF}=9$ dB, the IW distribution under individual MLE has a maximum relative entropy of about merely $0.23$ compared to a maximum of $1.8$ under MM. Further, we see that the proposed mixture model follows the simulated data distribution almost exactly even at large shadowing: the maximum relative entropy is barely $0.03$.
\begin{figure*}[t!]
    \centering
    \begin{subfigure}[t]{0.45\textwidth}
        \centering
        \includegraphics[width=2.5in]{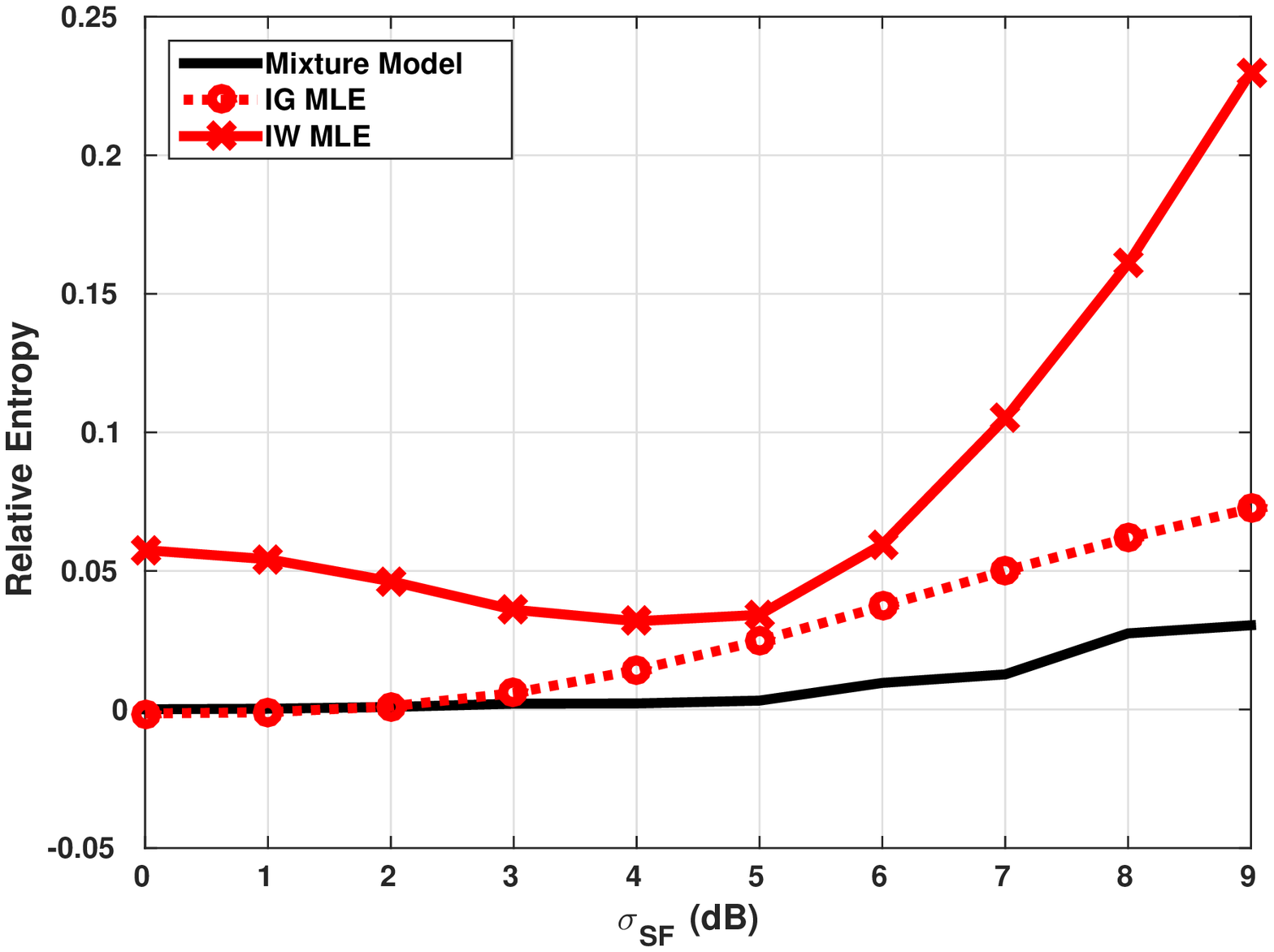}
        \DeclareGraphicsExtensions{.eps}
        \vspace{-0.16in}
        \caption{{\small Path loss exponent $\alpha=3.5$.}}
        \label{REAE3a}
    \end{subfigure}%
    ~ 
    \begin{subfigure}[t]{0.45\textwidth}
        \centering
        \includegraphics[width=2.5in]{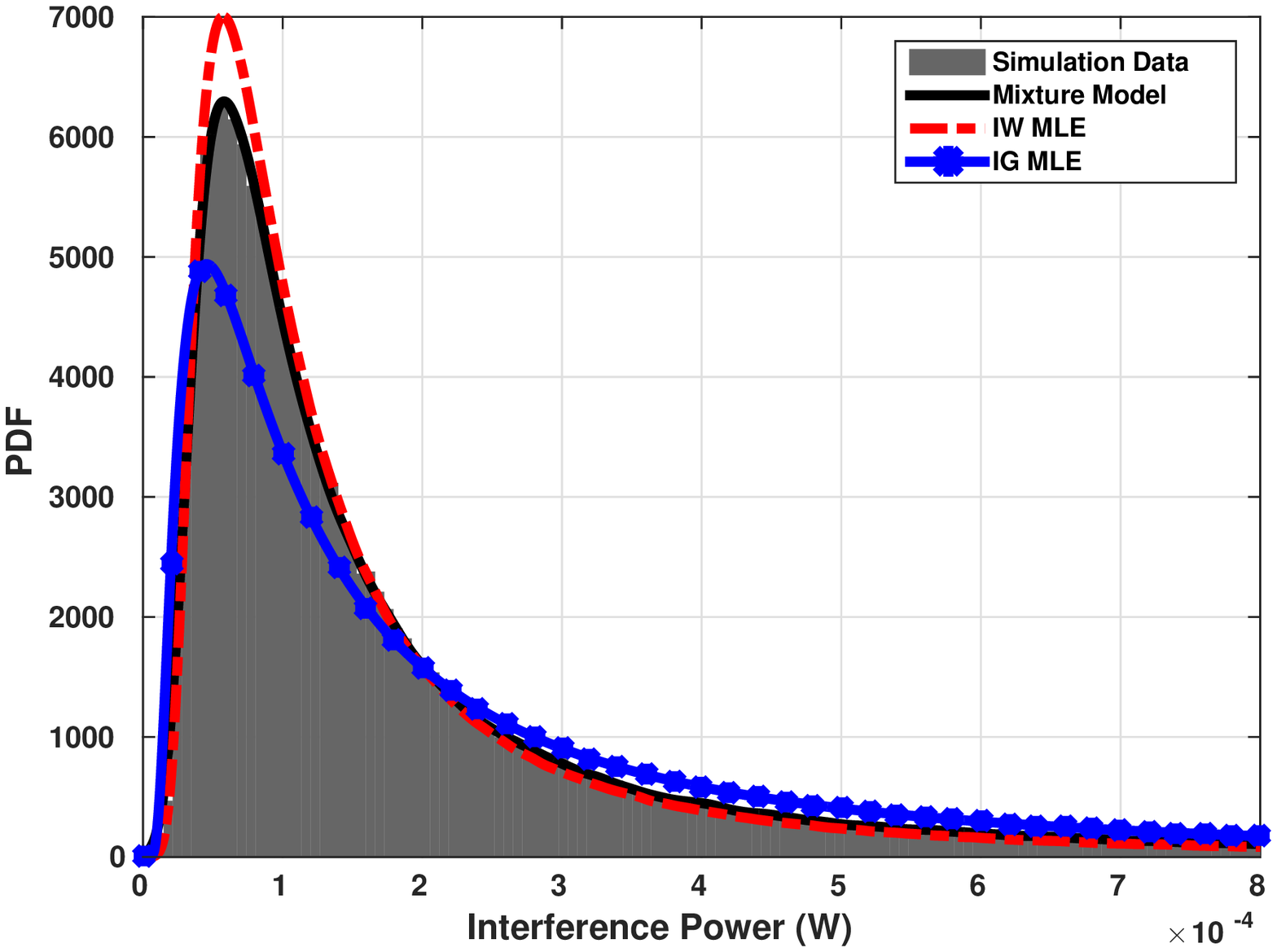}
        \DeclareGraphicsExtensions{.eps}
        \vspace{-0.16in}
        \caption{{\small Sample PDF at $\alpha=3.5$, $\sigma_{SF}=9$ dB.}}
        \label{REAE3b}
    \end{subfigure}\vspace{-0.4in}
    \caption{{\small Relative entropy and sample interference distributions of the individual and mixture MLE models. System parameters: $P_{max}=30$ dBm, $\eta=1$, $\delta=10^{-6}$, $\sigma_{SF}=9$ dB, $\alpha=3.5$.}} \label{REAE3}
    %\vspace{-0.4in}
	\vspace{-0.45in}
	\vspace{-.15in}
\end{figure*}
%%%%%%%%%%%%%%%%%%%%%%%

Consistently in all our extensive simulations and testing, the mixture MLE interference model provides the best fit. This model requires an iterative algorithm to fit the parameters initially, before fitting the functional coefficients as in Corollary \ref{ParameterFitLemma}. The MM models are attractive from the simplicity point of view since the model parameters can be determined analytically without optimization. %For this reason, 
{\color{black} However, our simulation results suggest that the IG MM model should only be used for the range of shadowing standard deviation $\sigma_{SF}<3$ dB, at which the relative entropy of the IG MM model is approximately within $1\%$ of the mixture model.}
%
%we identify in Fig. \ref{SigTAaEa} the regions of path loss exponent $\alpha$ and shadowing standard deviation $\sigma_{SF}$ parameters in which the IG MM model has a reasonably good fit. In this range of channel parameters, the relative entropy of the IG MM model is within $1\%$ of the Mixture model. Outside this range, we recommend the use of mixture model, as the IG MM model fitness becomes very loose. As we can see, the IG MM model does not fit the whole range of practical $\sigma_{SF}$ values, especially at low path loss exponent $\alpha$. The mixture MLE model, on the other hand, provides a good fit for all ranges of channel parameters. 

{In Fig. \ref{iterations}, we plot the number of iterations required by Algorithm 1 to obtain the optimal values of the mixture MLE model parameters versus $\sigma_{SF}$ at two samples of the path loss exponent $\alpha = \{ 2.5, 4\}$, where the preset threshold value is chosen to be $\delta = 10^{-6}$, the model weights are initialized as $w_1^{(0)}=w_2^{(0)}=0$, and the model shape parameters $\lambda^{(0)}$ and $c^{(0)}$ are initialized to the optimal values of the corresponding individual MLE models. The number of iterations required by the EM algorithm to obtain the optimal mixture MLE model parameters is in general less than 200. In Fig. \ref{Sens12}, we study the sensitivity of Algorithm 1 to the initial values of the parameters. We consider the sample data obtained at path loss exponent $\alpha=2$, shadowing value $\sigma_{SF}=4$ dB, and density factor $\eta = 1$. %In Fig. \ref{Sens1}, 
{\color{black}In this figure,} we plot the IG weight parameter $w_1^{(m)}$ at each iteration $(m)$ for different initial values $w_1^{(0)}$, which shows that the weight parameter converges to the same value after at most 170 iterations. We also note from simulations %in Fig. \ref{Sens2} 
that the likelihood function is concave in both of the shape parameters $c$ and $\lambda$ at a fixed value of the weight parameter, therefore it has a unique maximum. Hence, we can conclude based on numerical evidence that Algorithm 1 converges to the same optimal parameter values irrespective of their initializations.}
\begin{figure*}
\centering
\begin{minipage}[b]{.5\textwidth}
\centering
\includegraphics[width=2.5in]{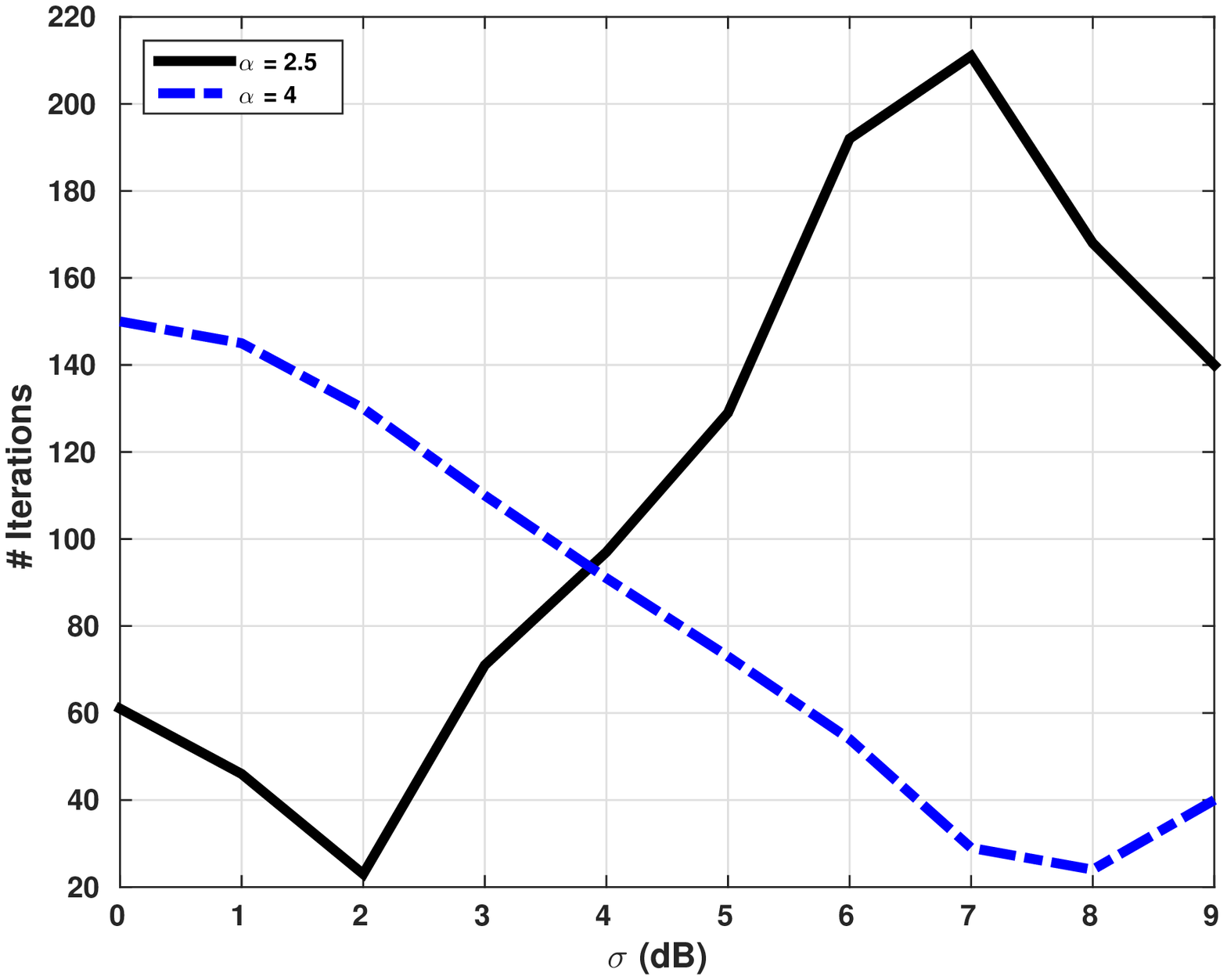}
        \DeclareGraphicsExtensions{.eps}
        \vspace{-0.2in}
        \caption{{\small Number of iteration required to obtain optimal parameters of the mixture MLE model versus $\sigma_{SF}$ at path loss exponent values $\alpha = \{2.5, 4\}$ using the threshold value $\delta = 10^{-6}$.}} 
        \label{iterations}
\end{minipage}\quad
\begin{minipage}[b]{.47\textwidth}
\centering
        \includegraphics[width=2.5in]{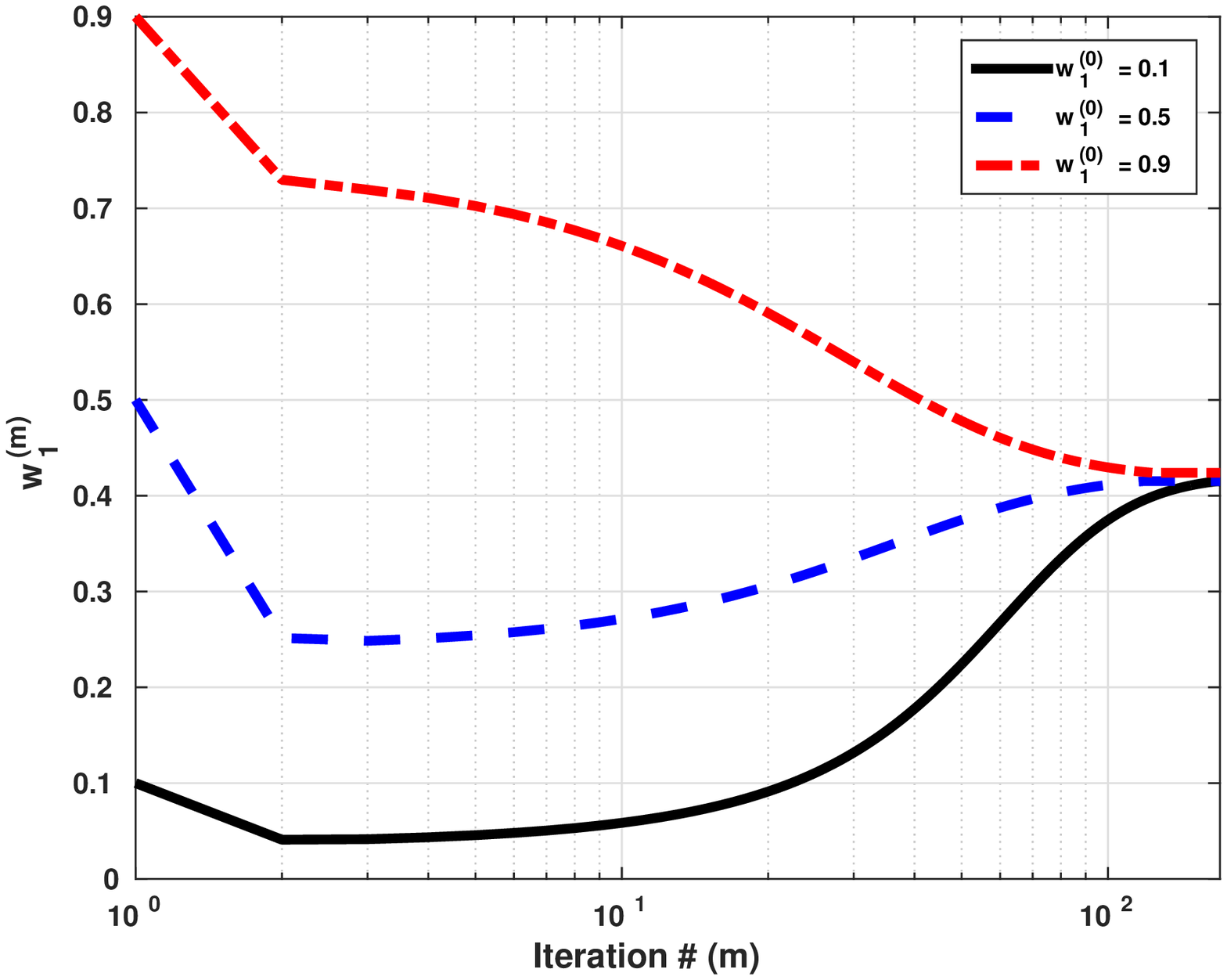}
        \DeclareGraphicsExtensions{.eps}
        \vspace{-0.16in}
        \caption{{\small Sensitivity analysis of the MLE EM algorithm using sample data obtained at path loss exponent and shadowing standard deviation values of $\alpha = 2, \sigma_{SF} =4$ dB and $\eta=1$.}} 
        \label{Sens12}
\end{minipage}\vspace{-0.45in}\vspace{-.15in}
\end{figure*}
%%%%%%%%%%%%%%%%%%%%%%%
%%%%%%%%%%%%%%%%%%%%%%%%%%%%%%%%%%%%%%%%%%%%%%%%%%%%%%%
\vspace{-.15in}
\subsection{Mixture Model Parameters and Function Coefficients Fitting}
\vspace{-.08in}
%%%%%%%%%%%%%%Indepedent Two-figures sided%%%%%%%%%%%%
\iffalse
\begin{figure*}
\centering
\begin{minipage}[b]{.5\textwidth}
\centering
\includegraphics[width=2.5in]{Sigma_Threshold_MM}
        \DeclareGraphicsExtensions{.eps}
        \vspace{-0.2in}
        \caption{Regions of shadowing standard deviation and path loss exponent for the IG MM model to fit within $1\%$ of the mixture MLE model $\left(P_{max}=30\text{ dBm}, \eta=1 \right)$. The IG MM region shrinks as UE density factor, $\eta$, increases.} 
        \label{SigTAaEa}
\end{minipage}\quad
\begin{minipage}[b]{.47\textwidth}
\centering
\includegraphics[width=2.5in]{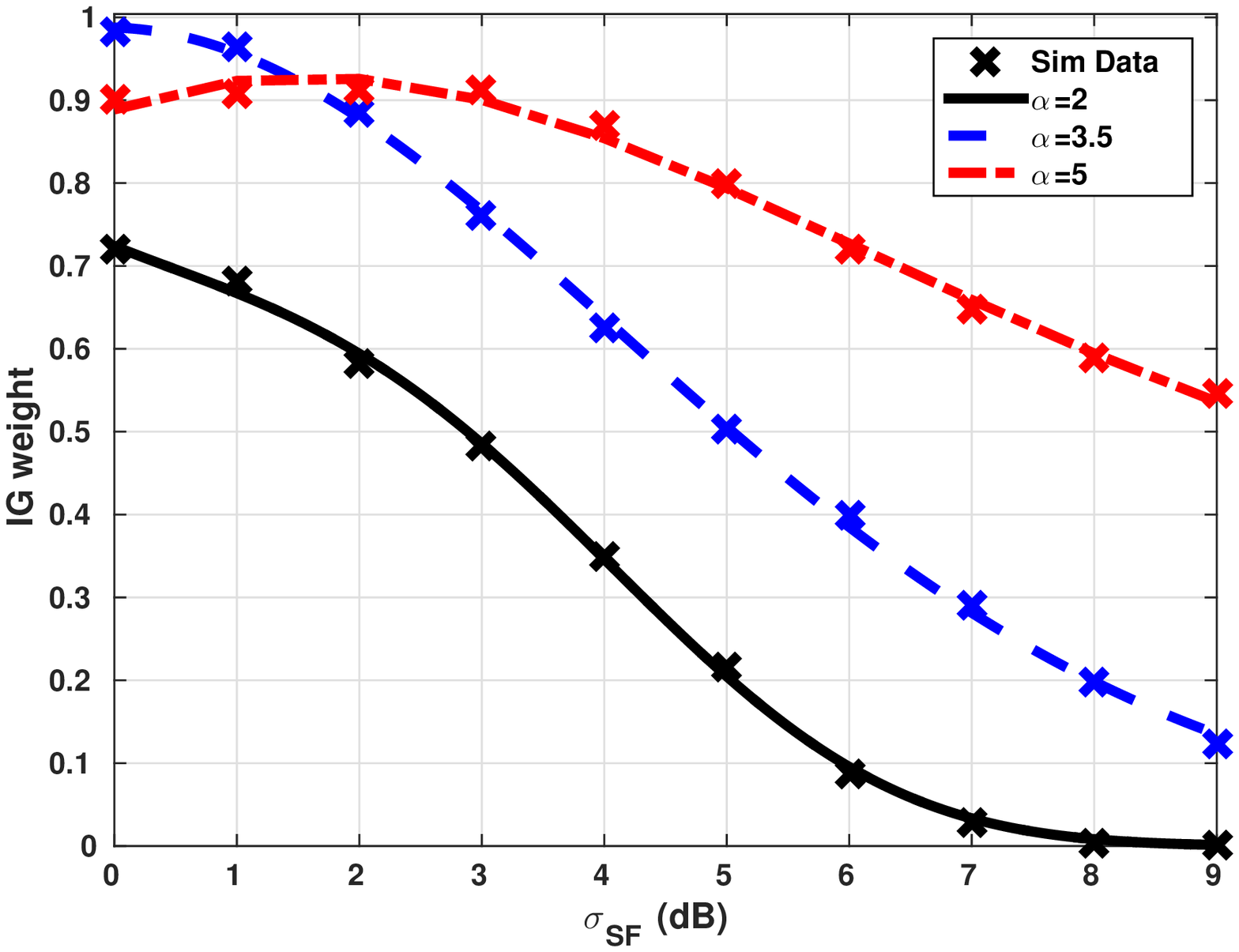}
        \DeclareGraphicsExtensions{.eps}
        \vspace{-0.2in}
        \caption{IG weight, $w_1$, in the mixture MLE model. System parameters: $P_{max}=30$ dBm, $\eta=3$, $\delta=10^{-6}$, $\alpha=\{2, 3.5, 5\}$.\\\\} 
        \label{IGwAE13}
\end{minipage}\vspace{-0.45in}\vspace{-.15in}
\end{figure*}
\fi

%%%%%%%%%%%%%%%%%%%%%%%%%%%%%%%%%%%%%%%%%%%%%%%%%%%%%%  
%% 
%%%%%%%%%%%%%%%%%%%%%%%%%%%
In this section, we use the EM algorithm discussed in Sec. \ref{EMalgorithmEstimation}. along with Theorem 2 to obtain the mixture model parameters. We then fit the estimated parameters $w_1, {{c_{}}}, {{\lambda_{}}}, $ into a function in terms of $\sigma_{SF}$ as in 
Corollary \ref{ParameterFitLemma}. After that, Eq. \eqref{M1} can be used to determine the parameter ${{b_{}}}$. We can also use \eqref{var2}
%the equation for interference power mean in Lemma \ref{Theorem2} 
to determine the parameter ${{\mu_{}}}$ directly, but instead we provide a function fitted to the simulated data for this parameter, since \eqref{var2} is invalid at $\alpha=2$. 
%The functional fit also eliminates  the complexity of evaluating numerical integrals as necessary in (\ref{}). 
The coefficients of the fitted functions are provided for representative values in Appendix E. % and are available for a wider range of channel and system parameters online at xx.\info{put link.}
%in detail in Appendix E.%\ref{ApxTables}

%\info{In this figure, the cross markers mean simulation data and lines mean the fitted function. However, the red line is marked as well (to make it distinguishable in black and white.}
In Fig. \ref{IGwAE13}, we plot the mixture model weight factor obtained from both simulation (cross markers) and functional fit (lines). 
The estimated parameters and fitted functions show an excellent fit to the simulated data. The results confirm that larger shadowing results in an increase in the IW component and a decrease in the IG component in our mixture interference model. The increase in shadowing standard deviation $\sigma_{SF}$ implies a higher probability of high interference power, leading to a heavier interference distribution tail.
As the path loss exponent $\alpha$ increases, however, the IW distribution starts to diverge from the simulated data and hence reduces in weight, thus the IG weight $w_1$ increases as shown. Note also that as $\alpha$ increases, shadowing has less effect on the mixture model weights, especially at low active UEs density. As such, the weight distribution at high path loss is less sensitive to shadowing.  
%Hence, large shadowing results in an increase in the IW component and a decrease in the IG component in our mixture interference model.

%\info{I kept the fit versus $\sigma_{SF}$ since it agrees more with the discussion in this paragraph. I have also included only $c$ and $\lambda$ since those are the only two parameters that we fit and the other two parameters are just obtained using MM as stated in corollary 1.}
In Fig. \ref{IWsAE}, we plot the IW and IG shape parameters $\{{c_{}},\lambda\}$ versus $\sigma_{SF}$ at %two path loss exponent values 
$\alpha=3$. %on the left and scale parameters $\{{b_{}},\mu\}$ versus $\alpha$ at $\sigma_{SF}=0$ dB on the right. %The fitted curves for $\{{c_{}},\lambda\}$ versus $\sigma_{SF}$ as in Fig. \ref{IWsAEa} and the corresponding evaluated curves for $\{{b_{}},\mu\}$ versus $\alpha$ as in Fig. \ref{IWsAEb} 
These curves match the simulated data perfectly, verifying the functional fits in Corollary 1. %Corollaries 1 and 2. 
At higher path loss exponents, however, we noted some irregularities in the values of simulated ${{c_{}}}$ %as in Fig. \ref{IWsAEb} at $\eta = 3$, 
specifically at $\alpha \geq 4$ and $\sigma_{SF} \leq 4$ %\info{here I mean that not all the values of $\sigma_{SF}$ are shown in the figure.}
(not shown in the figure).
%%%%%%#################################
This irregularity is caused by the fact that the IG distribution has a much higher weight than IW in this range, hence the IW distribution can assume a looser range of values for its parameters without affecting the goodness of the fit. %When we fit the shape parameter ${{c_{}}}$ into the function in Eq. \eqref{FitEq}, we give a little weight to the points in this loose range.
The functional fit for $c$ in \eqref{FitEq} thus can still be applied in this high path loss range.

These plots demonstrate that the fitted functions for parameters of the mixture {MLE model} %\info{interference distribution} 
are accurate for the practical ranges of {channel} path loss and shadowing. 
%The fitted function coefficients are given in Appendix E. 
These functions therefore provide an extremely simple yet accurate way to estimate interference distribution parameters.
\begin{figure*}
\centering
\begin{minipage}[b]{.5\textwidth}
\centering
        \includegraphics[width=2.5in]{AsE3_IG_Weights_v3}
        \DeclareGraphicsExtensions{.eps}
        \vspace{-0.25in}
        \caption{{\small IG weight, $w_1$, in the mixture MLE model. System parameters: $P_{max}=30$ dBm, $\eta=3$, $\delta=10^{-6}$, $\alpha=\{2, 3.5, 5\}$.\\}} 
        \label{IGwAE13}
\end{minipage}\quad
\begin{minipage}[b]{.47\textwidth}
\centering
        \includegraphics[width=2.5in]{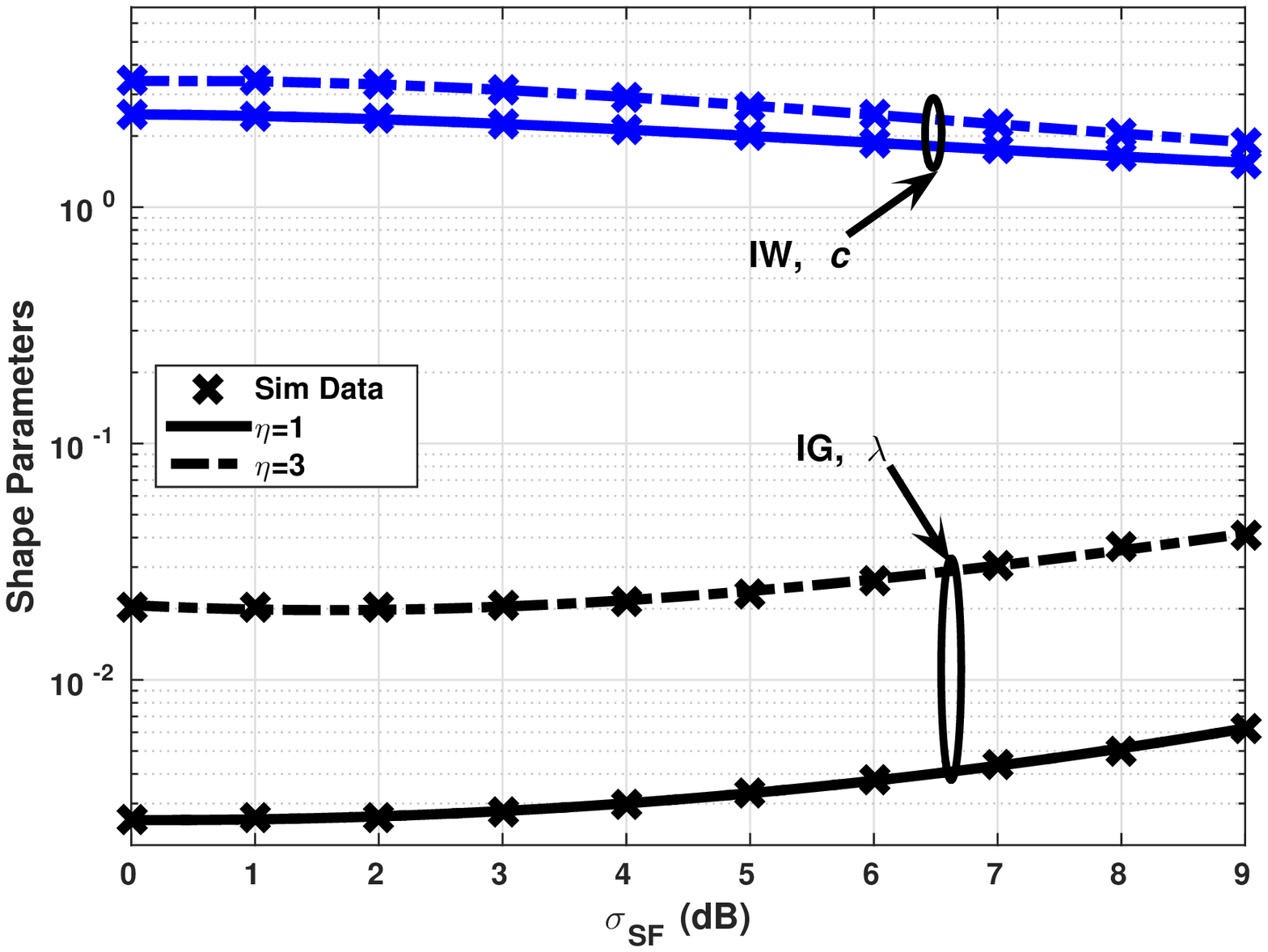}
        \DeclareGraphicsExtensions{.eps}
        \vspace{-0.25in}
        \caption{{\small IW and IG parameters comparison between simulation (markers) and fitted functions in Corollary 1 (lines).  System settings: $P_{max}=30$ dBm, $\alpha=3$, $\eta=\{1, 3\}$, $\delta=10^{-6}$.}} \label{IWsAE}
\end{minipage}\vspace{-0.48in}\vspace{-.15in}\vspace{-.1in}
\end{figure*}
%%%%%%%%%%%%%%%%%%%%%%%
%%%%%%%%%%%%%%%%%%%%%%%%%%%%%%%%%%%%%%%%%%%%%%%%%%%%%%%
\vspace{-.2in}
\subsection{Rate CDF and Outage Performance}\label{perfeval}
\vspace{-.1in}
%%%%%%%%%%%%%%%%%%%%%%%
%\info{still need to finish this section} 
We now apply the proposed interference models to evaluate the cellular system performance. In particular, we evaluate the transmission rate cumulative distribution function (CDF), which is related to the outage probability, of a cell edge user at a distance of $145$ m from {the base station} with $\eta=1$ (note $R_c=150$ m and $\eta \lambda_1=0.25 R_c^{-2}$). {In this simulation scenario, we do not consider uplink scheduling; instead, we consider a random user with a random channel to the considered BS. We further normalize the simulation results by the channel bandwidth and show rates in bps/Hz.} %\info{Recall that the cell radius is $R_c=150$ m and the user density is related to {$R_c$ as $\eta \lambda_1=0.25 R_c^{-2}$}.} %All user equipments and base stations are equipped with two antennas each, and 
The uplink maximum transmission power is $30$ dBm. 
We use the dominant mode {\color{black}transmit} beamforming discussed in Sec. \ref{DomBeam} and assume a noise variance of {$\sigma^2=-124$ dBm/Hz} and a path loss intercept $\beta=-72.3$ dB. 

{\color{black} In Fig. \ref{Outs19E1}, we plot the interference-aware capacity in \eqref{RateDEq} and compare the user transmission rate CDFs based on IG MM and mixture interference models to the simulated data. Both models show good accuracy for the interference power as in the left plot at a low $\sigma_{SF}$ value. %, in agreement with earlier results in Fig. \ref{SigTAaEa}. 
At larger shadowing, however, as in the right plot with $\sigma_{SF}=9$ dB, only the mixture model remains accurate %to the simulated data 
while the IG MM model diverges significantly. This result also suggests that interference has a strong impact on user transmission rate CDF even at the somewhat high noise variance of $\sigma^2=-124$ dBm/Hz. We also note that the slight mismatch between the mixture model and simulation is due to the interference correlation, which we did not model but does have an impact on capacity in an interference-aware combining scheme. 

To further examine the accuracy of the mixture interference power model, we consider the interference-unaware combining scheme with capacity in \eqref{Eq::RateDEq2}. IU capacity does not depend on interference correlation as a result of Lemma \ref{Lemma:PostCombInt}. Results in Fig. \ref{Fig::OutageVSelements} shows that the capacity using interference power model matches with simulation almost perfectly in all considered antenna configurations, even at a high shadowing variance of $\sigma_{SF}=9$ dB. This result confirms the validity of our interference power distribution in capacity evaluation.

Our interference distribution models have been validated against simulation based on stochastic geometry. The ideal next step would be to evaluate the models against actual measurement data, but at the point of writing this paper, such data are not available.} As stochastic geometry has been shown to reasonably capture cellular performance compared to actual data \cite{ref_6}, we expect that the mixture model with adjusted functional coefficients will also present a good fit to actual measurement data. The mixture MLE model can provide a valuable tool for cellular system performance evaluation and prediction.% in mmWave environments.

%%%%%%%%%%%%% Two Figures Sided%%%%%%%%%%%%%%%%%%%%%%%%
\iftrue
\begin{figure*}[t!]
    \centering
    \begin{subfigure}[t]{0.45\textwidth}
        \centering
        \includegraphics[width=2.5in]{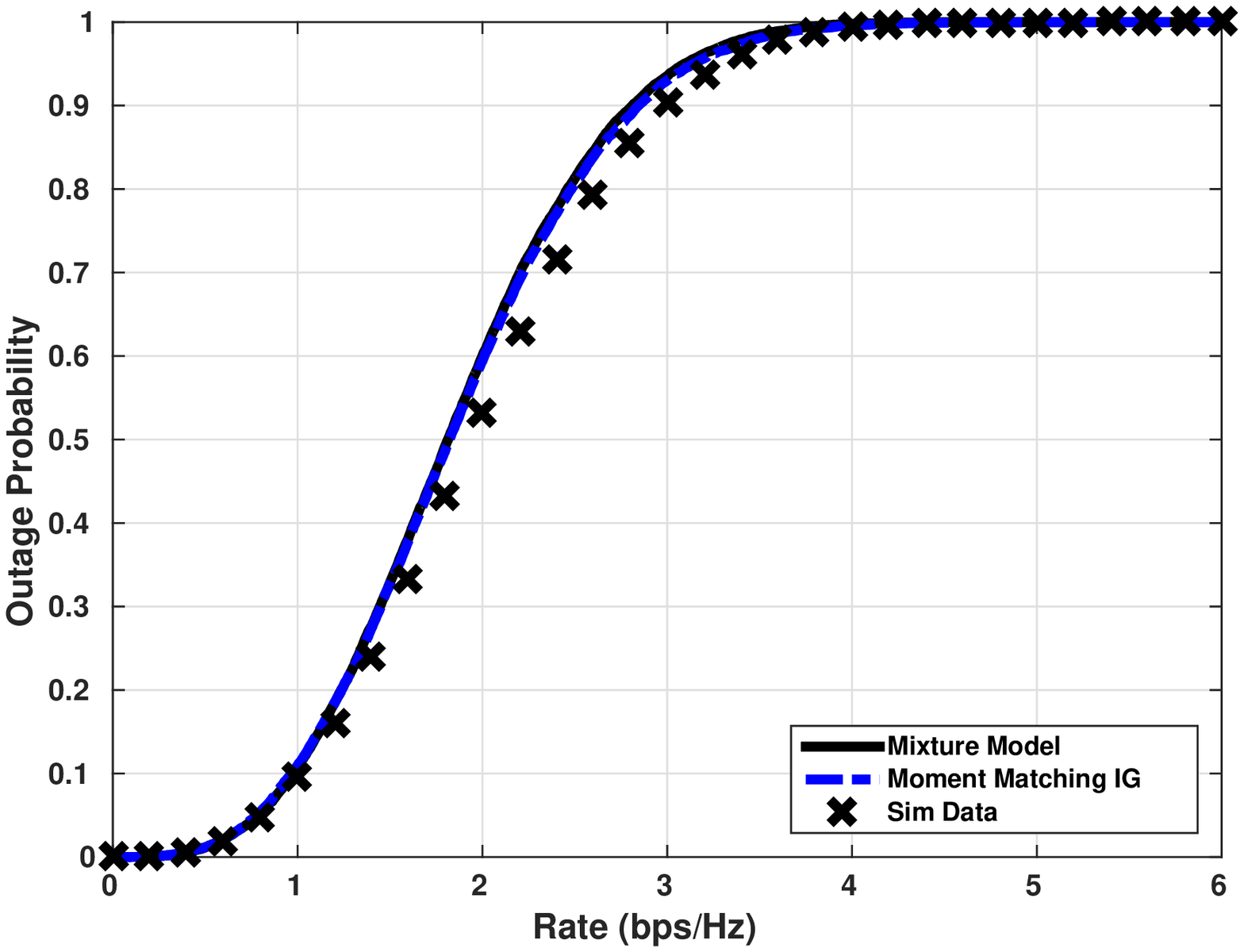}%P80A3S1_outProb %P30_N_174_A3S1
        \DeclareGraphicsExtensions{.eps}
        \vspace{-0.16in}
        \caption{{\small $\sigma_{SF}=1$ dB.}}
        \label{Outs1E1}
    \end{subfigure}%
    ~ 
    \begin{subfigure}[t]{0.45\textwidth}
        \centering
        \includegraphics[width=2.5in]{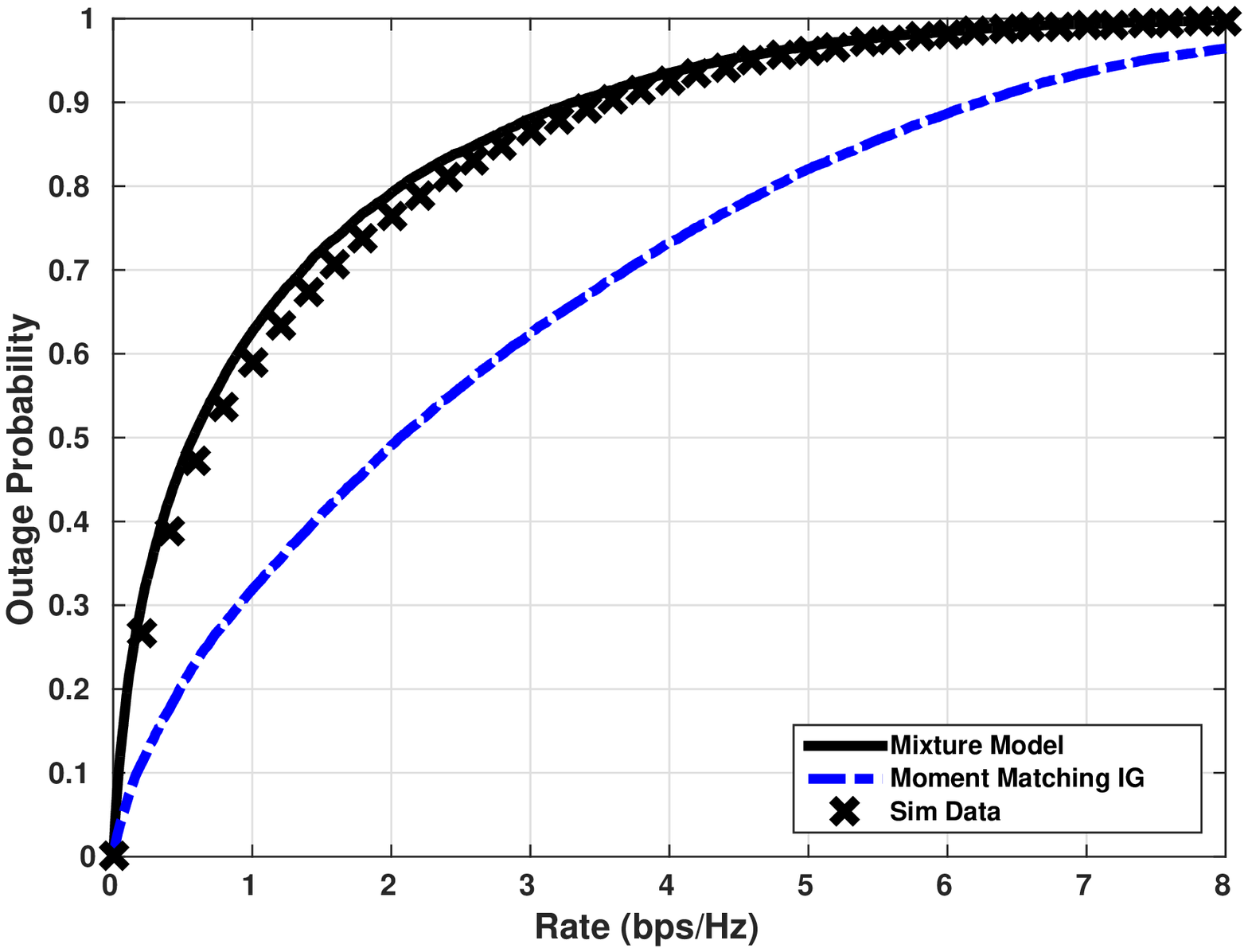}%P80A3S9_outProb %P30_N_147_A3S9
        \DeclareGraphicsExtensions{.eps}
        \vspace{-0.16in}
        \caption{{\small $\sigma_{SF}=9$ dB.}}
        \label{Outs9E1}
    \end{subfigure}\vspace{-0.4in}\\
    \caption{{\small User transmission rate CDF comparison among simulation, IG MM model, and mixture MLE model. System settings: $P_{max}=30$ dBm, $\eta=1$, $\alpha=3$, $\sigma_{SF}=\{1, 9\}$ dB.}} \label{Outs19E1}
    \vspace{-0.48in}
    \vspace{-.15in}
	%\vspace{-0.25in}
\end{figure*}
\fi
%%%%%%%%%%%%%%%%%%%%%%%  
%%%%%%%%%%%%%%%%%%%%%%%
%%%%%%%%%%%%%%%%%%%%%%%  
%
\iffalse
\begin{figure}[t]%{0.45\textwidth}
        \centering
        \includegraphics[width=2.5in]{Merged_outProb}
        \DeclareGraphicsExtensions{.eps}
        \vspace{-0.25in}
        \caption{User transmission rate CDF comparison among simulation, IG MM model, and mixture MLE model. System settings: $P_{max}=30$ dBm, $\eta=1$, $\alpha=3$, $\sigma_{SF}=\{1, 9\}$ dB.} \label{Outs19E1}
        \vspace{-.45in}
\end{figure}%\vspace{-0.2in}
\fi
%%%%%%%%%%%%%%%%%%%%%%%%%%%%%%%%%%%%%%%%%%%%%%%%%%%%%%%
%
\vspace{-.1in}
\section{Conclusion}\label{conclusion}\vspace{-.1in}
%\info{still need to finish this section}
We introduced new interference models based on moment matching and maximum likelihood estimation techniques {\color{black}for both the rich scattering and limited scattering NLOS propagations}. 
A novel model as a mixture between the Inverse Gaussian and Inverse Weibull presents remarkably good fit with simulation, capturing the heavy-tail characteristics of interference especially in high shadowing environments. We designed an expectation maximization algorithm to estimate the parameters of this mixture model. 
We also fitted the mixture model parameters to simple polynomial functions of the {channel} propagation features, which provide a simple way to model the interference without any optimization. The fitted mixture model can then be integrated into a more complex system level simulation to evaluate and predict cellular performance, or used to aid system design. 
%can be implemented as part of a more complex system level simulation. 
The next step would be to test the mixture model with appropriately adjusted parameters and functional fitting coefficients against data from measurement campaigns to evaluate the model in an actual system setting. 
%%%%%%%%%%%%%%%%%%%%%%%  
%% 
%% 
\begin{figure}[t]%{0.45\textwidth}
        \centering
        \includegraphics[width=2.5in]{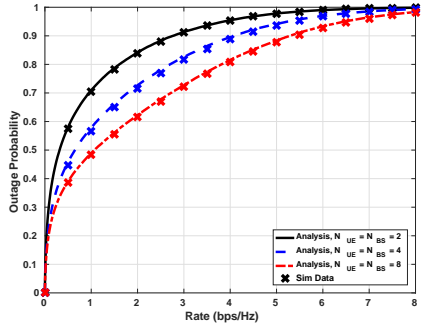}
        \DeclareGraphicsExtensions{.eps}
        \vspace{-0.25in}
        \caption{{\small User transmission rate CDF comparison among simulation and mixture MLE model. System settings: $P_{max}=30$ dBm, $\eta=1$, $\alpha=3$, $\sigma_{SF}=9$ dB, $N_{UE}=N_{BS}=\{2, 4, 8\}$.}} 
        \label{Fig::OutageVSelements}
        \vspace{-0.48in}
    \vspace{-.15in}
\end{figure}%\vspace{-0.25in}
%%%%%%%%%%%%%%%%%%%%%%%%%%%
%%
\appendices
%%
%%%%%%%##############################################
{\color{black}
\vspace{-.15in}
\section*{Appendix A: Proof of Lemma \ref{Lemma:PostCombInt}}\label{APPx::1}\vspace{-.05in}
%%%%%%%%%%%%
Given the interference vector in \eqref{eq11a_2}, the element in the $i^{th}$ row and $j^{th}$ column of the interference covariance matrix $\boldsymbol{\Sigma}_{0}$ can be expressed as \vspace{-.05in}
\begin{eqnarray}
%%%%%%%%%%%%%%%%%%%%%%%%%%%%%%%%%%%%%%%%%%%%
\!\!\!\!\!\![\boldsymbol{\Sigma}_{0}]_{ij}&\!\!\!=&\!\!\!\sum_{k>0:\bold{z}_k\in \Phi_1} {\mathit{l}\left({\Vert \bold{z}_k \Vert}_2\right)}  \breve{h}_{(k,i)} \breve{h}_{(k,j)}^\ast P_{k},\label{Eq::eq25b}
\end{eqnarray}
%%%%#######################################
where $\breve{h}_{(k,i)}$, $\breve{h}_{(k,j)}$ are i.i.d. $\mathcal{CN}(0,1)$ representing the Rayleigh small-scale fading channels from the $k^{th}$ UE to the $i^{th}$ and $j^{th}$ antenna element of the considered BS, respectively. The post-combining interference power term $\mathbf{v}_1^\ast \boldsymbol{\Sigma}_{0} \mathbf{v}_1$ can then be written as \vspace{-.05in}
\begin{eqnarray}\label{Eq::Post-Comb-Int}
\mathbf{v}_1^\ast \boldsymbol{\Sigma}_{0} \mathbf{v}_1 = \sum_{k>0:\bold{z}_k\in \Phi_1} P_{k} {\mathit{l}\left({\Vert \bold{z}_k \Vert}_2\right)}  \left(\sum_i [\mathbf{v}_1]_i \breve{h}_{(k,i)} \right) \left(\sum_i [\mathbf{v}_1]_i \breve{h}_{(k,i)} \right)^\ast.
\end{eqnarray}
Since the direct channel small-scaling fading component is assumed to be independent of interference channels small-scale fading, $\mathbf{v}_1$ is independent of all $\mathbf{\breve{h}}_{k}$. Thus the distribution of $\sum_i [\mathbf{v}_1]_i \breve{h}_{(k,i)}$ is %distributed as
 $\mathcal{CN}(0,1)$ as $\mathbf{v}_1$ is unit-norm, which renders the distribution of $\mathbf{v}_1^\ast \boldsymbol{\Sigma}_{0} \mathbf{v}_1$ in \eqref{Eq::Post-Comb-Int} the same as that of the diagonal elements of $\boldsymbol{\Sigma}_{0}$ in \eqref{Eq::eq25b}.
}
%%%%%%%##############################################
\vspace{-.15in}
\section*{Appendix B: Proof of Lemmas \ref{Theorem2} and \ref{offDlemma}}\label{APPx2}\vspace{-.05in}%
%%###########################################
We first derive the $Laplace$ transform of the interference power term, $\mathit{{q}}_{0}$, which is then used to characterize its moments. 
We develop the $Laplace$ transform of the interference power at the destination as follows: %in Eq. \eqref{App1} 
\begin{align}\label{App1}
%\!\!\!\!\!\!\!\!\!\!\!\!\!\!\!\!\!\!\!\!\!\!\!\!
\mathcal{L}_{\mathit{{q}}_{0}} (s)&
=\mathbb{E}_{\mathit{{q}}_{0}} \left[e^{-s \mathit{{q}}_{0}} \right] %\nonumber\\
%
%%%%%%%%%%%%%%%%%%%%%%%%%
%%%%%%%%%%%%%%%%%%%%%%%%%
%\!\!\!\!\!\!\!\!\!\!\!\!\!\!\!\!\!\!\!\!\!\!\!\!&
=\mathbb{E}_{\Phi_1, T} \left[ \prod_{\mathbf{z}_k \in \Phi_1  \setminus \mathbf{z}_0}  { e^{ -s {  {g}_{k}^{} {\mathit{l} \left({\Vert \bold{z}_k \Vert}_2\right)} P_{k}                 }  }  } \right], i \ne j \nonumber\\
%%%%%%%%%%%%%%%%%%%%%%%%%%
%%%%%%%%%%%%%%%%%%%%%%%%%%
%%%%%%%%%%%%%%%%%%%%%%%%%%
%%%%%%%%%%%%%%%%%%%%%%%%%%
\!\!\!\!\!\!\!\!\!\!\!\!\!\!\!\!\!\!\!\!\!\!\!\!&=\mathbb{E}_{\Phi_1} \left[ \prod_{\mathbf{z}_k \in \Phi_1  \setminus \mathbf{z}_0} \mathcal{L}_{\mathcal{J}_{0}} \left(s,{\Vert \bold{z}_k \Vert}_2 \right)  \right]
%%%%%%%%%%%%%%%%%%%%%%%%%%
%%%%%%%%%%%%%%%%%%%%%%%%%%
=\exp \left( {-2\pi\lambda_1 \int_{R_c}^\infty {\left( 1- \mathcal{L}_{\mathcal{J}_{0}}\left(s,r\right) \right) r dr}} \right),
\end{align}
%##################################################
%
where the last equality follows from the $Laplace$ functional expression for PPP using polar coordinates and assuming the field of interferers outside a cell of fixed radius $R_c$; %\info{as discussed in Section \ref{secAnalyse}}; 
$ \mathcal{L}_{T}(s) $ is the Laplace transform of the composite shadowing and small scale fading channel, where $ T \buildrel d \over = \mathit{L}_{s} G$ with $G \sim \exp(1)$ as an exponential random variable, and $ \mathcal{L}_{\mathcal{J}_{0}}\left(s,{\Vert \bold{z}_k \Vert}_2\right) $ is expressed as %seen in Eq. (\ref{eq29}).
\begin{eqnarray}
\mathcal{L}_{\mathcal{J}_{0}} \left(s,{\Vert \bold{z}_k \Vert}_2 \right)
&\!\!\!\!=&\!\!\!\!    \mathcal{L}_{T}  \left( { s {   {\mathit{l}_{p}\left({\Vert \bold{z}_k \Vert}_2\right)} P_{k} }} \right)  \label{eq29a}
%%%%%########################################################
\end{eqnarray}
%
%%##################################################
%

Then, we obtain the results in Lemma \ref{Theorem2} based on the interference power $Laplace$ transform and evaluating the following formulas:
\begin{eqnarray}\label{Mformula}
\!\!\!\!\!\!\!\!\!\!\!\!\! \mathbb{E}\left[\mathit{{q}}_{0}\right]&\!\!\!=&\!\!\!\!\!-{\frac{\partial \mathcal{L}_{\mathit{{q}}_{0}}(s)}{\partial s}}\bigg|_{s=0}, \quad \quad \quad \!\! %\nonumber\\
%_______________________________________
%%%%%%%%%%%%%%%%%%%%%%%%%%%%%_______________________
%%%%%%%%%%%%%%%%%%%%%%%%%%%%%_______________________
%\!\!\!\!\!\!\!\!\!\!\!\!\! 
\text{var}\left[\mathit{{q}}_{0}\right]={\frac{\partial^2 \mathcal{L}_{\mathit{{q}}_{0}}(s)}{\partial s^2}}\bigg|_{s=0} \!\!\!\!\!\!- \left( \mathbb{E}\left[\mathit{{q}}_{0}\right] \right)^2.
\end{eqnarray}

The proof to Lemma \ref{offDlemma} follows the same procedure with the only difference in that we replace $\mathit{{q}}_{0}$ in Eqs. \eqref{App1} and \eqref{Mformula} with $\mathit{\tilde{q}}_{0}$, and the Laplace transform, $ \mathcal{L}_{T}(s) $, 
%of the composite random variables, $ T_L \buildrel d \over = L_L \tilde{H}_1 \tilde{H}_2$ and $ T_N \buildrel d \over = L_N \tilde{H}_1 \tilde{H}_2 $ 
in eq. \eqref{eq29a}
%-\eqref{eq29b}, \eqref{eq52a}-\eqref{eq52b}, and \eqref{eq52c}-\eqref{eq52d} 
with the Laplace transform $ \mathcal{L}_{T^*}(s) $ of the composite random variable $ T^* \buildrel d \over = \mathit{L}_s \tilde{H}_1 \tilde{H}_2 $ where $\tilde{H}_1, \tilde{H}_2 \sim \mathcal{CN}(0,1)$ are independent and identically distributed complex normal random variables.
\vspace{-.15in}
\section*{Appendix C: Proof of Lemma \ref{Qfunction}}\vspace{-.05in}
We start by writing the expression for the EM algorithm Q-function using Proposition \ref{Proposition1} as
 \begin{eqnarray}
 Q({\theta}|{\theta}^{(m)})%&=& \sum_{i=1}^n Q_i(\theta|\theta^{(m)}) \nonumber\\
 &=& \sum_{i=1}^n \mathbb{E}_{X_i|Y_i,\theta^{(m)}} \left[ \log f_{X_i} (x_i|\theta)\right] %\nonumber\\
 %&=& 
 =\sum_{i=1}^n \sum_{j=1}^2 p(Z_i=j|Y_i=y_i,\theta^{(m)}) \log f_{X_i} (x_i|\theta)\nonumber\\
 %&=& \sum_{i=1}^n \sum_{j=1}^2 \gamma_{ij}^{(m)} \left(\log w_j + \log \phi_j(y_i|\theta_j)\right)\nonumber\\
% \end{eqnarray}
% \begin{eqnarray}
 &=& \sum_{i=1}^n \sum_{j=1}^2 \gamma_{ij}^{(m)} \log w_j + \sum_{i=1}^n \gamma_{i1}^{(m)} \log \phi_1(y_i|\theta_1) + \sum_{i=1}^n \gamma_{i2}^{(m)} \log \phi_2(y_i|\theta_2). \label{The1eq}
 \end{eqnarray}
 Using the probability density functions of IG and IW distributions in Eqs. \eqref{f_IG} and \eqref{f_IW1} along with moment matching first, we can write $\log \phi_1(y_i|\lambda)$ and $\log \phi_2(y_i|c)$ as in Eqs. \eqref{logIG} and \eqref{logIW}, respectively. Then, substituting into Eq. \eqref{The1eq}, replacing $\sum_{i=1}^n \gamma_{ij}^{(m)}$ by $n_j^{(m)}$, %we get 
% \begin{eqnarray}\label{The1eq2}
% Q({{{\theta}}}|{{{\theta}}}^{(m)})&=&\sum_{j=1}^2 n_j^{(m)} \log w_j+ \frac{1}{2} n_1^{(m)} \log {{\lambda_{}}} - \frac{{{\lambda_{}}}}{2 \mu_Y^2} \sum_{i=1}^n \gamma_{i1}^{(m)} \frac{(y_i-\mu_Y)^2}{y_i} \nonumber\\
% &+&{{c_{}}} n_2^{(m)} \log \mu_Y -{{c_{}}} n_2^{(m)} \log \Gamma(1-{1}/{{{c_{}}}}) + n_2^{(m)} \log {{c_{}}} \nonumber\\
% &-& {{c_{}}} \sum_{i=1}^n \gamma_{i2}^{(m)} \log y_i - \left[ \frac{\mu_Y}{\Gamma(1-1/{{c_{}}})}\right]^{{{c_{}}}} \sum_{i=1}^n \gamma_{i2}^{(m)} y_i^{-{{c_{}}}} \nonumber\\
% &-& \frac{1}{2} \sum_{i=1}^n \gamma_{i1}^{(m)} \log 2\pi y_i^3 - \sum_{i=1}^n \gamma_{i2}^{(m)} \log y_i .
% \end{eqnarray}
 and dropping %the last two 
 constant terms %in Eq. \eqref{The1eq2} 
 without affecting the EM algorithm, we get the final expression in Eq. \eqref{The1eq0}. %We can drop those terms as they represent a constant that does not depend on ${{{\theta}}}$ and hence does not have an impact on the maximization problem in the M-step of the EM algorithm.
%%%%%%%%%%%%%%%%%%%%%%%%%%%%%%%%%%%%%%%%%%%%%%%%%%%%
\vspace{-.1in}
\section*{Appendix D: Proof of Theorem \ref{Theorem2b}}\vspace{-.05in}
We first solve for $w_j$ using the method of Lagrange multipliers as follows. %by forming the following Lagrangian 
\begin{eqnarray}
J(w,\nu)&=&\sum_{j=1}^2 n_{j}^{(m)} \log w_j +\nu \left( 1-\sum_{j=1}^2 w_{j} \right)\nonumber\\
\frac{\partial J}{\partial w_j}&=&\frac{n_j^{(m)}}{w_j}-\nu=0 \quad, j \in \{1,2\} \quad \Rightarrow \quad %\nonumber\\
\nu =\frac{n_j^{(m)}}{w_j} \label{Lagrange}.
\end{eqnarray}
Substituting into the constraint $\sum_{j=1}^2 w_j=1$, we get the weight $w_j$ update as
\begin{eqnarray}
w_j^{(m+1)}&=&\frac{n_{j}^{(m)}}{\sum_{j=1}^k n_{j}^{(m)}}%\nonumber\\
%&=&
=\frac{n_{j}^{(m)}}{\sum_{j=1}^k \sum_{i=1}^n \gamma_{ij}^{(m)}
} %\nonumber\\
%%
%%
%&=&\frac{n_{j}^{(m)}}{\sum_{i=1}^n \sum_{j=1}^k \gamma_{ij}^{(m)}
%}%\nonumber\\
%%&=&
=\frac{n_{j}^{(m)}}{\sum_{i=1}^n 1}=\frac{n_{j}^{(m)}}{n} \quad, j \in \{1,2\}.\label{Wj}
\end{eqnarray}

Then, we can find the update of $\lambda_{IW}$ by simply taking the partial derivative of the $Q$-function w.r.t. $\lambda_{IW}$ as \vspace{-.1in} %follows.
\begin{eqnarray}
\frac{\partial Q\left({{{\theta}}}|{{{\theta}}}^{(m)} \right)}{\partial {{\lambda_{}}}}&=& \frac{1}{2 {{\lambda_{}}}} n_1^{(m)} - \frac{1}{2\mu_Y^2} \sum_{i=1}^n \gamma_{i1}^{(m)} \frac{(y_i-\mu_Y)^2}{y_i}  =0. %\nonumber\\
%%
%\Rightarrow {{\lambda_{}}}^{(m+1)} &=& \frac{n_1^{(m)} \mu_Y^2}{\sum_{i=1}^n \gamma_{i1}^{(m)} \frac{(y_i-\mu_Y)^2}{y_i}}
%%
%%\mu_j^{(m+1)}&=&\frac{1}{n_j^{(m)}} \sum_{i=1}^n \gamma_{ij}^{(m)} y_i \label{Muj}
\end{eqnarray}
%where 
Solving for ${{\lambda_{}}}^{(m+1)}$ %and setting ${{{\theta}}}_1^{(m+1)}={{\lambda_{}}}^{(m+1)}$ 
results in Eq. \eqref{Muj}.

Similarly, taking the partial derivative of the $Q$-function w.r.t. ${{c_{}}}$ to get
\begin{eqnarray}\label{Theo2eq2}
\frac{\partial Q\left({{{\theta}}}|{{{\theta}}}^{(m)} \right)}{\partial {{c_{}}}}&=& n_2^{(m)} \left[ \log \frac{\mu_Y}{\Gamma(1-\frac{1}{{{c_{}}}})} + \frac{1}{{{c_{}}}} \left(1-\psi(1-\frac{1}{{{c_{}}}}) \right) \right] - \sum_{i=1}^n \gamma_{i2}^{(m)} \log y_i \nonumber\\
&+& \left[ \frac{\mu_Y}{\Gamma(1-\frac{1}{{{c_{}}}})} \right]^{{{c_{}}}} \sum_{i=1}^n \gamma_{i2}^{(m)} y_i^{-{{c_{}}}} \log y_i %\nonumber\\
%
%&-& 
-\frac{d}{d {{c_{}}}} \left( \left[ \frac{\mu_Y}{\Gamma(1-\frac{1}{{{c_{}}}})} \right]^{{{c_{}}}} \right) \sum_{i=1}^n \gamma_{i2}^{(m)} y_i^{-{{c_{}}}}. 
\end{eqnarray}
%where
%\begin{eqnarray} \label{The2eq1}
%\frac{d}{d {{c_{}}}} \left( \left[ \frac{\mu_Y}{\Gamma(1-\frac{1}{{{c_{}}}})} \right]^{{{c_{}}}} \right) %&=& \frac{d}{d {{c_{}}}} \exp \left\{ \log \left[ \frac{\mu_Y}{\Gamma(1-\frac{1}{{{c_{}}}})} \right]^{{{c_{}}}} \right\} \nonumber\\
%%
%%&=&  \left[ \frac{\mu_Y}{\Gamma(1-\frac{1}{{{c_{}}}})} \right]^{{{c_{}}}}  \frac{d}{d {{c_{}}}}  \left( {{c_{}}} \log \frac{\mu_Y}{\Gamma\left(1-\frac{1}{{{c_{}}}}\right)} \right)  \nonumber\\
%%
%&=&  \left[ \frac{\mu_Y}{\Gamma(1-\frac{1}{{{c_{}}}})} \right]^{{{c_{}}}} \left[ \log \frac{\mu_Y}{\Gamma \left( 1-\frac{1}{{{c_{}}}} \right) } -\frac{1}{{{c_{}}}} {\psi \left( 1-\frac{1}{{{c_{}}}} \right)} \right]
%\end{eqnarray}
%and
%\begin{eqnarray}
%\frac{d}{d {{c_{}}}} \Gamma \left(1-\frac{1}{{{c_{}}}}\right) = \frac{1}{{{c_{}}}^2} \Gamma\left(1-\frac{1}{{{c_{}}}}\right) \psi\left(1-\frac{1}{{{c_{}}}}\right). 
%\end{eqnarray}
%Substituting Eq. \eqref{The2eq1} into Eq. \eqref{Theo2eq2} and 
Equating to zero, we obtain the equation in \eqref{C_IW}.
%%%%%%%%%%%%%%%%%%%%%%%%%%%%%%%%%%%%%%%%%%%%%%%%%%%%
\vspace{-.15in}
\section*{Appendix E: Fitted Functional Coefficients for the Mixture MLE Model}\label{ApxTables}\vspace{-.05in}
The fitted coefficient values in Table \ref{Table1} %-\ref{Table3} 
are obtained for the functional fits in \eqref{FitEq} by comparing our Mixture model against simulation data. These coefficients are estimated for a maximum transmission power of $P_s=30$ dBm. To estimate the mixture distribution parameters at a different transmission power $P_{new}$ dBm, we only need to update the $b, \lambda$ and $\mu$ parameters by scaling each of them by the corresponding transmit power scaling factor, i.e. $b_{new}=10^{0.1(P_{new}-P_s)} \times b$. Parameter $c$ stay unchanged with respect to the transmission power.
%\newpage
%\info{Need to comment on how sensitives are these coefficient fittings.}
%\change{sensitive in terms of what? i.e. ratio of the norm of the difference of fitted curves to the norm of the curve values?}
%\info{ may remove these tables for eta = 2 and 3 because of space. 
%
%just comment on how the fits changes with eta and the sensitivity.
%
%we can provide these data online (LinkS website) as a supplement to the paper}
%\change{I tried to figure out a direct relation between the coefficients at different $\eta$ but it is not linear and I only have three points to deduce the relation, that is why I included the three tables.}
%%%%%%%%%%%%%%%%Tables%%%%%%%%%%%%%%%%%%%%%%%%%%
%
\begin{table}[h!]
%\resizebox{16.5cm}{!} {
{\setlength{\extrarowheight}{0.1cm}
\parbox{.5\linewidth}{
\centering
  \begin{tabular}{ | x{0.5cm} || x{1cm} | x{1cm} | x{1cm} | x{1cm} | }
    \hline \hline
    \diag{.09em}{.8cm}{$\text{  }\alpha$}{$\quad\quad \text{   coeff.}$} & $a_3$ & $a_2$ & $a_1$ & $a_0$  \\ \hline \hline
    2 & -111.99   & -589.09  	& -1021.1 & -873.55  \\ \hline
%    2.5 & -134.24 & -583.68      & -921.84      & -755.98  \\ \hline
    3 & -41.663   & -214.46      & -443.27      & -433.55  \\ \hline
    3.5 & 13.4287 & -43.972      & -230.91      & -243.78  \\ \hline
    4 & 16.8309   & -17.117      & -137.71      & -149.15  \\ \hline
%    4.5 & 7.7233  & -10.853      & -72.912      & -114.58  \\ \hline
    5 & 5.7399    & -1.4256      & -40.481      & -122.15  \\ 
    \hline \hline
  \end{tabular}\vspace{-.1in}
  \subcaption{IG weight parameter $w_{1}(\times 10^3)$.}}\vspace{-.2in}
  \hfill 
  \parbox{.6\linewidth}{
\centering
  \begin{tabular}{ | x{0.5cm} || x{1cm} | x{1cm} | x{1cm} | x{1cm} | }
    \hline \hline
    \diag{.09em}{.8cm}{$\text{  }\alpha$}{$\quad\quad \text{   coeff.}$} & $a_3$ & $a_2$ & $a_1$ & $a_0$  \\ \hline \hline
    2 	& 7.5381  	& -23.171 		& -133.81 		& 685.25  \\ \hline
%    2.5 	& 5.1273 	& -17.738 		& -96.617 		& 462.19  \\ \hline
    3 	& 5.9903 	& -11.084 		& -81.932 		& 314.72  \\ \hline
    3.5 	& 7.0852 	& 8.8851 		& -105.21 		& 239.38  \\ \hline
    4 	& 15.616 	& 15.137 		& -142 			& 223.73  \\ \hline
%    4.5 & 19.842  	& 6.7449 		& -148.93 		& 228.36  \\ \hline
    5 	& 11.126 	& -6.9276 		& -121.18 		& 224.88  \\ 
    \hline \hline
  \end{tabular}\vspace{-.1in}
  \subcaption{IW shape parameter ${{c_{}}}(\times 10^3)$.}}\vspace{-.03in}\\
  %
  %
  %
  %\hfill
  \parbox{.5\linewidth}{
\centering
  \begin{tabular}{ | x{0.5cm} || x{1cm} | x{1cm} | x{1cm} | x{1cm} | }
    \hline \hline
    \diag{.09em}{.8cm}{$\text{  }\alpha$}{$\quad\quad \text{   coeff.}$} & $a_3$ & $a_2$ & $a_1$ & $a_0$  \\ \hline \hline
    2 	& -0.6594  	& 12.124 	& 17.277 	& 1134.1  \\ \hline
%    2.5 	& -1.8131 	& 32.822 	& 133.99 	& -740.9  \\ \hline
    3 	& 1.258  	& 47.252 	& 128.5  	& -2502.3  \\ \hline
    3.5 	& 1.604 		& 56.81 		& 117.1 		& -4102  \\ \hline
    4 	& 8.84 		& 56.04 		& 78.43 		& -5577  \\ \hline
%    4.5 	& 8.448 		& 45.77 		& 67.68 		& -6971  \\ \hline
    5 	& 13.99 		& 49.3 		& 65.3 		& -8338  \\ 
    \hline \hline
  \end{tabular}\vspace{-.1in}
  \subcaption{IG shape parameter ${{\lambda_{}}}(\times 10^3)$.}}\vspace{-.2in}
  \hfill
  \parbox{.6\linewidth}{
\centering
  \begin{tabular}{ | x{0.5cm} || x{1cm} | x{1cm} | x{1cm} | x{1cm} | }
    \hline \hline
    \diag{.09em}{.8cm}{$\text{  }\alpha$}{$\quad\quad \text{   coeff.}$} & $a_3$ & $a_2$ & $a_1$ & $a_0$  \\ \hline \hline
    2 	& 0.05993  	& 19.152 	& 44.478 	& -351.06  \\ \hline
%    2.5 	& 0.68112 	& 105.74  	&  312.31 	& -1595.6  \\ \hline
    3 	& 0.25864  	& 105.43  	& 313.42 	& -2929.1  \\ \hline
    3.5 	& 0.82885	& 105.83 	& 311.6 		& -4186.3  \\ \hline
    4 	& 0.96339 	& 105.86 	& 311.24 	& -5398.9  \\ \hline
%    4.5 	& 1.0804 	& 105.9 		& 310.94 	& -6584.4  \\ \hline
    5 	& 1.1855 	& 105.94 	& 310.68 	& -7752.1  \\ 
    \hline \hline
  \end{tabular}\vspace{-.1in}
  \subcaption{IG scale parameter ${{\mu_{}}}(\times 10^3)$.}}}
  \vspace{-.15in} \vspace{-.1in}
  \caption{Fitting coefficients for the mixture model parameters $w_1$, ${{b_{}}}$, ${{\lambda_{}}}$, and ${{\mu_{}}}$ at $\eta=1$.}\label{Table1} \vspace{-.3in}\vspace{-.3in}
\end{table}%}
%\newpage
%%%%%%%%%%%%%%%%%%%%%%%%%%%%%%%%%%%%%%%%%%%%%%%%%%%%
%%%%%%%%%%%%%%%%%%%%%%%%%%%%%%%%%%%%%%%%%%%%%%%%
%_______________________________________
% conference papers do not normally have an appendix
% use section* for acknowledgement
%\section*{Acknowledgment}

% trigger a \newpage just before the given reference
% number - used to balance the columns on the last page
% adjust value as needed - may need to be readjusted if
% the document is modified later
%\IEEEtriggeratref{8}
% The "triggered" command can be changed if desired:
%\IEEEtriggercmd{\enlargethispage{-5in}}

% references section
% can use a bibliography generated by BibTeX as a .bbl file
% BibTeX documentation can be easily obtained at:
% http://www.ctan.org/tex-archive/biblio/bibtex/contrib/doc/
% The IEEEtran BibTeX style support page is at:
% http://www.michaelshell.org/tex/ieeetran/bibtex/

\bibliographystyle{IEEEtran}
% argument is your BibTeX string definitions and bibliography database(s)
%\bibliography{IEEEabrv,../bib/paper}
%
% <OR> manually copy in the resultant .bbl file
% set second argument of \begin to the number of references
% (used to reserve space for the reference number labels box)

\bibliography{reflist}

\end{document}